\newtheorem{lemma}{\textsc{Lemma}}[section]
\newtheorem{proposition}{\textsc{Proposition}}[section]
\newtheorem{assumption}{\textsc{Assumption}}[section]
\newtheorem{theorem}{\textsc{Theorem}}[section]
\newtheorem{definition}{Definition}[section]
\newtheorem{example}{Example}[section]
\renewcommand{\baselinestretch}{1.3}
\renewcommand{\theequation}{\arabic{section}.\arabic{equation}}
\renewcommand{\thetable}{\arabic{section}.\arabic{table}}
\renewcommand{\thefigure}{\arabic{section}.\arabic{figure}}
\title{Optimal Taxation with Endogenous Default under Incomplete Markets}
\author{Demian Pouzo and Ignacio Presno \thanks{Pouzo: UC Berkeley, Dept. of Economics, 530-1 Evans \# 3880, Berkeley CA
		94720, Email: dpouzo@econ.berkeley.edu; Presno: Universidad de Montevideo, Dept. of Economics, 2544 Prudencio de Pena St., Montevideo, Uruguay 11600. Email: jipresno@um.edu.uy. %A previous version of this paper was a chapter of the PhD thesis of Demian Pouzo who is deeply grateful to his thesis committee: Xiaohong Chen, Ricardo Lagos and Tom Sargent for their constant encouragement, thoughtful advice and insightful discussions.
		We are deeply indebted to Xiaohong Chen, Ricardo Lagos and Tom Sargent for their thoughtful advice and insightful discussions. We are also grateful to Arpad Abraham, Mark Aguiar, David Ahn, Andy Atkeson, Marco Basetto, Hal Cole, Jonathan Halket,
		Greg Kaplan, Juanpa Nicolini, Anna Orlik, Nicola Pavoni, Andres
		Rodriguez-Clare, Ana Maria Santacreu, Ennio Stacchetti, and
		especially to Ignacio Esponda and Constantino Hevia. We thank for comments and suggestions to seminar participants at Berkeley, CEF 2013, Central Bank of Chile, CREI-UPF, Federal Reserve Board of Governors, FRB of Chicago, Cleveland, Philadelphia and NY, LACEA-LAMES 2014, LBS, NYU, North American Summer Meeting 2014, Rochester, SED 2013, U of Chicago, UC Davis, UCL, UG, UIUC and UPenn. We are grateful to Nan Lu and Sandra Spirovska for excellent research
		assistance. Usual disclaimer applies.}
	%\date{This Version: \today.}
}
\begin{document}
	
%\date{\today}
\maketitle

\begin{abstract}
%In a dynamic economy, we characterize the fiscal policy of the government when it levies distortionary taxes and issues defaultable bonds to finance its stochastic expenditure. Households anticipate the possibility
%of default, generating endogenous debt limits that
%hinder the government's ability to smooth spending shocks using debt. Default is followed by temporary financial autarky. The government can only exit this state by repaying a fraction of the defaulted debt. Since this payment may not occur immediately, in the
%meantime, households trade the defaulted debt in secondary
%markets; this device allows us to price the government debt before and during the default. Our model matches various qualitative features observed in the data
% for emerging economies.

In a dynamic economy, we characterize the fiscal policy of the government when it levies distortionary taxes and issues defaultable bonds to finance its stochastic expenditure. Default may occur in equilibrium as it prevents the government from incurring in future tax distortions that would come along with the service of the debt. Households anticipate the possibility of default generating endogenous credit limits. These limits hinder the government's ability to smooth taxes using debt, implying more volatile and less serially correlated fiscal policies, higher borrowing costs and lower levels of indebtedness. 
%Also, the near-random walk behavior of debt and taxes with risk-free debt under incomplete markets is altered once default risk is incorporated. 
In order to exit temporary financial autarky following a default event, the government has to repay a random fraction of the defaulted debt. We show that the optimal fiscal and renegotiation policies have implications aligned with the data.

%We show %Since this payment may not occur immediately, in the meantime, households trade the defaulted debt in secondary markets; this device allows us to price the government debt before and during the default.
%Our model matches various qualitative features observed in the data for emerging economies.

JEL: H3, H21, H63, D52, C60.

Keywords: Optimal Taxation, Government Debt, Incomplete Markets, Default, Secondary Markets.
\end{abstract}

\section{Introduction}

As originally indicated by \cite{BARRO_JPE79}, tax smoothing motives play a fundamental role in the design of optimal fiscal policies. The possibility to smooth taxes, however, relies significantly on the market structure for government debt. Relative to the seminal results by \cite{LS_JME83} for complete markets economies, \cite{AMSS_JPE02} shows that taxes typically display higher variability and lower serial correlation under incomplete markets with sufficiently stringent debt and asset limits. In this latter economy, the government is assumed to have access to one-period \emph{risk-free} bonds. But how are optimal tax and debt policies affected if the government is given the option to default and restructure its debt, as we have observed several times throughout history? In this paper we answer this question. We show how the presence of default risk and the actual default contingency gives rise to endogenous credit limits that hinder the government's ability to smooth shocks using debt. As a result, taxes become even more volatile, and less serially correlated than in the benchmark incomplete market framework.

We analyze the dynamic taxation problem of a benevolent government with access to distortionary labor
taxes and non-state-contingent debt in a closed economy. We assume, however, that the government cannot commit to pay back the debt. In case the government defaults, the economy enters temporary financial autarky wherein it faces
exogenous random offers to repay a fraction of the defaulted debt that
arrive at a given rate.\footnote{While in our model we allow only
  for outright default on government bonds, governments in practice could
  liquidate the real value of the debt and repayments through
  inflation risk, which could be viewed as a form of partial
  default. In several economies, however, this second option may not
  available, either because the country has surrendered the control
  over its monetary policy (for example, as in the eurozone, Ecuador,
  and Panama), or a significant portion of the government debt is
  either foreign-currency denominated, or local-currency denominated
  but indexed to the CPI or a similar index. We see our environment
  particularly appropriate for this class of economies.} The government has the
option to accept the offer --- and thereby exit financial autarky --- or to
stay in financial autarky awaiting new offers. During temporary
financial autarky, the defaulted debt still has some value as a fraction of it will be eventually repaid in the future. Hence, households can trade the defaulted debt in a
secondary market from which the government is excluded, giving rise to an equilibrium price of the debt during the
period of default. Finally, in line with the aforementioned optimal taxation literature, we assume that the government commits itself to its optimal path of taxes as long as the economy is not in financial autarky.

%cambiar de lugar este footnote
%\footnote{In this model, financial autarky is understood as the period during which the government is precluded from issuing new debt/savings. Also, throughout this paper, we will also refer to the restructuring period as the financial autarky.}

The government has three policy instruments: (1)
distortionary taxes, (2) government debt, and (3) default/repayment
decisions that consist of: (a) whether to default on the
outstanding debt and (b) whether to accept the offer to exit
temporary financial autarky. In order to finance the stochastic process of expenditures, the government faces a trade-off between levying distortionary
taxes and not defaulting, or issuing debt and thereby increasing the exposure to
default risk. Defaulting introduces some degree of
state contingency on the payoff of the debt since the financial
instrument available to the government becomes an option, rather
than a non-state-contingent bond. In equilibrium, the government may optimally
decide not to honor its debt contracts ---even though the bondholders
are the households whose welfare it cares about--- because default would prevent the government from incurring in the future tax distortions that would come along with the service of the debt. We believe this is a novel motive to default on government debt which, to our knowledge, had not been explored before in the literature.

The option to default, however, does not come free of charge: in equilibrium households anticipate the possibility of default, demanding a compensation for it embedded in the pricing of the bond; this originates a ``Laffer curve" type of pattern for the bond proceedings, thereby implying endogenous credit limits. Two key implications follow: First, in this sense, our model generates ``debt intolerance" endogenously, in contrast with \cite{AMSS_JPE02} wherein credit limits are exogenous. Second, in our framework, these credit limits together with higher borrowing costs prevent the government from completely spreading the tax burden over time, as in the risk-free debt economy.

Our theoretical model is motivated by some observations of tax and debt dynamics along with episodes of domestic defaults and debt restructuring throughout the history for a number of economies. To name a few: (1) the government policies and fiscal accounts for France in the 100 years preceding the French Revolution of 1789, wherein France defaulted recurrently and no tax-smoothing features stand out from its debt and tax dynamics; see \cite{SV_JPE95}.\footnote{In contrast, time-series for Great Britain debt broadly resemble Barro's random walk behavior. Great Britain honored all its debt contracts through this period.} (2) The debt restructuring plan proposed by Secretary of the Treasury Alexander Hamilton for the U.S. economy in 1790 and the policy debates around it.\footnote{As \cite{HALLSARGENT_WP13} states, through renegotiating the U.S. debt with no discrimination scheme across creditors, Hamilton was hoping to improve the federal government access to credit markets, which in turn would eventually allow for lower borrowing costs to finance temporary increases in government spending and thereby smooth out taxes. Few years later, in 1807, in his report to Congress, Secretary of Treasury Albert Gallatin was advocating for a fiscal policy largely in accord with Barro's tax smoothing idea.} (3) Finally, several emerging economies, where default is a recurrent event, taxes are more volatile, borrowing costs are sizable, and indebtedness levels are significantly lower than for developed economies.\footnote{Several domestic defaults for emerging economies were also external. In any case, empirical evidence seems to suggest that government default has a significant direct impact on domestic residents, either because a considerable portion of the foreign debt is in the hands of local investors, or because the government also defaults on domestic debt. For example, for Argentina's default in 2001, about 60 percent of the defaulted debt is estimated to have been in the hands of Argentinean residents; local pension funds alone held almost 20 percent of the total defaulted debt. For Russia's default in 1998 about 60 percent of the debt was held by residents. For Ukraine's default in 1997-98, residents --- Ukrainian banks and the National Bank of Ukraine (NBU) among others --- held almost 50 percent of the outstanding stock of T-bills. See \cite{SZ_book06}.}

%See \cite{RR_WP08} for a discussion and stylized facts on domestic debt defaults.

In a benchmark case, with quasi-linear utility and i.i.d. government expenditure, we characterize analytically the
determinants of the optimal default decision and its effects on
the optimal taxes, debt and allocations. In particular, we first show that default is more likely when the government expenditure or debt
is higher, and that the government is more likely to accept any given
offer to pay a fraction of the defaulted debt when the level of
defaulted debt is lower; these theoretical results have implications for haircuts and duration of debt restructuring processes aligned with the data. Second, we find that prices --- both outside
and during financial autarky --- are non-increasing on the level of
debt, thus implying that spreads are non-decreasing and also implying
the existence of endogenous borrowing limits.
%Third, we demonstrate that our model is isomorphic to the endowment economy with sovereign default of \cite{ARELLANO_AER08} and \cite{AG_JIE06}, but with a non-standard per-period payoff exhibiting a satiation point; section \ref{sec:govt} elaborates on this point.
Third, we prove that the law of  motion of the optimal
government tax policy departs from the standard martingale-type behavior found in the standard incomplete market framework. \cite{BARRO_JPE79} conjectured that optimal debt and taxes should exhibit a random walk behavior. This result was re-affirmed by \cite{AMSS_JPE02} in a general equilibrium setup under some restrictions on asset limits.
%\cite{AMSS_JPE02} reaffirmed Barro's result by showing that the Ramsey plan prescribes a near-random walk component in taxes, regardless of the stochastic process of government spending, if government asset limits are sufficiently stringent.
In our paper we show how this result is altered once default risk is incorporated. More specifically, the law of motion of the optimal government tax policy will be affected, on the one hand, by the benefit
from having more state-contingency on the payoff of the bond, but, on the other hand, by the cost of having the option to default (manifested in higher borrowing costs). %\footnote{See also \cite{FARHI_WP07} for an extension of \cite{AMSS_JPE02} results to an economy with capital.}

%\cite{BARRO_JPE79} conjectured that optimal debt and taxes should exhibit a random walk behavior by laying out an analogy to the permanent income model of consumption. In a general equilibrium setup, \cite{AMSS_JPE02} reaffirms Barro's result by showing that the Ramsey plan prescribes a near-random walk component in taxes, regardless of the stochastic process of government spending, if government asset limits are sufficiently stringent. In this case, the dynamics of taxes and debt are typically affected by government borrowing limits. Therefore, understanding better where the borrowing limits come from is key. While \cite{AMSS_JPE02} assumes ad hoc borrowing limits, in our paper time-varying debt limits emerge as an endogenous outcome as we incorporate endogenous government default risk.

% Finally, since the ``cost'' of exercising the option to
% default is in terms of allocations (i.e., autarky forever), I
% study how the option to default affects the allocations
% implementable to the planner with respect to an economy with risk-free debt.
% In particular, we show that, for positive initial
% debt, none of the allocations implementable to the planner in a
% risk-free debt economy can be implemented in ours.

Finally, we conduct a series of numerical exercises to assess the quantitative performance of our model along the aforementioned dimensions. %In particular, our simulations show that our model can generate volatile taxes, "debt intolerance" and pricing patterns that are aligned with what we observe in the data.
%In terms of welfare, when there are ad hoc output costs of default, the numerical simulations suggest a monotonic relationship between household utility and the probability of receiving an offer of partial payments. In particular, increasing the probability of receiving offers for exiting autarky decreases welfare.

\medskip

\textbf{Related Literature.}
%A growing literature has emerged from the seminal work of \cite{BARRO_JPE79} highlighting the role of tax-smoothing motives in the design of optimal fiscal and debt policy. In a partial equilibrium deterministic framework, \cite{BARRO_JPE79} assumes that the government needs to finance an exogenous sequence of public spending either by levying distortionary taxes or issuing non-state-contingent debt. To model distortionary taxes, a convex deadweight cost of output is adopted. \cite{BARRO_JPE79} shows that the government wants to smooth tax distortions across periods by recurring to debt issuance to finance temporary increases in public spending. In a stochastic environment, the model predicts a random walk response of debt and taxes to public spending. %, broadly in accord with the dynamics of public debt issuance for US data since World War I.
A growing literature has emerged from the seminal work of \cite{BARRO_JPE79} highlighting the role of tax-smoothing motives in the design of optimal fiscal and debt policy. In a partial equilibrium deterministic framework, \cite{BARRO_JPE79} assumes that the government needs to finance an exogenous sequence of public spending either by levying distortionary taxes or issuing non-state-contingent debt. Barro shows that the government wants to smooth tax distortions across periods by recurring to debt issuance to finance temporary increases in public spending. In a stochastic environment, the model predicts a random walk response of debt and taxes to public spending. %, broadly in accord with the dynamics of public debt issuance for US data since World War I.

\cite{LS_JME83} shows that this result %about the random walk behavior of taxes
does not survive in an environment with complete markets. In particular, in a general equilibrium setup where a Ramsey planner disposes of distortionary labor taxes and a complete set of Arrow-Debreu securities, \cite{LS_JME83} proves that optimal taxes (and debt) do not follow a random walk process but roughly inherit the stochastic properties from the government spending dynamics.

By extending \cite{LS_JME83} framework to incomplete markets, \cite{AMSS_JPE02} revitalizes \cite{BARRO_JPE79} and show that the Ramsey plan prescribes a near-random walk component into debt and taxes under certain conditions for asset/debt limits. Our model builds on \cite{AMSS_JPE02} by adding two key ingredients. First, we give the government the option to default on its debt, thus endogenizing the ad hoc government credit limits imposed in \cite{AMSS_JPE02}, as well as the return on government bonds. Second, in our model the government is confronted with an exogenous debt restructuring process following a default event.

\cite{FARHI_WP07} extends the setting of \cite{AMSS_JPE02} by introducing capital accumulation and letting the government to levy capital taxes in addition to labor ones. \cite{SHIN_WP08} studies the Ramsey fiscal policy in an environment akin to \cite{AMSS_JPE02} but with heterogeneous households facing idiosyncratic labor risks.\footnote{\cite{ANGELETOS} and \cite{BN_JME04} study the optimal maturity structure of government debt and show how non-contingent bonds of different maturities can be used to implement the allocations with state-contingent debt.} We see these papers as complementary to ours. A recent paper by \cite{BEGS_WP13} builds on \cite{AMSS_JPE02} by allowing the government to trade a single possibly risky asset. While in \cite{BEGS_WP13} the asset payoff follows an ad hoc exogenous process, in our model it is driven by the government optimal default decisions.

Our work also contributes to the literature on quantitative default models. We model the strategic default decision
of the government as in \cite{ARELLANO_AER08} and \cite{AG_JIE06}, who first adapted the theoretical framework of \cite{EG_RESTUD81} to study sovereign default risk and its interaction with the business cycles in emerging economies.\footnote{\cite{CHATTEE-EIY-AER12} extends this setup by incorporating long-term debt.} From this strand of literature, our paper is closely related to \cite{DODA_WP07} and \cite{CSS_WP09}. Both papers analyze the procyclicality of fiscal policy in developing countries by solving an optimal taxation problem of a government with distortionary labor taxes and incomplete financial markets.\footnote{\cite{AAG_WP08} also allow for default in a small open economy with capital where households do not have access to neither financial markets nor capital and provide labor inelastically. The authors' main focus is on capital taxation and the debt ``overhang'' effect.} Their models, however, differ from ours along several important dimensions. First, they consider a small open economy with foreign lenders, while we assume a closed economy.\footnote{From our viewpoint, little attention has been put on quantitative models with domestic default. A notable exception is \cite{DERASMO_MENDOZA}, where, in an economy with wealth inequality across domestic agents, redistributional motives influence the incentives to default. \cite{DGS_WP15} studies optimal policies for taxes, transfers, and both domestic and external debt in an open economy with competition between political parties.} In our economy, bondholders are the domestic households whose welfare our benevolent government wants to maximize, so tax-smoothing concerns are the dominant determinant in the decision to default or not. Second, we assume a debt restructuring process while in \cite{DODA_WP07} and \cite{CSS_WP09} the government exits autarky exempt of any repayment of the defaulted debt. Third, these papers assume that the government cannot commit to fiscal policies, while we do, in line with \cite{AMSS_JPE02}.
%Additionally, in \cite{DODA_WP07} and \cite{CSS_WP09} the public spending is a control variable for the government and enters the utility of the household, while the source of uncertainty is a productivity shock. In contrast, we assume an exogenous stochastic process for government spending, as in \cite{LS_JME83} and \cite{AMSS_JPE02} economies.\footnote{An additional source of uncertainty for us are the stochastic factors embedded in the debt restructuring process.}
Finally, we address a different class of question. Instead of exploring fiscal regularities in emerging economies, our work is rather centered on the normative analysis of optimal taxation in the context of government default and provides an analytical characterization of optimal fiscal and debt policies.

\cite{BENJAMIN_WRIGHT_WP09}, \cite{PW_WP08}, \cite{YUE_WP07} and \cite{BAI_ZHANG_JIE12} propose alternative ways of modeling the entire
  debt restructuring process. Although our mechanism to reach a debt settlement is not fully endogenous as theirs, it is sufficiently rich to replicate key features of debt renegotiation episodes in the data.

We assume that the government has the ability to commit to a tax policy at any time, except in the periods of debt renegotiation when it regains access to financial markets, in which case it can revise and reset its fiscal policy. This assumption is to some extent similar to \cite{DEBORTOLI_NUNES_WP08}. \cite{DEBORTOLI_NUNES_WP08} studies the dynamics of debt in a setting similar to \cite{LS_JME83} but with the peculiarity that at each time $t$, with some given probability, the government can lose its ability to commit to taxes and re-optimize; a feature labeled by the authors as ``loose commitment.'' Thus, our model can be viewed as providing a mechanism that ``rationalizes'' this probability of ``loosing commitment'' by allowing for endogenous default, and resetting of fiscal policy when a debt settlement is reached. %It is worth noting that in their model the budget constraint during the no-commitment stage remains essentially the same, whereas ours does not.

\medskip

\textbf{Roadmap.} The paper is organized as follows. Section \ref{sec:model} introduces the model. Section
\ref{sec:eqm} presents the competitive equilibrium. Section
\ref{sec:govt} presents the government's problem. Section
\ref{sec:bench} derives analytical results. Section
\ref{sec:num-simul} contains some numerical exercises. Section \ref{sec:conclusion} briefly concludes.
All proofs are gathered in the appendices.

%Finally, in recent independent papers, \cite{DODA_WP07} and \cite{CSS_WP09} study the procyclicality of fiscal policy in developing countries by solving an optimal fiscal-policy problem. Their work differs from ours in two main aspects. They assume first an open small economy (i.e., foreign lenders) and, second, no secondary markets.

\section{The Economy}
\label{sec:model}

In this section we describe the stochastic structure of the model, the timing and policies of the government and present the household's problem.

\subsection{The Setting}

Let time be indexed as $t=0,1,...$. Let $(g_{t},\delta_{t})$ be the vector of government expenditure at time $t$ and the fraction of the defaulted debt which is to be repaid when exiting autarky, respectively. If the economy is not in financial autarky, $\delta_{t}$ is either one or zero in order to model the option of the government to repay the totality of the debt or to default. These are the exogenous driving random variables of this economy. Let $\omega_{t} \equiv (g_{t},\delta_{t}) \in  \mathbb{G} \times \bar{\Delta}$, where $\mathbb{G} \subset \mathbb{R}$, $\bar{\Delta} \equiv \Delta \cup \{1\} \cup \{ \bar{\delta} \}$ and $\Delta \subset [0,1)$, and in order to avoid technical difficulties, we assume cardinal of $\mathbb{G}$ and $\Delta$ are finite. The set $\Delta$ models the offers --- as fractions of outstanding debt --- to repay the defaulted debt, and $\bar{\delta}$ is designed to capture situations where the government does not receive any offer to repay. For any $t \in \{1,....,\infty\}$, let $\Omega^{t} = (\mathbb{G} \times \bar{\Delta})^{t}$ be the space of histories of exogenous shocks up to time $t$; a typical element is $\omega^{t} = (\omega_{0},\omega_{1},...,\omega_{t})$.

\subsection{The Government Policies and Timing}

In this economy, the government finances exogenous government expenditures by levying labor distortionary taxes and trading one-period, discount bonds with households. The government, however, cannot commit to repay and may default on the bonds at any point in time.

Let $\mathbb{B} \subseteq \mathbb{R}$ be compact. Let $B_{t+1} \in \mathbb{B}$ be the quantity of bonds issued at time $t$ to be paid at time $t+1$ so that $B_{t+1}>0$ means that the government is borrowing at time $t$ from households. Let $\tau_{t}$ be the linear labor tax. Also, let $d_{t}$ be the default decision, which takes value 1 if the government decides to default and 0 otherwise. Finally, let $a_{t}$ be the decision of accepting an offer to repay the defaulted debt. It takes value 1 if the offer is accepted and 0 otherwise.

For any $t$, let $\phi_{t}$ be the variable that takes value 0
if at time $t$ the government cannot issue bonds during this period,
and value 1 if it can. The implied law of motion for $\phi_{t}$ is
$\phi_{t} \equiv \phi_{t-1} (1-d_{t}) + (1-\phi_{t-1}) a_{t} $. That means that if at time $t-1$, the government could issue bonds, then $\phi_{t}
= (1-d_{t})$, but if it was in financial autarky, then $\phi_{t} =
a_{t}$, reflecting the fact that the government regains access to
financial markets only if the government decides to renegotiate the
defaulted debt.

The timing for the government is as follows. Following a period with financial access, after observing the current government expenditure, the government has the option to default
on the totality of the outstanding debt carried from last period, $B_{t}$.

\begin{figure}
  \centering
  \includegraphics[height=3.0in,width=5.5in]{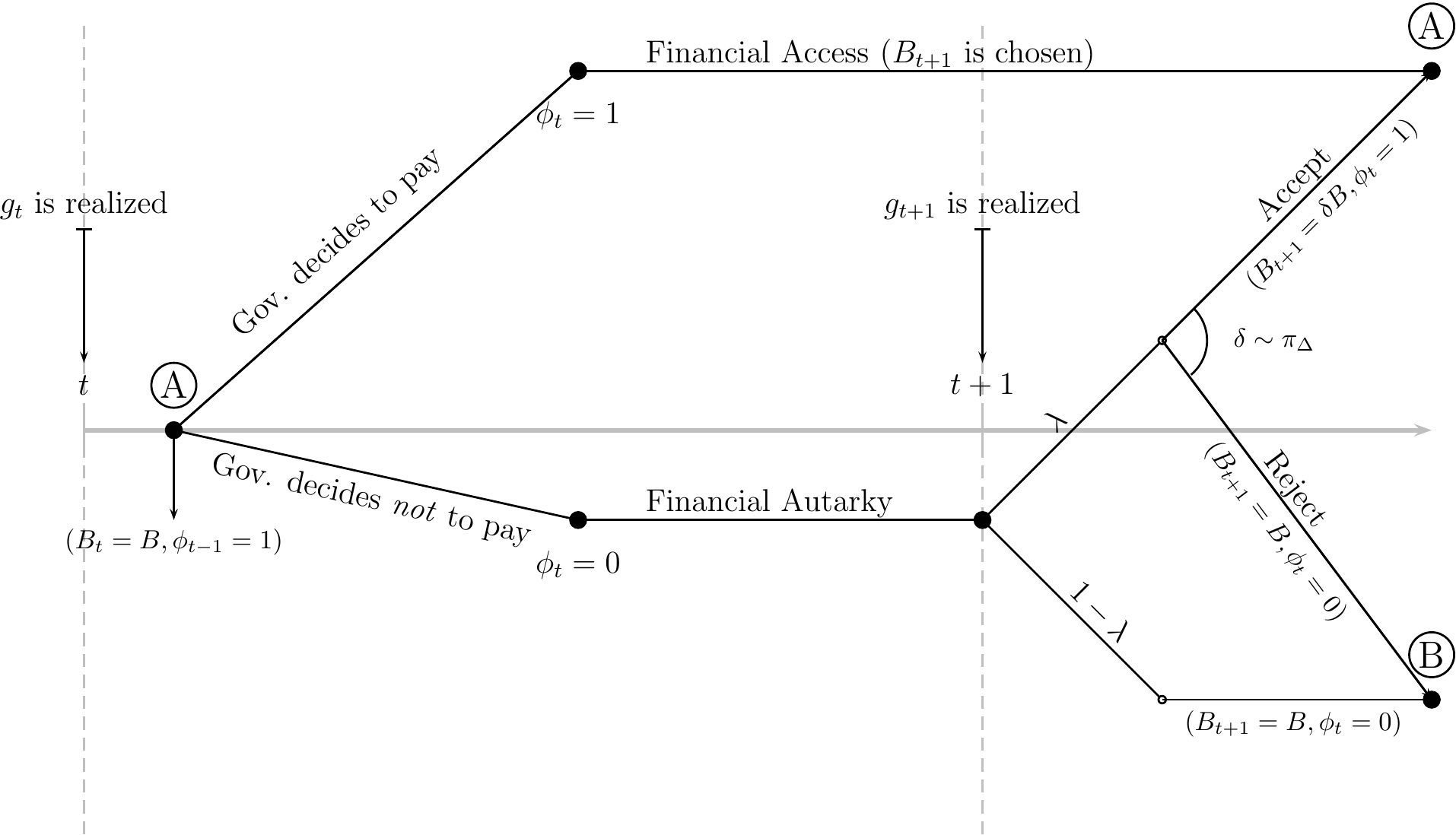}
  \caption{Timing of the Model}
  \label{fig:timing}
\end{figure}

As shown in figure \ref{fig:timing}, if the government exercises the option to default at time $t$, it cannot issue bonds in that period and runs a balanced budget, i.e., tax revenues equal government expenditure. At the beginning of next period, time $t+1$, with probability $1-\lambda$, the government remains in temporary
financial autarky for that period (node $B$). With probability $\lambda$, the government receives a random offer to repay a fraction $\delta$ of the debt, and has the option to accept or reject it. If the government
accepts the offer, it pays the restructured amount (the outstanding defaulted debt times the
fraction $\delta$), and can issue new bonds for
the following period (node $A$). If the government rejects the offer, it stays in
temporary autarky (node $B$).

Finally, if the government decides not to default, it levies distortionary labor
taxes, and allocates discount bonds to the households to cover the expenses $g_{t}$ and liabilities carried from last period. Next period, it has again the option to default for the new values of outstanding debt and government expenditure (node $A$).

As it will become clear later, default on bonds can be seen as a
negative lump-sum transfer to households, but a costly one. Default
will turn to be costly for two reasons. First, households anticipate
the government default strategies and demand higher returns to buy
the bond. Second, default is followed by temporary
financial autarky. During autarky, the government is not only unable
to issue debt but also could be subject to an ad hoc output
cost, as shown later.%\footnote{This cost is embedded in $\kappa$, defined below.}

We now formalize the probability model. Let $\pi_{\mathbb{G}} : \mathbb{G} \rightarrow \mathcal{P}(\mathbb{G})$ be the Markov transition probability function for the process of government expenditures and let $\pi_{\Delta} \in \mathcal{P}(\Delta)$ be the probability measure over the offer space $\Delta$.\footnote{For a finite set $\mathbb{X}$, $\mathcal{P}(\mathbb{X})$ is the space of all probability measures defined over $\mathbb{X}$. Also, for any $A \subseteq \mathbb{X}$, the function $\mathbf{1}_{A}(\cdot)$ takes value 1 over the set $A$ and 0 otherwise.}

\begin{assumption}%[{ass:prob\_iid}]
\label{ass:prob_def} \label{ass:prob_iid}
 For any $(t,\omega^{t})$, $\Pr( g_{t} = g | \omega^{t-1} ) =
 \pi_{\mathbb{G}}( g | g_{t-1})$ for any $g \in \mathbb{G}$ and
 \begin{align*}
   \Pr
 (\delta_{t} = \delta | g_{t}, \omega^{t-1}) = \left\{
     \begin{array}{ cc}
       \mathbf{1}_{\{ 1 \}}(\delta) &if~\phi_{t-1} = 1 \\
       (1-\lambda ) \mathbf{1}_{\{ \bar{\delta} \}} (\delta) +
       \lambda  \pi_{\Delta}(\delta) &if ~\phi_{t-1}=0
     \end{array}
\right.
 \end{align*}
%  \begin{align*}
%    \Pr
%  (\delta_{t} = \delta | g_{t}, \omega^{t-1}) = \left\{
%      \begin{array}{ cc}
%        \mathbf{1}_{\{ 1 \}}(\delta) &if~\phi_{t-1} = 1~and~d_{t}=0 \\
%        \mathbf{1}_{\{  \bar{\delta} \}}(\delta) &if~\phi_{t-1} = 1~and~d_{t}=1 \\
%        (1-\lambda ) \mathbf{1}_{\{ \bar{\delta} \}} (\delta) +
%        \lambda  \pi_{\Delta}(\delta) &if ~\phi_{t-1}=0
%      \end{array}
% \right.
%  \end{align*}
for any $\delta \in \bar{\Delta}$.\footnote{It is easy to generalize this to a more general
   formulation such as $\lambda$ and $\pi_{\Delta}$ depending on $g$. For instance, we could allow for, say, $\pi_{\Delta}(\cdot | g_{t},B_{t},d_{t},d_{t-1},...,d_{t-K})$ some $K>0$, denoting that possible partial payments depend on the credit history and level of debt. See
   	\cite{RR_WP03}, \cite{RR_WP08} and \cite{YUE_WP07} for an intuition behind
   	this structure.}
\end{assumption}

Essentially, this assumption imposes a Markov restriction on the
probability distribution over government expenditures and also additional restrictions over the probability of offers. In particular, this assumption implies that in financial autarky with probability $1-\lambda$, $\delta =
\bar{\delta}$ (i.e., receiving no offer) and with probability $\lambda$, an offer from the offer space is drawn according to $\pi_{\Delta}$. Also, if $\phi_{t-1}=1$ (i.e., the government was not in financial autarky at period $t-1$), then $\delta_{t}=1$ with probability one, which implies that if the government decides \emph{not to default} at time $t$, it will pay the totality of the outstanding debt. % Otherwise, if the government decides \emph{to default}, by assumption it does not receive the option to exit this period. In order to capture this, we set $\delta_{t} = \bar{\delta}$ with probability one.

Finally, we use $\Pi$ to denote the probability distribution over $\Omega^{\infty}$ generated by assumption \ref{ass:prob_def}, and $\Pi(\cdot|\omega^{t})$ to denote the conditional probability over $\Omega$, given $\omega^{t}$.

The next definitions formalize the concepts of government policy, allocation, prices of bonds and the government budget constraint. In particular, it formally introduces the fact that taxes, default decisions and debt depend on histories of past realizations of shocks, and in particular that debt is non-state contingent (i.e., $B_{t+1}$ only depends on the history up to time $t$, $\omega^{t}$).

\begin{definition}%[{def:sig}]
\label{def:sig}
 A government policy is a collection of stochastic processes $\boldsymbol{\sigma} = (B_{t+1},\tau_{t},d_{t},a_{t})_{t=0}^{\infty}$, such that for each $t$, $(B_{t+1},\tau_{t},d_{t},a_{t}) \in \mathbb{B} \times [0,1] \times \{0,1\}^{2}$ are measurable with respect to $\omega^{t}$ and $(B_{0},\phi_{-1})$.
 % \begin{itemize}
 % \item \textbf{VER AMSS FOR TRANSFERS} NADA, NO LAS DEFINE.
 % \end{itemize}

 % is a stochastic process, $\sigma^{\infty} : \{0,1,2,...\} \times \Omega \rightarrow \mathbb{B} \times [0,1] \times \{0,1\}^{2}$, where, for any $(t,\omega)$, $\sigma_{t}(\omega) \equiv \sigma(t,\omega) =  (B_{t+1}(\omega),\tau_{t}(\omega),d_{t}(\omega),a_{t}(\omega))$ and depends on the history up to time $t$, $\omega^{t}$.
 % % a sequence $\sigma^{\infty} = (\sigma_{0},\sigma_{1},...)$ where $\sigma_{t} \equiv (B_{t+1},\tau_{t},d_{t},a_{t}) : \Omega \rightarrow \mathbb{B} \times [0,1] \times \{0,1\}^{2}$ only depends on the history up to time $t$, $\omega^{t}$.
\end{definition}

\begin{definition}%[{def:allocation}]
\label{def:allocation}
An allocation is a collection of stochastic processes $(g_{t},c_{t},n_{t})_{t=0}^{\infty}$ such that for each $t$, $(g_{t},c_{t},n_{t}) \in \mathbb{G} \times \mathbb{R}_{+} \times [0,1]$ are measurable with respect to $\omega^{t}$ and $(B_{0},\phi_{-1})$.
% is a consumption process $c^{\infty} : \{0,1,2,...\} \times \Omega \rightarrow \mathbb{R}_{+}$, a labor process $n^{\infty} : \{0,1,2,...\} \times \Omega \rightarrow [0,1]$ and government expenditure process $g^{\infty} : \{0,1,2,...\} \times \Omega \rightarrow \mathbb{G}$, such that, for any $(t,\omega)$, $g_{t}(\omega) = g_{t}$ and $c_{t}(\omega) \equiv c(t,\omega)$ and $n_{t}(\omega) \equiv n(t,\omega)$ depend on the history up to time $t$, $\omega^{t}$.
  % An allocation is a consumption process $c^{\infty} : \{0,1,2,...\} \times \Omega \rightarrow \mathbb{R}_{+}$, a labor process $n^{\infty} : \{0,1,2,...\} \times \Omega \rightarrow [0,1]$ and government expenditure process $g^{\infty} : \{0,1,2,...\} \times \Omega \rightarrow \mathbb{G}$, such that, for any $(t,\omega)$, $g_{t}(\omega) = g_{t}$ and $c_{t}(\omega) \equiv c(t,\omega)$ and $n_{t}(\omega) \equiv n(t,\omega)$ depend on the history up to time $t$, $\omega^{t}$.
\end{definition}

Given a government policy, we say an allocation is \emph{feasible} if for any $(t,\omega^{t})$
\begin{align}
  c_{t}(\omega^{t}) + g_{t} = \kappa_{t}(\omega^{t}) n_{t}(\omega^{t}), \label{eqn:feas}
\end{align}
where $\kappa_{t} : \Omega^{t} \rightarrow \mathbb{R}_{+}$ is such that $\kappa_{t}(\omega^{t})$ is the productivity at period $t$, given history $\omega^{t}$. For simplicity, we set  $\kappa_{t}(\omega^{t}) = \phi_{t}(\omega^{t}) + \kappa (1-\phi_{t}(\omega^{t}))$ with $\kappa \in (0,1]$. The fraction $1-\kappa$ represents direct output loss following a default event, associated for example with financial disruption in the banking sector, limited insurance against idiosyncratic risk, among others.

\begin{definition}%[{def:price}]
\label{def:price}
  A price process is an stochastic process $(p_{t})_{t=0}^{\infty}$ such that for each $t$, $p_{t} \in \mathbb{R}_{+}$ is measurable with respect to $\omega^{t}$ and $(B_{0},\phi_{-1})$.
% $p^{\infty} : \{0,1,2,...\} \times \Omega \rightarrow \mathbb{R}_{+}$ where, for any $(t,\omega)$, $p(t,\omega) \equiv p_{t}(\omega)$  denotes the price of one unit of government debt, at time $t$ and history $\omega$ and it depends on the history up to time $t$, $\omega^{t}$.
\end{definition}

Note that $p_{t}$ denotes the price of one unit of debt in any state of the world, both with access to financial markets and during autarky, where it represents the price of defaulted debt in secondary markets. Finally, we introduce the government budget constraint.

\begin{definition}%[{def:sig-att}]
\label{def:sig-att}
  A government policy $\boldsymbol{\sigma}$ is attainable, if for all $(t,\omega^{t})$,
     \begin{align}\label{eqn:sig-att-1}
       g_{t} + \phi_{t}(\omega^{t}) \delta_{t} B_{t}(\omega^{t-1}) \leq  \kappa_{t}(\omega^{t})\tau_{t}(\omega^{t}) n_{t}(\omega^{t})  + \phi_{t}(\omega^{t}) p_{t}(\omega^{t}) B_{t+1}(\omega^{t}),
     \end{align}
and $d_{t}(\omega^{t}) = 1$ if $\phi_{t-1}(\omega^{t-1}) = 0$ and
$a_{t}(\omega^{t})=0$ if $\phi_{t-1}(\omega^{t-1}) = 1$ or $\delta_{t}
= \bar{\delta}$.\footnote{The inequality in equation
  \ref{eqn:sig-att-1} implies that the government can issue lump-sum
  transfers to the households. Lump-sum taxes are not permitted.}
 %     \begin{align*}
%        g_{t} + \phi_{t} \delta_{t} B_{t} =  \kappa_{t}\tau_{t} n_{t}  + \phi_{t} p_{t} B_{t+1}.
%      \end{align*}
% with $\phi_{t} \equiv (1-d_{t}) + d_{t} a_{t}$.\footnote{As defined, the government policy and prices depend on the particular history $\omega$, so the equalities are understood to hold for all $\omega \in \Omega$.}
\end{definition}

Observe that in equation \ref{eqn:sig-att-1}, if the government is in financial autarky ($\phi_{t}(\omega^{t}) = 0$), its budget constraint boils down to $g_{t} \leq \kappa_{t}(\omega^{t})\tau_{t}(\omega^{t}) n_{t}(\omega^{t})$. On the other hand, if the government has access to financial markets ($\phi_{t}(\omega^{t}) = 1$), then it has liabilities to be repaid for $\delta_{t} B_{t}$ and can issue new debt.\footnote{If the government had access to financial markets at time $t-1$ ($\phi_{t-1}=1$), then by assumption \ref{ass:prob_def}, $\delta_{t} = 1$ and the outstanding debt if simply $B_{t}$.} The final restriction on $d_{t}(\omega^{t})$ and $a_{t}(\omega^{t})$ simply states that if last period the government was in financial autarky, then it trivially cannot choose to default at time $t$, and if $\delta_{t} = \bar{\delta}$ or if last period the government had access to financial markets $a_{t}(\omega^{t})$ is set to 0.

A few final remarks about the ``debt-restructuring process'' are in order.  This process intends to capture
the fact that after defaults, economies see their access to credit severely
hindered.\footnote{The duration of debt restructuring after sovereign
	defaults in particular on external debt has received considerable attention in the
	literature. For instance, for Argentina's default in 2001 the settlement with the
	majority of the creditors was reached in 2005. In the default
	episodes of Russia (1998), Ecuador (1999) and Ukraine (1998), the
	renegotiation process lasted 2.3, 1.7 and 1.4 years, respectively,
	according to \cite{BENJAMIN_WRIGHT_WP09}. In general, domestic debt
	restructuring periods tend to be not as long as in the case of external debt. For example, as documented by \cite{SZ_book06}, after the default by Russia in 1998 it took six months to restructure the domestic GKO bonds.} The parameters $(\lambda,\pi_{\Delta})$ capture the fact that
debt restructuring is time-consuming but, generally, at the end a
positive fraction of the defaulted debt is honored.%\footnote{See

\subsection{The Household's Problem}

There is a continuum of identical households that are price takers and have time-separable preferences for consumption and labor processes. They also make savings decisions by trading government bonds. Formally, we define a household debt process as a stochastic process given by $(b_{t+1})_{t=0}^{\infty}$ where $b_{t+1} : \Omega^{t} \rightarrow  [\underline{b},\overline{b}]$ is the household's savings in government bonds at time $t+1$ for any history $\omega^{t}$.\footnote{We assume $b_{t+1} \in [\underline{b},\overline{b}]$ with $[\underline{b},\overline{b}] \supset \mathbb{B}$ so in equilibrium these restrictions will not be binding.}

For convenience, let $q_{t}$ denote the price of defaulted debt at time $t$, i.e., $q_{t} = p_{t}$ if $\phi_{t}=0$. Given a government policy $\boldsymbol{\sigma}$, for each $t$, let $\varrho_{t} : \Omega^{t} \rightarrow \mathbb{R}$ be the payoff of a government bond at period $t$, i.e.
\begin{align} \label{eqn:payoff}
   \varrho_{t}(\omega^{t}) = \phi_{t}(\omega^{t})\delta_{t} + (1-\phi_{t}(\omega^{t})) q_{t} (\omega^{t}).
\end{align}
% \begin{align*}
%   \varrho_{t}(\omega;\sigma^{\infty}) = (1-d_{t}(\omega)) + d_{t}(\omega) \{ a_{t}(\omega) \delta_{t}(\omega) + (1-a_{t}(\omega)) q_{t} (\omega) \}
% \end{align*}
From the households' point of view (which takes government actions as given) the debt is an asset with a state-dependent payoff. This dependence clearly illustrates that default decisions add certain degree of state contingency to the government debt. In particular, if $\phi_{t}(\omega^{t}) = 1$, then $\varrho_{t}(\omega^{t}) = \delta_{t}$ denoting the fact that the government pays a fraction $\delta_{t}$.
If the government defaults or rejects the repayment option, the household can sell each unit of government debt in the secondary market at a price $\varrho_{t}(\omega^{t}) = q_{t}(\omega^{t})$.

The household's problem consists of choosing consumption, labor and debt processes in order to maximize the expected lifetime utility. That is, given  $(\omega_{0},b_{0})$ and $\boldsymbol{\sigma}$,
\begin{align*}
 \sup_{(c_{t},n_{t},b_{t+1})_{t=0}^{\infty} \in \mathbb{C}(g_{0},b_{0};\boldsymbol{\sigma}) } E_{\Pi(\cdot|\omega_{0})} \left[  \sum_{t=0}^{\infty}  \beta^{t} u(c_{t}(\omega^{t}),1-n_{t}(\omega^{t})) \right]
\end{align*}
where $\beta \in (0,1)$ is the discount factor, $E_{\Pi(\cdot|\omega_{0})}[\cdot]$ is the expectation using the conditional probability $\Pi(\cdot|\omega_{0})$, and $\mathbb{C}(g_{0},b_{0};\boldsymbol{\sigma})$ is the set of household's allocations and debt process, given government policy $\boldsymbol{\sigma}$, such that for all $t$ and all $\omega^{t} \in \Omega^{t}$,
\begin{align*}
  c_{t}(\omega^{t}) + p_{t}(\omega^{t}) b_{t+1}(\omega^{t}) = (1-\tau_{t}(\omega^{t}))
  \kappa_{t}(\omega^{t}) n_{t}(\omega^{t}) + \varrho_{t}(\omega^{t}) b_{t}(\omega^{t-1}) +T_{t}(\omega^{t}),
 \end{align*}
where $T_{t}(\omega^{t}) \geq 0$ are lump-sum transfers from the government.
%The previous equation indicates that after-tax labor income, proceedings from bond holdings and government transfers have to be sufficient to cover consumption and new purchases of government bonds.

\section{Competitive Equilibrium}
\label{sec:eqm}

We now define a competitive equilibrium for a given government policy and derive the equilibrium taxes and prices.

\begin{definition}\label{def:CEG}
 Given $\omega_{0},B_{0} = b_{0}$ and $\phi_{-1}$, a competitive equilibrium is a government policy, $\boldsymbol{\sigma}$, an allocation, $(g_{t},c_{t},n_{t})_{t=0}^{\infty}$, a household debt process, $(b_{t+1})_{t=0}^{\infty}$, and a price process $(p_{t})_{t=0}^{\infty}$ such that:
  \begin{enumerate}
  \item Given the government policy and the price process, the allocation and debt process solve the household's problem.
   \item The government policy, $\boldsymbol{\sigma}$, is attainable.
   % \item For all $t$ and $\omega$, $c_{t}(\omega) + g_{t}(\omega) = \kappa_{t}(\sigma^{\infty}) n_{t}(\omega)$.
   %  \item For all $t$ and $\omega$, $B_{t+1}(\omega) = b_{t+1}(\omega)$, and $B_{t+1}(\omega) = B_{t}(\omega)$ if $(1-d_{t}(\omega) + d_{t}(\omega) a_{t}(\omega) ) = 0$.
   \item The allocation is feasible. %For all $t$, $c_{t} + g_{t} = \kappa_{t} n_{t}$.
    \item For all $(t,\omega^{t})$, $B_{t+1}(\omega^{t}) = b_{t+1}(\omega^{t})$, and $B_{t+1}(\omega^{t}) = B_{t}(\omega^{t-1})$ if $\phi_{t}(\omega^{t}) = 0$.
  \end{enumerate}
\end{definition}

The market clearing for debt imposes that, if the economy is in financial autarky --- where the government cannot issue debt, and thus agents can only trade among themselves ---, $B_{t+1}(\omega^{t})= B_{t}(\omega^{t-1})$. This implies, since agents are identical, that in equilibrium $b_{t}(\omega^{t-1}) = b_{t+1}(\omega^{t})$, i.e., agents do not change their debt positions.

% Henceforth, we use $CE$ to denote the set of all competitive equilibrium
% with government. In appendix \ref{app:CEG}, we characterize the set
% $CEG(s_{0})$ exploiting the sufficiency of the first order conditions
% of the households and budget constraints.
% \begin{itemize}
% \item \textbf{Este parrafo parece
%   fuera de lugar.}
% \end{itemize}

\subsection{Equilibrium Prices and Taxes}

In this section we present the expressions for equilibrium taxes and
prices of debt. The former quantity is standard
(e.g. \cite{AMSS_JPE02} and \cite{LS_JME83}); the latter quantity,
however, incorporates the possibility of default of the
government. The following assumption is standard and ensures that $u$
is smooth enough to compute first order conditions.

\begin{assumption}\label{ass:U_prop}
  $u \in \mathbb{C}^{2}(\mathbb{R}_{+} \times [0,1], \mathbb{R})$ with $u_{c} > 0$, $u_{cc} < 0$, $u_{l} > 0$ and $u_{ll} < 0$, and $\lim_{l \rightarrow 0} u_{l}(l) = \infty$.\footnote{$\mathbb{C}^{2}(X,Y)$ is the space of twice continuously differentiable functions from $X$ to $Y$, the subscript $c$ denotes the derivative with respect to the first argument and the subscript $l$ with respect to the second one. The assumption $u_{cc} < 0$ could be relaxed to include $u_{cc} = 0$ (see section \ref{sec:bench} below).}
\end{assumption}

Henceforth, for any $(t,\omega^{t})$, we use $u_{c}(\omega^{t})$ as $u_{c}(c_{t}(\omega^{t}),1-n_{t}(\omega^{t}))$ and proceed similarly for other derivatives and functions. From the first order conditions of the optimization problem of the
households (assuming an interior solution) the following equations
hold for any $(t,\omega^{t})$,\footnote{See appendix \ref{app:HOUSE}
  for the derivation.}
\begin{align}\label{eqn:eq_tax}
  \frac{u_{l}(\omega^{t})}{u_{c}(\omega^{t})} = (1-\tau_{t}(\omega^{t})) \kappa_{t}(\omega^{t}),
\end{align}
and
\begin{align} \notag
  p_{t}(\omega^{t})= & E_{\Pi(\cdot|\omega^{t})} \left[ \beta  \frac{u_{c}(\omega^{t+1})}{u_{c}(\omega^{t})} \varrho_{t+1}(\omega^{t+1})  \right] \\ \label{eqn:eq_price}
 = & \beta E_{\Pi(\cdot|\omega^{t})} \left[   \frac{u_{c}(\omega^{t+1})}{u_{c}(\omega^{t})}  \phi_{t+1}(\omega^{t+1})\delta_{t+1}   \right] + \beta E_{\Pi(\cdot|\omega^{t})} \left[   \frac{u_{c}(\omega^{t+1})}{u_{c}(\omega^{t})} (1-\phi_{t+1}(\omega^{t+1})) q_{t+1} (\omega^{t+1})   \right]
\end{align}

Given the definition of $\varrho$ and the restrictions on $\Pi$, equation \ref{eqn:eq_price} implies for $\phi_{t}(\omega^{t})=1$,\footnote{The notation $(\omega^{t},g,\delta)$ denotes the partial history $\omega^{t+1}$ where $(g_{t+1},\delta_{t+1}) = (g,\delta)$. As it will become clear below, the price $q_{t}$ does not depend on $\delta_{t}$, so we omit it from the notation.  Also, when $\phi_{t+1}(\omega^{t+1}) = 0$, $u_{c}(\omega^{t+1})$ is actually only a function of $g_{t+1}$ (not the entire past history $\omega^{t+1}$) because in equilibrium the government runs a balanced budget.}
\begin{align} \notag
  p_{t}(\omega^{t})  = & \beta \int_{\mathbb{G}} \left(   \frac{u_{c}(\omega^{t},g',1)}{u_{c}(\omega^{t})} (1-d_{t+1}(\omega^{t},g',1)) \right) \pi_{\mathbb{G}}(dg'|g_{t}) \\
& +  \beta  \int_{\mathbb{G}} \frac{u_{c}(\omega^{t},g',1)}{u_{c}(\omega^{t})} d_{t+1}(\omega^{t},g',1) q_{t+1}(\omega^{t},g') \pi_{\mathbb{G}}(dg'|g_{t})  \label{eqn:price-1},
\end{align}
and for $\phi_{t}(\omega^{t})=0$
\begin{align}\notag
  q_{t}(\omega^{t}) = & \beta \lambda \int_{\mathbb{G}} \int_{\Delta} \left(   \frac{u_{c}(\omega^{t},g',\delta')}{u_{c}(\omega^{t})}\delta' a_{t+1}(\omega^{t},g',\delta')  \right) \pi_{\Delta}(d\delta') \pi_{\mathbb{G}}(dg'|g_{t}) \\ \notag
 & + \beta \lambda \int_{\mathbb{G}} \left\{ \int_{\Delta} \left(   \frac{u_{c}(\omega^{t},g',\delta')}{u_{c}(\omega^{t})} (1-a_{t+1}(\omega^{t},g',\delta'))  \pi_{\Delta}(d\delta')   \right) \right\} q_{t+1}(\omega^{t},g') \pi_{\mathbb{G}}(dg'|g_{t}) \\
 & + \beta (1-\lambda) \int_{\mathbb{G}}  \left(   \frac{u_{c}(\omega^{t},g',\bar{\delta})}{u_{c}(\omega^{t})} \right) q_{t+1}(\omega^{t},g') \pi_{\mathbb{G}}(dg'|g_{t}). \label{eqn:price-2}
\end{align}

Equation \ref{eqn:eq_price} reflects the fact that in equilibrium households anticipate the default strategies of the government and demand higher returns to compensate for the default risk. The second line in the Euler equation \ref{eqn:price-1} shows that, due to the possibility of partial repayments in the future, defaulted debt has positive value and agents can sell it in a secondary market at price $q_{t+1}(\omega^{t+1})$. Equation \ref{eqn:price-2} characterizes this price. Each summand in the right-hand side corresponds to a ``branch'' of the tree depicted in
figure \ref{fig:timing}. The first line represents the value of
one unit of debt when an offer arrives and the government decides to repay the realized fraction of the defaulted debt next period. The second and third lines capture the value of one unit of debt when either the government decides to reject the repayment offer, or it does not receive one.

It is easy to show that if $u_{c}=1$, $\lambda = 0$, the equilibrium price described in equations \ref{eqn:price-1} and \ref{eqn:price-2} coincide with that in the standard sovereign default model (e.g., \cite{ARELLANO_AER08} and \cite{AG_JIE06}).
%To shed some more light on equations \ref{eqn:price-1} and \ref{eqn:price-2}, consider the case where $u_{c}=1$, $\lambda = 0$. In this case, for any $(t,\omega)$
%\begin{align*}
%  p_{t}(\omega) = \beta \int_{\mathbb{G}} (1-d_{t+1}(\omega^{t},g')) \pi_{\mathbb{G}}(dg'|g_{t}).
%\end{align*}
% Observe that since $\lambda=0$ it follows that $q_{t}(\omega^{t})  = \int_{\mathbb{G}}  q_{t+1}(\omega^{t},g') \pi_{\mathbb{G}}(dg'|g_{t})$,
%which by substituting forward and invoking standard transversality
%conditions, yields $q_{t}(\omega^{t}) = 0$. Thus, the bond price $p_{t}$, is simply the discounted one-period ahead probability of not defaulting. These pricing equations are analogous to those in \cite{ARELLANO_AER08} and \cite{AG_JIE06} and references therein. %See also \cite{CHATTEE-EIY-AER12} for the equilibrium prices in the presence of long-term debt.
The novelty of our pricing equations with respect to the
standard sovereign default model is the presence of secondary market prices,
$q_{t}$. By allowing for a positive recovery rate, the model is able to deliver a positive price of defaulted debt during the financial autarky period. In sections \ref{sec:bench} and \ref{sec:num-simul}, we analyze further the pricing implications of this model. %Delving more on the pricing implications of equations \ref{eqn:price-1} - \ref{eqn:price-2}, albeit outside the scope of this paper, seems like a promising avenue for future research.

% \begin{itemize}
% \item \textbf{Quizas vale la pena una seccion con 2 ejemplos: Arellano y otro ejemplo con $\lambda>0$ que tenga la propostion que viene despues para q con IID data. Y discutir aca los meritos de las pricing equations.}
% \end{itemize}

\subsection{Characterization of the Competitive Equilibrium}
\label{sec:charc-CE}

In this environment, the set of competitive equilibria can be characterized by a sequence of non-linear equations which impose restrictions on $(d_{t},a_{t},B_{t+1},n_{t})_{t=0}^{\infty}$ and are derived from the first order conditions of the household, the budget constraint of the government and the feasibility condition. % These set of equations connect periods of ``today'' and ``tomorrow''.
The next theorem formalizes this claim.

Henceforth, we call $(d_{t},a_{t},B_{t+1},n_{t})_{t=0}^{\infty}$ an \emph{outcome path} of allocations. We say an outcome path is \emph{consistent with a competitive equilibrium} if the outcome path and $(c_{t},p_{t},b_{t+1},\tau_{t},g_{t})_{t=0}^{\infty}$, derived using the market clearing, feasibility and first order conditions, is a competitive equilibrium. Also, let
\begin{align}
  Z_{t}(\omega^{t}) \equiv z(\kappa_{t}(\omega^{t}),n_{t}(\omega^{t}),g_{t}) = \left( \kappa_{t}(\omega^{t}) - \frac{u_{l}(\omega^{t})}{u_{c}(\omega^{t})} \right) n_{t}(\omega^{t}) - g_{t}
\end{align}
be the primary surplus (if it is negative, it represents a deficit) at time $t$ given history $\omega^{t} \in \Omega^{t}$.

\begin{theorem}%[{thm:charac-CEG}]
\label{thm:charac-CEG}
  Given $\omega_{0},B_{0} = b_{0}$ and $\phi_{-1}$, the outcome path $(d_{t},a_{t},B_{t+1},n_{t})_{t=0}^{\infty}$ is consistent with a competitive equilibrium iff for all $(t,\omega^{t}) \in \{0,1,2,...\} \times \Omega^{t}$, the following holds:
\begin{align}
   Z_{t}(\omega^{t}) u_{c}(\omega^{t}) + \phi_{t}(\omega^{t}) \{ p_{t}(\omega^{t}) u_{c}(\omega^{t}) B_{t+1}(\omega^{t}) - \delta_{t} u_{c}(\omega^{t}) B_{t}(\omega^{t-1}) \} \geq 0  \label{eqn:IC-time-3},\\
B_{t+1}(\omega^{t}) = B_{t}(\omega^{t-1}) \text{ if } \phi_{t}(\omega^{t}) = 0, \notag %\\
%u_{c}(\omega^{t})=u_{c}(\kappa_{t}(\omega^{t})n_{t}(\omega^{t})-g_{t}(\omega^{t}),1-n_{t}(\omega^{t})). \label{eqn:uc}
\end{align}
and
$c_{t}(\omega^{t})=\kappa_{t}(\omega^{t})n_{t}(\omega^{t})-g_{t}(\omega^{t})$
and equations \ref{eqn:eq_tax} and \ref{eqn:price-1} hold.
\end{theorem}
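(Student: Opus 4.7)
The plan is to prove both directions by unpacking the definition of a competitive equilibrium into its constituent parts and showing that, modulo the obvious substitutions, they reduce to condition \ref{eqn:IC-time-3} together with feasibility and the pricing/tax equations. The key observation is that the inequality in the theorem is just the government budget constraint \ref{eqn:sig-att-1} after one eliminates $\tau_t$ using the household's intratemporal first-order condition \ref{eqn:eq_tax} and multiplies through by the positive marginal utility $u_c(\omega^t)$; the slack corresponds to non-negative lump-sum transfers $T_t$ to households.

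For necessity, I would start from a competitive equilibrium $(\boldsymbol{\sigma},(c_t,n_t,g_t),(b_{t+1}),(p_t))$ and read off what each requirement gives. Feasibility of the allocation yields $c_t = \kappa_t n_t - g_t$ directly. The household optimization, through the first-order conditions with respect to $n_t$ and $b_{t+1}$ (derived in appendix \ref{app:HOUSE}), gives \ref{eqn:eq_tax} and \ref{eqn:eq_price}; combining the latter with Assumption \ref{ass:prob_def} and the definition of $\varrho$ produces \ref{eqn:price-1} and \ref{eqn:price-2}. Market clearing for debt gives $B_{t+1}=b_{t+1}$ and, in autarky, $B_{t+1}=B_t$. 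Finally, substituting $\kappa_t \tau_t n_t = (\kappa_t - u_l/u_c)n_t$ into \ref{eqn:sig-att-1} and multiplying by $u_c(\omega^t)>0$ yields exactly \ref{eqn:IC-time-3}.

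For sufficiency, suppose the outcome path $(d_t,a_t,B_{t+1},n_t)$ satisfies the stated conditions. I would construct the remaining objects: set $c_t=\kappa_t n_t - g_t$ (feasibility), define $\tau_t$ by \ref{eqn:eq_tax}, set $p_t$ by \ref{eqn:price-1}–\ref{eqn:price-2}, set $b_{t+1}=B_{t+1}$, and let $T_t(\omega^t)$ equal the slack in \ref{eqn:IC-time-3} divided by $u_c(\omega^t)$, which is nonnegative by hypothesis. With these choices the government policy is attainable by construction, the allocation is feasible, and market clearing is satisfied. It remains to verify that the household's problem is solved. The concavity of $u$ under Assumption \ref{ass:U_prop} implies that the household's problem is a concave program, so the intratemporal and intertemporal Euler conditions \ref{eqn:eq_tax} and \ref{eqn:eq_price} together with the budget identity (which holds once $T_t$ is chosen as above) and a transversality condition are sufficient for optimality; the latter holds automatically because $\mathbb{B}$ is compact and $\beta<1$.

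The main obstacle I anticipate is the careful handling of the household's budget together with the transfers $T_t$ and the autarky restriction $B_{t+1}=B_t$. In autarky, the inequality \ref{eqn:IC-time-3} collapses to $Z_t u_c\geq 0$, i.e.\ the balanced-budget requirement with slack absorbed by $T_t$; I must check that the resulting $\tau_t$ stays in $[0,1]$ on-equilibrium (this is implicit in the definition of an outcome path, since \ref{eqn:eq_tax} determines $\tau_t$ from $n_t$ and $\kappa_t$, and $n_t\in[0,1]$ is built in). A second, more mechanical point is that the transition from \ref{eqn:eq_price} to \ref{eqn:price-1}–\ref{eqn:price-2} uses Assumption \ref{ass:prob_def} in conjunction with the fact that on the $\phi_{t+1}=0$ branches the government runs a balanced budget, so $u_c(\omega^{t+1})$ depends on $\omega^{t+1}$ only through $g_{t+1}$; this dependence must be tracked to justify the notation $(\omega^t,g',\delta')$ used in those pricing equations.
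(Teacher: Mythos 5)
Your proposal is correct and follows essentially the same route as the paper: the paper factors the argument through Lemma \ref{lem:CEG_charac} (which recasts the competitive equilibrium as the system \ref{eqn:charac-CEG_1b}--\ref{eqn:charac-CEG_4b}) and Lemma \ref{lem:FONC_suff} (sufficiency of the household first-order conditions via concavity plus a transversality condition guaranteed by bounded debt), but the substance — eliminating $\tau_t$ from the government budget constraint via \ref{eqn:eq_tax}, multiplying by $u_c(\omega^t)>0$, absorbing the slack into nonnegative transfers $T_t$, and reconstructing $(c_t,p_t,b_{t+1},\tau_t)$ from the outcome path for the converse — is exactly what you describe. Your attention to the autarky branch and to the dependence of $u_c(\omega^{t+1})$ on $g_{t+1}$ alone matches the paper's footnoted treatment.
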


For any $(\omega,B,\phi)\in (\mathbb{G} \times \overline{\Delta}) \times
\mathbb{B} \times \{0,1\}$, let $CE_{\phi}(\omega,B)$ denote the set
of all outcome paths that are consistent with competitive equilibria, given $\omega_{0}
= \omega$, $\phi_{0}(\omega_{0}) = \phi$ and where $B$ is the
outstanding debt of time $0$, after any potential debt restructuring
in that period.\footnote{Constructing the set $CE_{\phi}(\omega,B)$ is useful since, in order to make a default/repayment decision, the default authority evaluates alternative utility values both for repayment and for autarky that are sustained by competitive equilibrium allocations.} We observe that by setting $\phi_{0}(\omega_{0}) = \phi$ we are implicitly imposing restrictions on $a_{0},d_{0},\phi_{-1}$ and $\delta_{0}$.\footnote{For example, if $\phi_{0}=1$ we could only arrive to it because either $\phi_{-1}=1$ and $d_{0}=0$, given $(g_{0},B_{0})=(g,B)$, or because $\phi_{-1}=0$ with defaulted debt $\widetilde{B}_{0} = B_{0}/\delta_{0}$ and offer $\delta_{0}$ is accepted ($a_{0}=1$). }

Equation \ref{eqn:IC-time-3} summarizes the budget constraint of the government but replacing prices and taxes by the first order conditions, as in the ``primal approach'' used by \cite{LS_JME83} and \cite{AMSS_JPE02}.

\section{The Government Problem}
\label{sec:govt}

The government is benevolent and maximizes the welfare of the representative household
by choosing policies. The government, however, cannot commit to
repaying the debt, but commits to previous tax promises until a debt restructuring takes place. This assumption facilitates the comparison to the optimal taxation literature in the spirit of \cite{LS_JME83} and \cite{AMSS_JPE02}.

For autarky states, the government chooses taxes that balance its budget. Once the government accepts an offer to restructure the debt, it regains access to financial markets and starts anew, without any outstanding tax
promises, by assumption. A similar feature is present in
\cite{DEBORTOLI_NUNES_WP08}, where the government can randomly
re-optimize and reset fiscal policies with a given exogenous
probability.\footnote{In our model, however, the \emph{resetting} event,
  given by the debt restructuring, is an equilibrium outcome that
  emerges endogenously.}

The government problem can thus be
viewed as a problem involving two types of authorities:
a \emph{default authority} and a \emph{fiscal authority}. On the one
hand, the default authority can be seen as comprised of a sequence of
one-period administrations, where the time-$t$ administration makes
the default and repayment decision in period $t$, taking as given the
behavior of all the other agents including the fiscal authority. On
the other hand, the \emph{fiscal authority} can be viewed as a
sequence of consecutive administrations, each of which stays in office
until there is a debt renegotiation. While ruling, a fiscal
administration has the ability to commit, and chooses the optimal
fiscal and debt processes, taking as given the behavior of the default
authority. When debt is renegotiated, the fiscal administration is
replaced by a new one, which is not bound by previous tax promises,
and is free to reset the fiscal and debt policy.\footnote{We focus
  exclusively on symmetric strategies for households, where all of
  them take the same decisions along the equilibrium path. Similarly,
  we assume that all default and fiscal administrations choose
  identical actions conditional on the same state of the economy, thereby introducing a Markovian structure for
  optimal strategies. } % This environment allows us to study how the presence of default risk, which is anticipated by households, influences the optimal fiscal policy set by the government.

\subsection{The Government Policies}

For any $t \in \{0,1,...\}$, let $h_{t} \equiv
(\phi_{t-1},B_{t},\omega_{t})$ and
$h^{t} \equiv (h_{0},h_{1},....,h_{t})$ be the \emph{public history}
until time $t$.
%\footnote{In our economy an individual household cannot alter prices and faces a (strictly) concave optimization problem. Any individual deviation from the equilibrium path determined by the Euler equation and the consumption-labor optimality condition, taking prices and policies as given, cannot be profitable from the household's perspective. Hence, there is no need to specify the household's behavior off the equilibrium path as well as to make households' strategies depend on private histories but only on public ones.}
We use $\mathbb{H}^{t}$ to denote the set of all
public histories until time $t$.

A \emph{government strategy} is given by a strategy for the default
and fiscal authorities, $\boldsymbol{\gamma} \equiv
(\boldsymbol{\gamma}^{D},\boldsymbol{\gamma}^{F})$. The strategy for
the default authority $\boldsymbol{\gamma}^{D}$ specifies a default
and a repayment decision for any period $t$ and any public history
$h^{t} \in \mathbb{H}^{t}$, i.e., $\boldsymbol{\gamma}^{D} =
(\gamma^{D}_{t}(\cdot))_{t=0}^{\infty}$ with $\gamma^{D}_{t}(h^{t})
\equiv ( d_{t}(h^{t}),a_{t}(h^{t}))$ for any $h^{t} \in \mathbb{H}^{t}$. The strategy for the fiscal
authority, $\boldsymbol{\gamma}^{F}$, specifies next period's debt
level for any public history $h^{t} \in \mathbb{H}^{t}$ and any
$\phi_{t}$, i.e., $\boldsymbol{\gamma}^{F} =
(\gamma^{F}_{t}(\cdot,\cdot))_{t=0}^{\infty}$ with
$\gamma^{F}_{t}(h^{t},\phi_{t}) \equiv  B_{t+1}(h^{t},\phi_{t})$ for
any $(h^{t},\phi_{t}) \in \mathbb{H}^{t} \times \{0,1\}$. The fact
that $\gamma^{F}_{t}(h^{t},\phi_{t})$ depends on $\phi_{t}$ reflects
our assumption on the timing protocol by which the default authority
moves first in each period.\footnote{We omit labor taxes (or labor
directly) as part of the government strategy because, given
$(h^{t},\phi_{t})$ and $\gamma^{F}_{t}(h^{t},\phi_{t})$, labor taxes
are obtained from the budget constraint. For this reason we do  not include them as part of the public history.}
%Finally, note that
%any strategy $\boldsymbol{\gamma}$ jointly with
%$(\omega_{t})_{t=0}^{\infty}$ generates an \emph{outcome path} of
%allocations $(d_{t},a_{t},B_{t+1},n_{t})_{t=0}^{\infty}$.
To stress
that a particular policy action, say $B_{t+1}(h^{t},\phi_{t})$,
belongs to a given strategy we use $B_{t+1}(\boldsymbol{\gamma})
(h^{t},\phi_{t})$.

Let $\boldsymbol{\gamma}|_{(h^{t},\phi_{t})}$ denote the continuation of strategy
$\boldsymbol{\gamma}$ after history $(h^{t},\phi_{t}) \in \mathbb{H}^{t} \times
\{0,1\}$.\footnote{Observe that while a strategy prescribes that the default authority
  moves first at $t = 0$, with the continuation strategy,
as we defined, the fiscal authority is moving first at $t$ and then
the default authority moves at $t + 1$.} We say a strategy
$\boldsymbol{\gamma}$ is \emph{consistent with a competitive
  equilibrium}, if after any $(h^{t},\phi_{t})  \in \mathbb{H}^{t}
\times \{0,1\}$, the outcome path generated by
$\boldsymbol{\gamma}|_{(h^{t},\phi_{t})}$ belongs to
$CE_{\phi_{t}}(\omega_{t},B)$ with outstanding debt %$B= (\delta_{t}\phi_{t}(\boldsymbol{\gamma})(h^{t})+(1-\phi_{t}(\boldsymbol{\gamma})(h^{t})))B_{t}(\boldsymbol{\gamma})(h^{t-1},\phi_{t-1}(\boldsymbol{\gamma})(h^{t-1}))$.
$B= (\delta_{t}\phi_{t} +(1-\phi_{t}))
B_{t}(\boldsymbol{\gamma})
(h^{t-1},\phi_{t-1}(\boldsymbol{\gamma})(h^{t-1}))$.
That is, if there is full repayment (i.e. no default) or not borrowing at all in the current period $t$, the debt level $B$ is given by the bond holdings carried over from last period. Otherwise, if the government just regained access by accepting an offer $\delta_{t}$, $B$ is the restructured debt level. For any $h_{0} \in \mathbb{H}$ and $\phi_{0} \in \{0,1\}$, we use $\mathcal{S}(h_{0},\phi_{0})$ to denote the set of such strategies; see appendix \ref{app:govt} for the formal expression of $\mathcal{S}(h_{0},\phi_{0})$. Henceforth, we only consider strategies that are consistent with competitive equilibrium.

Finally, for any public history $h^{t} \in \mathbb{H}^{t}$, $\phi \in \{0,1\}$ and $\boldsymbol{\gamma} \in \mathcal{S}(h_{0},\phi)$, let
\begin{align}
  \label{eq:8}
  V_{t}(\boldsymbol{\gamma})(h^{t},\phi) = E_{\Pi(\cdot|\omega_{t})} \left[
    \sum_{j=0}^{\infty} \beta^{j} u(\kappa_{t+j}(\omega^{t+j})
    n_{t+j}(\boldsymbol{\gamma})(\omega^{t+j}) - g_{t+j},1-n_{t+j}(\boldsymbol{\gamma})(\omega^{t+j}))  \right]
\end{align}
be the expected lifetime utility of the representative household at time $t$, given strategy $\boldsymbol{\gamma}|_{(h^{t},\phi)}$. % The government's problem at time zero thus consists of maximizing $V_{0}(\boldsymbol{\gamma})(h_{0})$, given by $\ref{eq:8}$, by selecting $\boldsymbol{\gamma}$ from $\mathcal{S}(\omega_{0},\phi_{-1},B_{0})$.

 % While the default and fiscal authorities are strategic players that share the same objective function, given by \ref{eq:8}, they differ on their ability to commit (as well as on their action choice) as described earlier. Solving a dynamic infinite-horizon problem with two or more large strategic players that interact with competitive agents can pose a real challenge, but in our environment it can be simplified significantly due to the fact that the default/repayment decisions can be fully characterized in a mechanical way. We now proceed to derive the optimal strategies for the default and fiscal authorities.

\subsection{Default and Renegotiation Policies}

As mentioned before, the default authority can be viewed as comprised by a sequence of
one-period administrations, each of which makes the default and renegotiation decision in its respective period, taking as given the
behavior of all the other agents including the other default and fiscal administrations. It is
easy to see that, for each public history $h^{t} \in \mathbb{H}^{t}$, the default authority will optimally choose as follows: if $\phi_{t-1}=1$
\begin{align} \label{eqn:PF-d1}
  d^{\ast}_{t}(\boldsymbol{\gamma})(h^{t}) = \left\{
    \begin{array}{cc}
      0 & if~ V_{t}(\boldsymbol{\gamma})(h^{t},1) \geq V_{t}(\boldsymbol{\gamma})(h^{t},0) \\
      %\text{[0,1]} & if ~ V^{\ast}_{1}(g,B,\mu) = V^{\ast}_{0}(g,B)\\
      1 & if ~ V_{t}(\boldsymbol{\gamma})(h^{t},1) < V_{t}(\boldsymbol{\gamma})(h^{t},0)
    \end{array}
\right.
\end{align}
and if $\phi_{t-1}=0$
\begin{align} \label{eqn:PF-a1}
  a^{\ast}_{t}(\boldsymbol{\gamma})(h^{t}) = \left\{
    \begin{array}{cc}
      1 & if~ V_{t}(\boldsymbol{\gamma})(h^{t},1) \geq V_{t}(\boldsymbol{\gamma})(h^{t},0) \\
      %\text{[0,1]} & if ~ V^{\ast}_{1}(g,B,\mu) = V^{\ast}_{0}(g,B)\\
      0 & if ~ V_{t}(\boldsymbol{\gamma})(h^{t},1) < V_{t}(\boldsymbol{\gamma})(h^{t},0)
    \end{array}
\right.
\end{align}
The
dependence on $\boldsymbol{\gamma}$ denotes the fact that
$d^{\ast}_{t}$ and $a^{\ast}_{t}$ are associated with the strategy of
the fiscal authority $\boldsymbol{\gamma}^{F}$. Indeed, to specify the optimal
default and repayment decisions at any history $h^{t} \in \mathbb{H}^{t}$ we need to know the value of repayment and the value of default, $V_{t}(\boldsymbol{\gamma})(h^{t},1)$ and $V_{t}(\boldsymbol{\gamma})(h^{t},0)$, respectively, which are functions of $\boldsymbol{\gamma}^{F}$.\footnote{Also, recall that by assumption
	$a^{\ast}_{t}(\boldsymbol{\gamma})(h^{t}) = 0$ if $\phi_{t-1}=1$ or
	$\delta_{t} = \bar{\delta}$ and
	$d^{\ast}_{t}(\boldsymbol{\gamma})(h^{t}) = 1$ if $\phi_{t-1}=0$. }

\subsection{Recursive Representation of the Government Problem}

Taking as given the optimal decision rules $\ref{eqn:PF-d1}$ and
$\ref{eqn:PF-a1}$ for the default authority, we now turn to the
optimization problem of the fiscal authority and the recursive
representation of the government problem. To do so, we adopt a
recursive representation for the competitive equilibria by introducing
an adequate state variable. %In any competitive equilibrium, as can be seen from the equations in theorem \ref{thm:charac-CEG}, all relevant information for households' decision-making in the current period $t$ about future tax continuation policy --- that is, the policy the government can commit to temporarily--- is summarized in $u_{c}(\omega^{t},\omega_{t+1})$ for all $\omega_{t+1}$. By keeping track of the profile of ``promised'' marginal utilities of consumption, we ensure that the fiscal authority commits to deliver the ``promised'' marginal utility ---as long as the default authority does not restructure the debt--- for each realization of $g$; thereby guaranteeing that thelast-period households' Euler equation is satisfied after each possible history.
Following \cite{KYDLAND_PRESCOTT_JEDC80} and
\cite{CHANG_JET98} among others, it follows that the relevant
(co-)state variable is the ``promised''  marginal utilities of consumption.\footnote{By keeping track of the profile of ``promised'' marginal utilities of consumption, we ensure that the fiscal authority commits to deliver the ``promised'' marginal utility as long as the default authority does not restructure the debt. If the debt is restructured and a new
	fiscal administration takes power, it sets the current marginal
	utility at its convenience, which in equilibrium is anticipated by
	the households.}

 % when solving the recursive problem for the fiscal authority, we need to keep track of the "promised" marginal utilities to guarantee that the last-period households' Euler equation is satisfied after each possible history. As long as the default authority does not restructure the debt, the fiscal authority commits to deliver the "promised" marginal utility at any state. If the debt is restructured and a new fiscal administration takes power, it sets the current marginal utility at its convenience, which in equilibrium is anticipated by the households.

%We therefore draw our attention to the set of "promised" marginal utilities that can be delivered in a competitive equilibrium.

For any $h_{0} = (\phi_{-1},B_{0},g_{0},\delta_{0}) \in \mathbb{H}$ and $\phi \in \{0,1\}$, let $\Omega(h_{0},\phi)$ be the set of all marginal
utility values $(\mu)$ and lifetime utilities ($v$) that can be sustained
in a competitive equilibrium, wherein the default authority reacts
optimally from next period on; see appendix \ref{app:govt} for the formal expression of $\Omega(h_{0},\phi)$. This set differs from the standard set of equilibrium promised marginal utilities in
\cite{KYDLAND_PRESCOTT_JEDC80} along some dimensions. In particular,
in  an standard Ramsey problem it would suffice to only specify the set of
promised marginal utilities, but in our framework with endogenous
default decisions we find it necessary to also specify continuation values to
evaluate alternative courses of action of the default authority. By
the same token, we compute this set for any $\phi$, even for the value
of $\phi$ not optimally chosen by the default authority through its
policy action.

For any $(g,B,\mu) \in \mathbb{G} \times \mathbb{B} \times \mathbb{R}_{+}$, let $V^{\ast}_{1}(g,B,\mu)$ be the value of a fiscal authority that had access to financial markets last period and continue to have it this current period (i.e., $\phi_{-1} = \phi = 1$) and that takes as given the optimal behavior of the default and subsequent fiscal authorities, with outstanding debt $B$ and a promised marginal utility of $\mu$ and government expenditure $g$. Similarly, let $V^{\ast}_{0}(g,B)$ be the value of a fiscal authority that does not have access to financial markets (i.e., $\phi=0$). % and has an outstanding defaulted debt $B$ and government expenditure $g$.
Observe that since in financial autarky the government ought to run a balanced budget, $V^{\ast}_{0}$ does not depend on $\mu$.

Finally, let $\overline{V}^{\ast}_{1}(g,\delta B)$ be the value
 of a ``new'' fiscal authority (i.e., when $\phi_{-1} = 0$ and $\phi=1$) that takes as given the optimal behavior of the default and subsequent fiscal authorities, when an offer $\delta$ is accepted, government spending equals $g$ and the outstanding defaulted debt is $B$. In this case the fiscal authority does not have any outstanding ``promised'' marginal utility and thus it sets the current marginal utility at its convenience. Thus, by definition of $\Omega$,
\begin{align}
  \overline{V}^{\ast}_{1}(g, \delta B) = \max \{ v | (\mu,v) \in \Omega(0,B,g,\delta ,1) \},
\end{align}
%as the
%government maximizes the households' utility without any attached
%promise of marginal utility to be delivered.
and let $\overline{\mu}(g,\delta,B)=\{\mu|(\mu,\overline{V}^{\ast}_{1}(g,\delta B)) \in \Omega(0,B,g,\delta,1)\}$ be the associated marginal utility.

Given the aforementioned value functions, the optimal policy functions of the default authority in expressions \ref{eqn:PF-d1} and \ref{eqn:PF-a1} become\footnote{As indicated before, by assumption, $\mathbf{d}^{\ast}(g,B,\mu)=1$ if $\phi_{-1}=0$ and $\mathbf{a}^{\ast}(g,\delta,B) =0$ if $\phi_{-1}=1$ or $\delta = \bar{\delta}$.}
\begin{align} \label{eqn:PF-d}
  \mathbf{d}^{\ast}(g,B,\mu) = \left\{
    \begin{array}{cc}
      0 & if~ V^{\ast}_{1}(g,B,\mu) \geq V^{\ast}_{0}(g,B) \\
      %\text{[0,1]} & if ~ V^{\ast}_{1}(g,B,\mu) = V^{\ast}_{0}(g,B)\\
      1 & if ~ V^{\ast}_{1}(g,B,\mu) < V^{\ast}_{0}(g,B)
    \end{array}
\right.
\end{align}
and
\begin{align}\label{eqn:PF-a}
  \mathbf{a}^{\ast}(g,\delta,B) = \left\{
    \begin{array}{cc}
      1 & if~ \overline{V}^{\ast}_{1}(g,\delta B) \geq V^{\ast}_{0}(g,B) \\
       %\text{[0,1]} & if ~ V^{\ast}_{1}(g,\delta B,m_{0}(g,\delta B)) = V^{\ast}_{0}(g,B)\\
  0 & if ~ \overline{V}^{\ast}_{1}(g,\delta B) < V^{\ast}_{0}(g,B)
    \end{array}
\right.
\end{align}

The next theorem presents a recursive formulation for the value
functions. In what follows, we denote the marginal utility of consumption in financial autarky as $m_{A}(g)$ for all $g \in \mathbb{G}$.\footnote{Formally, $m_{A}(g)=u_{c}(\kappa \mathbf{n}^{\ast}_{0}(g)-g,1-\mathbf{n}^{\ast}_{0}(g))$ where $\mathbf{n}^{\ast}_{0}(g) = \arg\max_{n \in [0,1]} \{ u(\kappa n-g,1-n) :  z(\kappa,n,g) = 0\}$. See appendix \ref{app:govt} for details.}

% Let $\mathbf{n}^{\ast}_{0}(\cdot)$ be the labor profile which yields a balanced budget and maximizes the
% per-period payoff. Formally: for all $ g \in \mathbb{G}$,
% \begin{align*}
%   \mathbf{n}^{\ast}_{0}(g) = \arg\max_{n \in [0,1]} \{ u(\kappa n-g,1-n) :  z(\kappa,n,g) = 0\}.
% \end{align*}
% The lemma \ref{lem:charac-z}(1) ensures that $\mathbf{n}^{\ast}_{0}(g)$ exists and is
% unique for all $g$.

\begin{theorem}%[{thm:VF-rec}]
\label{thm:VF-rec}
The value functions $V^{\ast}_{0}$ and $V^{\ast}_{1}$ satisfy the following recursions: For any $(g,B,\mu)$,
  \begin{align}
    V^{\ast}_{1}(g,B,\mu) = \max_{(n,B',\mu'(\cdot)) \in \Gamma(g,B,\mu)} \left\{ u(n-g,1-n) + \beta \int_{\mathbb{G}} \max\{ V^{\ast}_{1}(g',B',\mu'(g')) , V^{\ast}_{0}(g',B') \}  \pi_{\mathbb{G}}(dg'|g)  \right\} ,
  \end{align}
and
\begin{align}
   V^{\ast}_{0}(g,B) =&   u(\kappa \mathbf{n}^{\ast}_{0}(g)-g,1-\mathbf{n}^{\ast}_{0}(g)) + \beta \lambda \int_{\mathbb{G}} \int_{\Delta} \max\{ \overline{V}^{\ast}_{1}(g', \delta' B) , V^{\ast}_{0}(g',B) \}  \pi_{\Delta}(d\delta') \pi_{\mathbb{G}}(dg'|g) \notag \\
 & + \beta (1-\lambda) \int_{\mathbb{G}} V^{\ast}_{0}(g',B) \pi_{\mathbb{G}}(dg'|g)
\end{align}
where,
\begin{align}\notag
  \Gamma(g,B,\mu) = & \left\{ (n,B',\mu'(\cdot)) \in [0,1] \times \mathbb{B} \times \mathbb{R}^{|\mathbb{G}|} :     \right.\\ \notag
& \left. (B',\mu'(g'),V^{\ast}_{1}(g',B',\mu'(g')) \in
  Graph(\Omega(1,\cdot, g',1,1)),~\forall g' \in \mathbb{G} \right.\\
 & \left. \mu = u_{c}(n-g,1-n)~and~z(1,n,g)\mu + \mathcal{P}^{\ast}_{1}(g,B',\mu'(\cdot))B' - B \mu \geq 0  \right\}
\end{align}
and, for any $(B',\mu'(\cdot))$,
\begin{align*}
\mathcal{P}^{\ast}_{1}(g,B',\mu'(\cdot)) =& \beta \int_{\mathbb{G}} \left( (1-\mathbf{d}^{\ast}(g',B',\mu'(g')))\mu'(g')  + \mathbf{d}^{\ast}(g',B',\mu'(g')) m_{A}(g') \mathcal{P}^{\ast}_{0}(g',B')\right)  \pi_{\mathbb{G}}(dg'|g)\\
  \mathcal{P}^{\ast}_{0}(g,B') = &  \beta \int_{\mathbb{G}} \left(  \int_{\Delta} \overline{\mu}(g^{\prime},\delta^{\prime},B') \delta'  \mathbf{a}^{\ast}(g',\delta',B') \pi_{\Delta}(d\delta') +  \pi^{\ast}_{A}(g',B') m_{A}(g') \mathcal{P}^{\ast}_{0}(g',B') \right) \pi_{\mathbb{G}}(dg'|g)
\end{align*}
where $\pi^{\ast}_{A}(g,B) \equiv \left\{ (1-\lambda) + \lambda \int_{\Delta} (1- \mathbf{a}^{\ast}(g ,\delta ,B)) \pi_{\Delta}(d\delta)    \right\}$.
\end{theorem}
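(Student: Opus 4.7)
The plan is to establish the recursions by combining the primal characterization of competitive equilibria given in Theorem \ref{thm:charac-CEG} with a one-step Bellman decomposition of the sequential definition of $V^{\ast}_{1}$ and $V^{\ast}_{0}$. Since the default authority's optimal reaction is already pinned down by \eqref{eqn:PF-d} and \eqref{eqn:PF-a}, what remains is to show that the fiscal authority's (committing) optimization admits a recursive representation with the promised marginal utility $\mu$ as the relevant co-state variable, in the spirit of \cite{KYDLAND_PRESCOTT_JEDC80}.

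For $V^{\ast}_{1}$, I would first write the lifetime payoff in \eqref{eq:8} as today's utility $u(n-g,1-n)$ (using $\kappa_{t}=1$ since $\phi_{t}=1$) plus $\beta$ times the expectation over $g'$ of the continuation value. Invoking the default authority's optimal rule at $t+1$, the continuation equals $\max\{V^{\ast}_{1}(g',B',\mu'(g')),V^{\ast}_{0}(g',B')\}$, where $\mu'(g')$ is the marginal utility promised in state $g'$. The constraint set $\Gamma(g,B,\mu)$ should then collect exactly the restrictions that competitive equilibrium plus commitment to $\mu$ impose on $(n,B',\mu'(\cdot))$: (i) the promise-keeping identity $\mu=u_{c}(n-g,1-n)$; (ii) the primal implementability constraint \eqref{eqn:IC-time-3} specialized to $\phi_{t}=1$, $\delta_{t}=1$, $\kappa_{t}=1$, which after multiplying through by $u_{c}$ becomes $z(1,n,g)\mu + p\,u_{c}\,B' - \mu B \ge 0$; and (iii) the sustainability requirement that each continuation triple $(B',\mu'(g'),V^{\ast}_{1}(g',B',\mu'(g')))$ belong to $\mathrm{Graph}(\Omega(1,\cdot,g',1,1))$.

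The pivotal step is to substitute out the equilibrium price $p$ in (ii) by $\mathcal{P}^{\ast}_{1}(g,B',\mu'(\cdot))$. I would plug the household Euler equation \eqref{eqn:price-1} into $p\,u_{c}$: on the no-default branch the next-period marginal utility is exactly the promised $\mu'(g')$, while on the default branch it equals $m_{A}(g')$ because the autarky balanced budget pins down $(c_{t+1},n_{t+1})$ as functions of $g_{t+1}$ alone. Iterating \eqref{eqn:price-2} then yields the $\mathcal{P}^{\ast}_{0}$ recursion, with the acceptance branch contributing $\overline{\mu}(g',\delta',B')\,\delta'\,\mathbf{a}^{\ast}(g',\delta',B')$ (the marginal utility delivered by the fresh fiscal administration times the repaid fraction) and the stay-in-autarky branch contributing $\pi^{\ast}_{A}(g',B')\,m_{A}(g')\,\mathcal{P}^{\ast}_{0}(g',B')$. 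Interlocking these two recursions correctly, and keeping track of which marginal utility enters at each node of Figure \ref{fig:timing}, is the most delicate bookkeeping in the proof.

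The recursion for $V^{\ast}_{0}$ is analogous but simpler: in autarky the balanced budget uniquely selects $n=\mathbf{n}^{\ast}_{0}(g)$, so today's utility is deterministic given $g$, and the continuation splits according to Assumption \ref{ass:prob_iid}. With probability $1-\lambda$ no offer arrives and the state persists at $(g',B,0)$ with value $V^{\ast}_{0}(g',B)$; with probability $\lambda$ an offer $\delta'\sim\pi_{\Delta}$ arrives and the acceptance rule $\mathbf{a}^{\ast}$ produces the $\max\{\overline{V}^{\ast}_{1}(g',\delta' B),V^{\ast}_{0}(g',B)\}$ term, with $\overline{V}^{\ast}_{1}$ defined as the best sustainable continuation value in the relevant slice of $\Omega$. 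The main conceptual obstacle in both cases is verifying that the sup in the sequential problem coincides with the max in the Bellman recursion---equivalently, that the sustainability correspondence $\Omega(\cdot)$ is preserved by the operator---which I would handle by a standard one-shot deviation argument exploiting compactness of $\mathbb{B}$, finiteness of $\mathbb{G}$ and $\Delta$, and the continuity of $u$ from Assumption \ref{ass:U_prop}.
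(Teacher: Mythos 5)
Your proposal is correct and follows essentially the same route as the paper: the paper also runs a principle-of-optimality argument on the sequential strategy problem, with your ``pivotal step'' of substituting the household Euler equation into $p_t u_c$ (promised $\mu'(g')$ on the repayment branch, $m_A(g')$ times the secondary-market price on the default branch) carried out in its Lemma \ref{lem:rec-S}, and your ``one-shot deviation'' closure implemented by pasting an arbitrary admissible first-period choice onto the optimal continuation strategy. The bookkeeping you flag as delicate is exactly where the paper spends its effort, so no gap.
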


Below we present some particular cases of special interest in order to illustrate the objects in the previous theorem.

\begin{example}[non-defaultable debt]
Consider an economy with (ad hoc) risk-free debt. The value function $V^{\ast}_{0}$ is irrelevant and $V^{\ast}_{1}$ boils down to
\begin{align*}
  V^{\ast}_{1}(g,B,\mu) = \max_{(n,B',\mu'(\cdot)) \in \Gamma(g,B,\mu)} \left\{ u(n-g,1-n) + \beta \int_{\mathbb{G}}  V^{\ast}_{1}(g',B', \mu'(g')) \pi_{\mathbb{G}}(dg'|g)   \right\}
\end{align*}
where
\begin{align*}
   \Gamma(g,B,\mu)  =  \left\{
    \begin{array}{l}
      (n,B',\mu'(\cdot)) : z(1,n,g) \mu + \beta E_{\pi_{\mathbb{G}}(\cdot|g)}[\mu'(g')] B' - B \mu \geq 0 \\
 ~where~\mu = u_{c}(n - g, 1-n)
    \end{array}
\right\}.
\end{align*}
In addition, since there is no ``re-setting'' time, $\bar{V}^{\ast}_{1}$ coincides with  the value function at time 0 with $\mu$ chosen optimally. This case is precisely the type of model studied in \cite{AMSS_JPE02}.$\square$
\end{example}

The following example is analogous to the models studied by \cite{ARELLANO_AER08} and \cite{AG_JIE06} among others, but with a ``non-standard" per-period utility that reflects the distortive nature of the labor tax.

\begin{example}[quasi-linear per-period payoff, $\lambda \geq 0$, and $\pi_{\Delta}=\mathbf{1}_{\{0\}}$] \label{exa:quasi}
 Assume that $u(c,1-n) = c + H(1-n)$ for some function $H$ consistent
 with assumption \ref{ass:U_prop}. Under this assumption, $\mu$ can be dropped
 as a state variable since $u_{c}=1$ and thus it does not affect the pricing equation. In this case, the value during financial autarky is given by
\begin{align*}
  V^{\ast}_{0}(g) = & \kappa \textbf{n}^{\ast}_{0}(g) - g  + H(1- \textbf{n}^{\ast}_{0}(g))  + \beta \int _{\mathbb{G}} \left( \lambda V^{\ast}_{1}(g',0) + (1-\lambda) V^{\ast}_{0}(g')\right) \pi_{\mathbb{G}}(dg' |g ).
\end{align*}
Note that there is no need to keep
the debt $B$ as part of the state during financial autarky since
none of the defaulted debt is ever repaid, and all the offers of zero repayment are accepted by the government. The value during financial access is given by
\begin{align*}
  V^{\ast}_{1} (g,B) = \max_{(n,B') \in \Gamma(g,B)} \left\{ n-g + H(1-n) + \beta \int_{\mathbb{G}}  \max\{ V^{\ast}_{1}(g',B') , V^{\ast}_{0}(g') \} \pi_{\mathbb{G}}(dg'|g) \right\},
\end{align*}
where $\Gamma(g,B) \equiv \{ (n,B')~:~z(1,n,g) + \beta E_{\pi_{\mathbb{G}}(\cdot|g)}[\mathbf{1}_{ \{ g' :  V^{\ast}_{1}(g',B') \geq  V^{\ast}_{0}(g') \} }(g')]B' - B \geq 0 \}$.

Moreover, assuming $H'(1) <1$ and $2 H''(l) < H'''(l)(1-l)$,  we can view the
government problem as
directly choosing tax revenues $R$ with a per-period payoff given
by $W_{\kappa}(R) = \kappa n_{\kappa}(R) + H(1-n_{\kappa}(R))$ where $n_{\kappa}(R)$ is the
amount of labor needed to collect revenues equal to $R$, given
$\kappa$. Under our assumptions, $W_{\kappa}$ is non-increasing and
concave function. The Bellman equation of the value of repayment is given by
\begin{align}
  \label{eq:1}
  V^{\ast}_{1} (g,B) = \max_{(R,B')} \left\{ W_{1}(R) -g + \beta \int_{\mathbb{G}}  \max\{ V^{\ast}_{1}(g',B') , V^{\ast}_{0}(g') \} \pi_{\mathbb{G}}(dg'|g) \right\},
\end{align}
subject to $R + \beta E_{\pi_{\mathbb{G}}(\cdot|g)}[\mathbf{1}_{ \{ g' :  V^{\ast}_{1}(g',B') \geq  V^{\ast}_{0}(g') \} }(g') ]B' \geq g + B$, and
\begin{align}
  \label{eq:2}
   V^{\ast}_{0}(g) = & W_{\kappa}(g) -g + \beta \int _{\mathbb{G}} \left( \lambda V^{\ast}_{1}(g',0) + (1-\lambda) V^{\ast}_{0}(g')\right) \pi_{\mathbb{G}}(dg' |g ).
\end{align}

This problem is analogous to that studied in
\cite{ARELLANO_AER08} and \cite{AG_JIE06} among others, where the
government chooses how much to ``consume'', captured by $-R$, given an exogenous
process of ``income'', $-g$. An important difference, however, is the non-standard per-period payoff which
% reflects the distortive nature of labor
%taxes. In particular in our model the per-period payoff
has a satiation
point at $R=0$ (i.e., zero distortive taxes).\footnote{Another subtle difference with
  the standard sovereign default literature is that while in our economy
  government and bondholders share the same preference, in this literature
 they do not. In particular, the government tends to be more impatient than (foreign) investors, thus
  bringing about incentives to front-load consumption through borrowing.}
 % In particular, the
% characterization of the optimal government policy when we allow for
% partial repayment, will shed light not only on how default affects
% optimal fiscal policy, but also on the
% pricing implications of debt repayments before and during financial
% autarky. By our observation, this last result can be readily extrapolated to sovereign default open economy
% models like those in \cite{ARELLANO_AER08} and \cite{AG_JIE06}.
$\square$
% This is example is closely related to the models by \cite{ARELLANO_AER08} and \cite{AG_JIE06}; the main difference is that these models (implicitly) assume lump-sum taxes and thus no use of distortionary taxes. Also, since defaulted debt has no value, the prices is just the discounted probability of no default next period.
\end{example}

	In the previous example, the expression for the price function, $E_{\pi_{\mathbb{G}}(\cdot|g)}[\mathbf{1}_{ \{ g' :  V^{\ast}_{1}(g',\cdot) \geq  V^{\ast}_{0}(g') \} }(g') ]$, highlights an important
	difference between our default model and a model with risk-free debt
	such as \cite{AMSS_JPE02} (henceforth, AMSS). Since $u_{c}=1$, the market
	stochastic discount factor is equal to $\beta$, and thus in the AMSS model the
	government cannot manipulate the return of the discount bond. In our
	economy, however, the government is still able to manipulate the return of the discount bond by altering its payoff through the decision of default.
	
%	Finally, by extending the previous example to cases with repayment, we think it allows us
%	to extend some of our results to general sovereign debt models with endogenous default, especially those results regarding the impact of the debt restructuring in debt prices (both, before and during financial autarky).

\section{Analytical Results}
\label{sec:bench}

We present analytical results for a benchmark
model characterized by quasi-linear per-period utility and i.i.d. government expenditure shocks. % and stochastically ordered process for $g$.\footnote{By stochastically ordered, we mean that the transition probability of $g$ satisfies that for any $g_{1} \leq g_{2}$, $\pi_{\mathbb{G}}(\cdot|g_{1}) \leq_{FOSD} \pi_{\mathbb{G}}(.|g_{2})$; where, for two probability measures $P$ and $Q$, $P \leq_{FOSD} Q$ means that the corresponding cdf, $F_{P}(X\leq t) \geq F_{Q}(Y\leq t)$ for any $t$, where $F_{P}$ ($F_{Q}$) is the cdf associated to the probability measure $P$ (Q).}
The proofs for the results are gathered in appendix \ref{app:bench}.

\begin{assumption}\label{ass:linear_c}
  (i) $\kappa = 1$; (ii) $u(c,n) = c + H(1-n)$ where $H \in \mathbb{C}^{2}((0,1),\mathbb{R})$ with $H'(0) = \infty$, $H'(l) > 0$, $H'(1)<1$, $H''(l) < 0$ and $2 H''(l) < H'''(l)(1-l)$
\end{assumption}

Part (i) implies that there are no direct cost of defaults in terms of output. Part (ii) of this assumption imposes that the per-period utility of the households is quasi-linear and it is analogous to assumption in p. 10 in AMSS. As noted above, under this assumption, $\mu$ can be dropped as a state variable. This implies that the value functions $V^{\ast}_{0}$, $V^{\ast}_{1}$ are only functions of $(g,B)$ and the same holds true for the optimal policy functions. We also assume that government expenditure are i.i.d., formally
\begin{assumption}\label{ass:g-iid}
 For any $g' \ne g$, $\pi_{\mathbb{G}}(\cdot|g) = \pi_{\mathbb{G}}(\cdot|g')$.
\end{assumption}

With a slight abuse of notation and to simplify the exposition we use $\pi_{\mathbb{G}}(\cdot)$ to denote the probability measure of $g$. Finally, to further simplify the technical details, we assume that
$\mathbb{B}$ has only finitely many points, unless stated otherwise.\footnote{This assumption is made for simplicity. It can be relaxed to allow for general compact subsets, but some of the arguments in the proofs will have to be changed slightly. Also, $\mathbb{B} \equiv \{B_{1},...,B_{|\mathbb{B}|}\}$ is only imposed for the government; the households can still choose from convex sets; only in equilibrium we impose $\{B_{1},...,B_{|\mathbb{B}|}\}$.} For the rest of the section, assumptions \ref{ass:linear_c} and \ref{ass:g-iid} hold and will not be referenced explicitly.

\subsection{Characterization of Optimal Default Decisions}

The next proposition characterizes the optimal decisions to default and
to accept offers to repay the defaulted debt as ``threshold
decisions''. These results extends those in \cite{ARELLANO_AER08} to our setting, where, among other things, we allow for partial repayments of government debt. Recall that $\mathbf{d}^{\ast}(g,B)$ and
$\mathbf{a}^{\ast}(g,\delta,B)$ are the optimal decision of default
and of renegotiation, respectively, given the state $(g,\delta,B)$.

\begin{proposition}%[{pro:opt-dec}]
\label{pro:opt-dec}
  There exists $\bar{\lambda}$ such that for all $\lambda \in [0, \bar{\lambda}]$, the following holds:
  \begin{enumerate}
  \item There exists a $\hat{\delta} : \mathbb{G} \times \mathbb{B} \rightarrow \Delta$ such that $\mathbf{a}^{\ast}(g,\delta,B) = \mathbf{1}_{\{\delta :  \delta \leq \hat{\delta}(g,B)\}}(\delta)$ and $\hat{\delta}$ non-increasing as a function of $B$.
  \item There exists a $\bar{g} : \mathbb{B}
    \rightarrow \mathbb{G}$ such that $\mathbf{d}^{\ast}(g,B) =
    \mathbf{1}_{\{ g :  g \geq \bar{g}(B) \}}(g)$ and $\bar{g}$
    non-increasing for all $B > 0$.
%If, in addition, for any $g_{1} \leq g_{2}$, $\pi_{\mathbb{G}}(\cdot|g_{1}) \leq_{FOSD} \pi_{\mathbb{G}}(.|g_{2})$, there exists a $\bar{g} : \mathbb{B} \rightarrow \mathbb{G}$ such that $\mathbf{d}^{\ast}(g,B) = \mathbf{1}_{\{ g :  g \geq \bar{g}(B) \}}(g)$ and $\bar{g}$ non-increasing. % \footnote{Given $g_{1}$ and $g_{2}$, $\pi_{\mathbb{G}}(\cdot|g_{1}) \leq_{FOSD} \pi_{\mathbb{G}}(.|g_{2})$ is notation for $F_{\pi_{\mathbb{G}(\cdot|g_{1})}} \geq F_{\pi_{\mathbb{G}(\cdot|g_{2})}}$, were $F_{\pi_{\mathbb{G}(\cdot|g)}}$ denotes the cdf associated to the probability measure $\pi_{\mathbb{G}}(\cdot|g)$.}
  \end{enumerate}
\end{proposition}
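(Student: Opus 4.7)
The plan is to exploit the simplified structure under Assumptions~\ref{ass:linear_c} and~\ref{ass:g-iid}: quasi-linearity of $u$ allows us to drop the promised marginal utility $\mu$ as a state variable (as noted in Example~\ref{exa:quasi}), so $V^{\ast}_{1}(g,B)$, $\overline{V}^{\ast}_{1}(g,C)$, and $V^{\ast}_{0}(g,B)$ are functions of $g$ and the relevant debt level only; and the i.i.d.\ assumption on $g$ implies that the continuation integrals in the Bellman equations do not depend on the current $g$. The core of the argument is to establish the appropriate monotonicity of the differences $\Psi(g,\delta,B) \equiv \overline{V}^{\ast}_{1}(g,\delta B) - V^{\ast}_{0}(g,B)$ and $\Phi(g,B) \equiv V^{\ast}_{1}(g,B) - V^{\ast}_{0}(g,B)$ in each of their arguments, and then read off the threshold structure and the monotonicity of the thresholds from (\ref{eqn:PF-d}) and (\ref{eqn:PF-a}).

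For part~1, I would first show that $\overline{V}^{\ast}_{1}(g,C)$ is non-increasing in $C$: as $C$ grows, the budget constraint in the feasibility correspondence $\Gamma$ tightens, so the feasible set weakly shrinks and the maximum weakly decreases. This immediately gives $\Psi(g,\delta,B)$ non-increasing in $\delta$ for $B>0$, which delivers $\hat\delta(g,B) = \max\{\delta \in \Delta : \overline{V}^{\ast}_{1}(g,\delta B)\ge V^{\ast}_{0}(g,B)\}$ and the threshold characterization. For the monotonicity of $\hat\delta$ in $B$, the key observation is that $V^{\ast}_{0}(g,B)$ depends on $B$ only through the $\lambda$-weighted term involving $\overline{V}^{\ast}_{1}(g',\delta' B)$ inside the Bellman recursion. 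A contraction argument applied to the Bellman operator for $V^{\ast}_{0}$ bounds its $B$-variation by a constant times $\lambda$ times the $B$-variation of $\overline{V}^{\ast}_{1}$. Since $\overline{V}^{\ast}_{1}(g,\delta B)$ strictly decreases in $B$ at a rate of order $\delta$, there exists $\bar\lambda>0$ such that, for $\lambda\le\bar\lambda$, $\Psi(g,\delta,\cdot)$ is non-increasing in $B$ whenever $\delta>0$. This forces the set $\{\delta : \Psi(g,\delta,B)\ge 0\}$ to shrink in $B$, delivering $\hat\delta$ non-increasing in $B$.

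For part~2, the same feasibility argument gives $V^{\ast}_{1}(g,B)$ non-increasing in $B$, and the $O(\lambda)$-bound on the $B$-variation of $V^{\ast}_{0}$ then implies $\Phi(g,B)$ is non-increasing in $B$ for $\lambda\le\bar\lambda$; hence the default region $\{g : \Phi(g,B)<0\}$ grows with $B$, yielding the monotonicity of $\bar g$. The more delicate step is the threshold characterization in $g$, i.e., that $\Phi(g,B)$ is non-increasing in $g$. Here the i.i.d.\ plus quasi-linear structure is essential: the continuation integral $\int\max\{V^{\ast}_{1}(g',B'), V^{\ast}_{0}(g',B')\}\pi_{\mathbb{G}}(dg')$ is independent of $g$, so the $g$-dependence of $V^{\ast}_{1}$ and $V^{\ast}_{0}$ is confined to the current-period payoff $W_{1}(R)-g$. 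An envelope computation yields $\partial_g V^{\ast}_{1}(g,B) - \partial_g V^{\ast}_{0}(g,B) = W'_{1}(R^{\ast}_{1}(g,B)) - W'_{1}(g)$, where $R^{\ast}_{1}(g,B)$ is the optimal tax revenue under repayment. Concavity and monotonicity of $W_{1}$ then give $W'_{1}(R^{\ast}_{1})\le W'_{1}(g)$ provided $R^{\ast}_{1}(g,B)\ge g$, which holds near the default boundary because endogenous credit limits (the Laffer-type ceiling on bond proceeds) prevent the government from financing a net outflow of funds there. This gives $\partial_g\Phi\le 0$ and the desired threshold $\bar g(B)$.

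The hardest step will be justifying $R^{\ast}_{1}(g,B)\ge g$ at the relevant states in the envelope argument: one must rule out that the government optimally expands debt aggressively ($B'>B/q$) at the margin of default. My argument leans on the Laffer-curve property of bond proceeds used elsewhere in the paper and on the fact that, in a neighbourhood of the default threshold, $q$ drops sharply and caps net borrowing. An alternative route is to bypass the envelope calculation entirely by establishing a Topkis-type submodularity of the Bellman operator in $(g,-B)$ and invoking a single-crossing argument along the value-function iteration.
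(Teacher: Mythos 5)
Your overall route coincides with the paper's: both parts rest on (i) monotonicity of $V^{\ast}_{1}$ in its debt argument (lemma \ref{lem:V_inc}), (ii) an $O(\lambda)$ bound on the $B$-variation of $V^{\ast}_{0}$ (lemma \ref{lem:Vbar_slope}), and (iii) for the $g$-threshold, a comparison of current-period payoffs that hinges on the government being unable to raise net funds at states where it defaults. Your envelope identity $\partial_{g}V^{\ast}_{1}-\partial_{g}V^{\ast}_{0}=W_{1}'(R^{\ast}_{1}(g,B))-W_{1}'(g)$ is the differential analogue of the paper's discrete perturbation (feasibility of $(\tilde{n},\mathbf{B}^{\ast}(g_{2},B))$ at $(g_{1},B)$ together with concavity of $z$ and of $r(n)=n+H(1-n)$), and it correctly isolates $R^{\ast}_{1}(g,B)\geq g$ as the crux. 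Two steps, however, are genuinely incomplete.

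First, in both parts you pass from the $O(\lambda)$ bound to the claim that the differences $\Psi(g,\delta,\cdot)$ and $\Phi(g,\cdot)$ are \emph{monotone} in $B$, citing a ``rate of decrease of order $\delta$'' for $\overline{V}^{\ast}_{1}$. On the finite grid $\mathbb{B}$ the value functions are only weakly decreasing and no such rate exists, so monotonicity of a difference of two weakly decreasing functions does not follow. What is true---and all that is needed---is sign preservation: at any state where the offer is rejected (or default occurs) the gap $V^{\ast}_{0}-V^{\ast}_{1}$ is bounded below by some $\epsilon>0$ \emph{uniformly} over the finite set $\mathbb{G}\times\mathbb{B}\times\Delta$, and lemma \ref{lem:Vbar_slope} makes the $B$-variation of $V^{\ast}_{0}$ smaller than $\epsilon$ once $\lambda\leq\bar{\lambda}$; combined with lemma \ref{lem:V_inc} this preserves the rejection/default decision at higher $B$. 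The finiteness of the state space is essential for obtaining a uniform $\bar{\lambda}$ and is absent from your argument. Second, you leave $R^{\ast}_{1}(g,B)\geq g$ at default states unproven, gesturing at the Laffer curve; but the Laffer curve only caps $\mathcal{P}^{\ast}_{1}(B')B'$ at some finite maximum, which could still exceed $B$. The paper establishes this as lemma \ref{lem:roll_over1} by revealed preference: if some $B'$ gave $\mathcal{P}^{\ast}_{1}(B')B'>B$, then for $\lambda$ small repaying and rolling over would strictly dominate autarky, contradicting $\mathbf{d}^{\ast}(g,B)=1$; that lemma itself relies on the same uniform-$\epsilon$, small-$\lambda$ machinery. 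A minor further point: your envelope step differentiates in $g$, whereas the paper takes $\mathbb{G}$ finite; the finite-difference version of your comparison is exactly the paper's Step 2, so nothing is lost, but the argument should be phrased accordingly.
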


This result shows that for a (non-trivial) range of probabilities of
receiving outside offers, $\lambda \in [0,
\bar{\lambda}]$, default is more likely to occur for high levels of debt, and so are rejections of offers to exit financial autarky. The latter result implies that the average recovery rate, $E_{\pi_{\mathbb{G}}}[\int_{\delta' \in \Delta} \delta' \mathbf{1}_{\{\delta~:~ \delta \leq \hat{\delta}(g,B) \}}(\delta') \pi_{\Delta}(d\delta')]$,  is decreasing in the level of debt, as documented by \cite{YUE_WP07} in the data. It also follows that other things equal, higher debt levels are on average associated with longer financial autarky periods. %\footnote{The expected length of autarky --- given a defaulted debt of $B$ --- is given by $\frac{1}{E_{\pi_{\mathbb{G}}}[F_{\pi_{\Delta}}(\hat{\delta}(g,B))]}$ where $F_{\pi_{\Delta}}$ is the cdf corresponding to $\pi_{\Delta}$.}
 Thus, these two results imply a positive co-movement between the (observed) average haircut and the average length of financial autarky. This last fact is consistent with the data: See fact 3 in \cite{BENJAMIN_WRIGHT_WP09} and also \cite{Cruces_13} found a similar relationship for 180 sovereign debt restructuring cases of 68 countries between 1970 and 2010.\footnote{It is important to note, however, that we derived the implications by looking at \emph{exogenous} variations of the debt level; in the data this quantity is endogenous and, in particular, varies with $g$. This endogeneity issue taken into account in the numerical simulations.}

\subsection{Implications for Equilibrium Prices and Taxes}

We now study the implications of the above results on equilibrium
prices and taxes.\\

\textbf{Equilibrium prices and endogenous debt limits.}  Under assumption \ref{ass:g-iid} equilibrium prices do not depend on $g$, i.e., $ \mathcal{P}^{\ast}_{\phi}(\cdot) \equiv  \mathcal{P}^{\ast}_{\phi}(g,\cdot)$ for any $g \in \mathbb{B}$. By proposition \ref{pro:opt-dec} it follows that, for any $B' \in \mathbb{B}$,
\begin{align}\label{eqn:p-bench1}
  \mathcal{P}^{\ast}_{1}(B') = & \beta \int_{\mathbb{G}} \mathbf{1}_{\{ g' \leq \bar{g}(B') \}}(g')   \pi_{\mathbb{G}}(dg') + \left( \beta \int_{\mathbb{G}} \mathbf{1}_{\{ g' > \bar{g}(B') \}}(g')   \pi_{\mathbb{G}}(dg') \right) \mathcal{P}^{\ast}_{0}(B')
% 1\{ g' \leq \bar{g}(B_{t+1}) \}   \pi_{\mathbb{G}}(dg'|g_{t}) + \beta \lambda \int_{\mathbb{G}} 1\{ g' > \bar{g}(B_{t+1}) \}  D(g',B_{t+1}) \pi_{\mathbb{G}}(dg'|g_{t}) \\ \label{eqn:p-bench}
% &  + \beta  \int_{\mathbb{G}} 1\{ g' > \bar{g}(B_{t+1}) \}(\lambda \alpha(g',B_{t+1}) + (1-\lambda))  \} q(g',B_{t+1}) \pi_{\mathbb{G}}(dg'|g_{t}),
\end{align}
and\footnote{See lemma \ref{lem:T-q}(3) in the appendix for the derivation.}
% \begin{align} \label{eqn:q-bench}
%   \mathcal{P}^{\ast}_{0}(B) =  \beta \lambda \int_{\mathbb{G}}  \int_{\Delta} \mathbf{1}_{\{ \delta \leq \hat{\delta}(g',B) \}} \delta \pi_{\Delta}(d\delta) \pi_{\mathbb{G}}(dg') + \left( \beta  \int_{\mathbb{G}} \pi^{\ast}_{A}(g',B) \pi_{\mathbb{G}}(dg') \right) \mathcal{P}^{\ast}_{0}(B)
% \end{align}
%where $\pi^{\ast}_{A}(g,B) \equiv \lambda \int_{\Delta} \mathbf{1}_{\{ \delta > \hat{\delta}(g,B) \}} \pi_{\Delta}(d\delta) + (1-\lambda)  $.
\begin{align} \label{eqn:q-bench}
  \mathcal{P}^{\ast}_{0}(B) =  \frac{\beta \lambda   \int_{\Delta} \left( \int_{\mathbb{G}} \mathbf{1}_{\{\delta :  \delta \leq \hat{\delta}(g',B) \}}(\delta) \pi_{\mathbb{G}}(dg') \right) \delta \pi_{\Delta}(d\delta) }{1 - \beta + \beta\lambda \int_{\Delta} \int_{\mathbb{G}}  \mathbf{1}_{\{\delta :  \delta \leq \hat{\delta}(g',B) \}}(\delta)  \pi_{\mathbb{G}}(dg') \pi_{\Delta}(d\delta) }.
\end{align}

A key feature of endogenous default models is the existence of endogenous borrowing limits. A necessary condition for this result to hold is that, due to the possibility of default, equilibrium prices are non-increasing as a function of debt, thus implying a ``Laffer-type curve'' for the revenues coming from selling bonds. In an economy without debt repayment (e.g., $\pi_{\Delta} = \mathbf{1}_{\{0\}}$), it follows that $\mathcal{P}^{\ast}_{0}=0$ and $\mathcal{P}^{\ast}_{1}(B') =  \beta \int_{\mathbb{G}} \mathbf{1}_{\{g :  g \leq \bar{g}(B') \}}(g')   \pi_{\mathbb{G}}(dg') $ which is non-increasing in $B'$ by proposition \ref{pro:opt-dec}. Moreover, it takes value zero for sufficiently high $B'$. Therefore, there exists an endogenous debt limit, i.e., finite value of $B'$ that maximize the debt revenue $\mathcal{P}^{\ast}_{1}(B')B'$. In an economy where we allow for debt renegotiation, by inspection of equation \ref{eqn:p-bench1} and the fact that $\mathcal{P}^{\ast}_{0} \geq 0$, it is easy to see that, other things equal, the previous result is attenuated by
the presence of (potential) defaulted debt payments and secondary
markets.  %Although, for a general $\pi_{\Delta}$ is hard to further characterize $\mathcal{P}^{\ast}$ analytically,
The next proposition shows that for (non-random) repayment offers, the price is non-increasing on the level
of debt and there are endogenous borrowing limits. Hence, high levels of debt are associated with higher
return on debt, both before and during financial autarky.

\begin{proposition}%[{pro:preqn-mon0}]
\label{pro:preqn-mon0}
  Suppose $\pi_{\Delta}(\cdot) = \mathbf{1}_{\delta_{0}}(\cdot)$ for some $\delta_{0} \in [0,1]$. Then there exists a $\bar{\lambda} > 0$, such that for all $\lambda \in
  [0,\bar{\lambda}]$, $\mathcal{P}^{\ast}_{i}(\cdot)$ is
  non-increasing for $B>0$ and for $i=0,1$. %as a function of $B$ for  all $g \in \mathbb{G}$ and $i=0,1$.
% Suppose $\mathbf{d}^{\ast}$ and $\mathbf{a}^{\ast}$ are non-increasing
% in $B$, and $(\delta,B)$ respectively. Then $\mathcal{P}^{\ast}_{i}(g,B)$ is non-increasing as a function of
% $B$ for  all $g \in \mathbb{G}$ and $i=0,1$.
\end{proposition}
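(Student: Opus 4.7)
The plan is to leverage Proposition \ref{pro:opt-dec} combined with the closed-form expressions \ref{eqn:p-bench1} and \ref{eqn:q-bench}. I would take $\bar{\lambda}$ to be exactly the one provided by Proposition \ref{pro:opt-dec}, so that both the acceptance threshold $\hat{\delta}(g,\cdot)$ and the default threshold $\bar{g}(\cdot)$ are non-increasing in $B$. Given those two threshold monotonicities, the goal is to express each $\mathcal{P}^{\ast}_{i}$ as a non-decreasing function of a finite collection of quantities that are themselves non-increasing in $B$, so that the composition is non-increasing in $B$.

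I would first handle $\mathcal{P}^{\ast}_{0}$. Specializing \ref{eqn:q-bench} to $\pi_{\Delta} = \mathbf{1}_{\{\delta_{0}\}}$ yields
\begin{align*}
  \mathcal{P}^{\ast}_{0}(B) = \delta_{0} \, \frac{\beta \lambda F(B)}{1 - \beta + \beta \lambda F(B)}, \qquad F(B) \equiv \pi_{\mathbb{G}}(\{g : \hat{\delta}(g,B) \geq \delta_{0}\}).
\end{align*}
By Proposition \ref{pro:opt-dec}, $\hat{\delta}(g,\cdot)$ is non-increasing in $B$ for every $g$, so the set $\{g : \hat{\delta}(g,B) \geq \delta_{0}\}$ shrinks with $B$ and $F$ is non-increasing. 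A one-line derivative computation gives that $x \mapsto \beta\lambda x/(1-\beta+\beta\lambda x)$ has derivative $\beta\lambda(1-\beta)/(1-\beta+\beta\lambda x)^{2} \geq 0$, so it is non-decreasing on $[0,\infty)$. Composing these two facts shows that $\mathcal{P}^{\ast}_{0}$ is non-increasing in $B$. As a bonus, $\mathcal{P}^{\ast}_{0}(B) \leq \delta_{0} \leq 1$ holds trivially since $1-\beta>0$ implies $\beta\lambda F/(1-\beta+\beta\lambda F) \leq 1$, a fact I will use in the next step.

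For $\mathcal{P}^{\ast}_{1}$, equation \ref{eqn:p-bench1} reads
\begin{align*}
  \mathcal{P}^{\ast}_{1}(B') = \beta P(B') + \beta \bigl(1 - P(B')\bigr) \mathcal{P}^{\ast}_{0}(B'), \qquad P(B') \equiv \pi_{\mathbb{G}}(\{g' \leq \bar{g}(B')\}).
\end{align*}
Proposition \ref{pro:opt-dec} gives that $\bar{g}$ is non-increasing for $B'>0$, so $P$ is non-increasing. Writing the right-hand side as $f(P,q) = \beta P + \beta (1-P) q$ with $(P,q) = (P(B'), \mathcal{P}^{\ast}_{0}(B'))$, the partial derivatives are $\partial_{P} f = \beta(1-q) \geq 0$ (using $q \leq 1$ from the previous step) and $\partial_{q} f = \beta(1-P) \geq 0$. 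Thus $f$ is coordinate-wise non-decreasing in $(P,q)$, and since both coordinates are non-increasing in $B'$, so is $\mathcal{P}^{\ast}_{1}$.

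The main obstacle is not any of the algebra above, which is essentially direct, but rather the threshold monotonicity encoded in Proposition \ref{pro:opt-dec}; that is where the small-$\lambda$ restriction genuinely bites. No additional restriction beyond the $\bar{\lambda}$ supplied by that proposition is needed here, since the bound $\mathcal{P}^{\ast}_{0}(B) \leq 1$ holds for all $\lambda \in [0,1]$ automatically. Hence I can take the same $\bar{\lambda}$ as in Proposition \ref{pro:opt-dec}, concluding the proof.
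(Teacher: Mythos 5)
Your proposal is correct and follows essentially the same route as the paper: the same closed-form decomposition of $\mathcal{P}^{\ast}_{0}$ into $\delta_{0}$ times the non-decreasing map $x\mapsto \beta\lambda x/(1-\beta+\beta\lambda x)$ applied to the (non-increasing) acceptance probability, and the same two-step monotonicity argument for $\mathcal{P}^{\ast}_{1}$ (the paper phrases your coordinate-wise derivative check as a chain of two inequalities using $\bar{g}$ non-increasing with $\mathcal{P}^{\ast}_{0}<1$, then $\mathcal{P}^{\ast}_{0}$ non-increasing). The only cosmetic difference is that you bound $\mathcal{P}^{\ast}_{0}\leq\delta_{0}\leq 1$ directly from the closed form, where the paper cites its Lemma \ref{lem:T-q}(3).
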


% Once we impose additional restrictions on $\pi_{\Delta}$ and
% $\pi_{\mathbb{G}}$, we can relax the restrictions in $\lambda$, and the following result follows.

% %     \begin{theorem}\label{thm:preqn-mon}
% %       Suppose assumptions of theorem \ref{thm:opt-dec} hold. Suppose further that $\pi_{\Delta} = \mathbf{1}_{\delta_{0}}(\delta)$ and $g \sim iid \pi_{\mathbb{G}}$. Then: (1)
% %       \begin{align*}
% %         q(g,B) = q(B) = \frac{\beta \lambda \delta_{0} A(B) }{1 - \beta + \beta \lambda A(B) },~where~A(B) = \pi_{\mathbb{G}}(\{ g \colon \delta_{0} \leq \hat{\delta}(g,B) \}),
% %       \end{align*}
% % and is non-increasing as a function of $B$. (2) $p(g,B)$ is non-increasing as a function of $B$.
% %     \end{theorem}

%     \begin{corollary}\label{cor:preqn-mon}
%       Suppose further that $\pi_{\Delta}(\delta) =
%       \mathbf{1}_{\delta_{0}}(\delta)$ for any $\delta \in \Delta$, and $g \sim IID-\pi_{\mathbb{G}}$. Then:
%       \begin{align*}
%         \mathcal{P}^{\ast}_{0}(g,B) = \mathcal{P}^{\ast}_{0}(B) = \frac{\beta \lambda \delta_{0} \pi_{\mathbb{G}}(\{ g \colon \delta_{0} \leq \hat{\delta}(g,B) \}) }{1 - \beta + \beta \lambda \pi_{\mathbb{G}}(\{ g \colon \delta_{0} \leq \hat{\delta}(g,B) \}) },
%       \end{align*}
% and is non-increasing as a function of $B$.
%     \end{corollary}
This result is consistent with the evidence of the positive relationship between debt-to-output levels and default risk measures; see section \ref{sec:facts} in the supplementary material. In addition, the existence of endogenous borrowing limits implies that the ability to roll over high levels of debt is hindered. Since the primary surplus function $z(1,\cdot,g)$ is concave in $n$, as shown in lemma \ref{lem:charac-z} of appendix \ref{app:govt}, labor is more ``sensitive'' to fluctuations in government expenditure when the indebtedness level is high. This feature is consistent with the stylized fact that on average higher volatility of tax revenue-to-output ratios is observed when debt and default risk are high. We further explore this mechanism in the numerical simulations.\\

\textbf{Default risk and the law of motion of equilibrium
  taxes.} \label{sec:LOM} In order to analyze the ex-ante effect of default risk on
the law of motion of taxes, we look at the case $\lambda = 0$ (i.e.,
autarky is an absorbing state) to simplify the analysis. We also strengthen assumption \ref{ass:linear_c} by requiring that $H''(l) < H'''(l)(1-l)$. By proposition
\ref{pro:opt-dec}, the default decision is a threshold decision, so
for each history $\omega^{\infty} \in \Omega^{\infty}$ we can define
$T(\omega^{\infty}) = \inf\{ t : g_{t} \geq
\bar{g}(B_{t}(\omega^{t-1}))\}$ (it could be infinity) as the first
time the economy enters in default. For all $t \leq T(\omega^{\infty})$ the
economy is not in financial autarky, and the implementability constraint is given by
\begin{align*}
  B_{t}(\omega^{t-1})  + g_{t} \leq & \left(1- H'(1-n_{t}(\omega^{t}))   \right) n_{t}(\omega^{t})  + \mathcal{P}^{\ast}_{1}(B_{t+1}(\omega^{t})) B_{t+1}(\omega^{t}),
\end{align*}
where $\mathcal{P}^{\ast}_{1}(g_{t},B_{t+1}(\omega^{t})) \equiv E_{\pi_{\mathbb{G}}}[1-\mathbf{d}^{\ast}(g',B_{t+1}(\omega^{t}))]
$. Let $\nu_{t}(\omega^{t})$ be the Lagrange multiplier associated to this
restriction in the optimization problem of the government, given
$\omega^{t} \in \Omega^{t}$. In appendix \ref{app:LM-LoM} we derive the FONC of the government and provide a closed-form expression for $\nu_{t}(\omega^{t})$ as a \emph{decreasing} nonlinear function of $n_{t}(\omega^{t})$; see equation \ref{eqn:nu-labor}. Hence, as noted by AMSS, studying the law of motion of $\nu_{t}$ we can shed light on the law of motion of taxes.\footnote{Under our assumptions, $\tau_{t}$ are decreasing in $n_{t}(\omega^{t})$ which, in turn, implies a positive relationship between $\nu_{t}$ and $\tau_{t}$.}

%\footnote{The strengthening of assumption \ref{ass:linear_c} is only needed to show that $\nu_{t}(\omega^{t})$ is an decreasing function of $n_{t}(\omega^{t})$; the proof is provided in appendix \ref{app:LM-LoM}.}

% Here, we just note that $\nu_{t}$ is a non-linear function of $n_{t}$ (and thus taxes) and also if $\nu_{t}(\omega^{t}) = 0$, taxes are zero for $\omega^{t}$.

From the FONC of the government it follows (see appendix \ref{app:LM-LoM} for the derivation)\footnote{This derivation assumes that $\mathbb{B}$ is a convex set and $\pi_{\mathbb{G}}$ has a density with respect to the Lebesgue measure, so as to make sense of differentiation. It also assumes differentiability of $V^{\ast}$.}$^{,}$\footnote{The martingale property is also preserved if capital is added to the economy; see \cite{FARHI_WP07}.}
% \begin{align*}
%   u_{c}(n_{t} - g_{t}, 1 - n_{t}) - u_{l}(n_{t} - g_{t}, 1 - n_{t}) - \nu_{t} \frac{d A(n_{t},g_{t},B_{t})}{dn_{t}} = 0,
% \end{align*}
% where $A(n,g,B) \equiv \left(u_{c}(n-g,1-n) - u_{l}(n-g,1-n)   \right) n - (g + B)u_{c}(n-g,1-n)$. And
\begin{align}\label{eqn:LM-lom}
  \nu_{t}(\omega^{t}) \left( 1  + \frac{d\mathcal{P}^{\ast}_{1}(B_{t+1}(\omega^{t}))}{dB_{t+1}} \frac{B_{t+1}(\omega^{t})}{\mathcal{P}^{\ast}_{1}(B_{t+1}(\omega^{t}))} \right) = \int_{\mathbb{G}} \nu_{t+1}(\omega^{t},g') \frac{1\{ g' \leq \bar{g}(B_{t+1}(\omega^{t})) \}}{\int_{\mathbb{G}} 1\{ g' \leq \bar{g}(B_{t+1}(\omega^{t})) \} \pi_{\mathbb{G}}(dg') } \pi_{\mathbb{G}}(dg').
\end{align}

 Equation \ref{eqn:LM-lom} reflects the role of debt for tax-smoothing purposes and the trade-off the government is confronted with. The Lagrange multiplier associated with the implementability condition is constant in \cite{LS_JME83} and, thus, trivially a martingale. In \cite{AMSS_JPE02}, away from the asset limits, the Lagrange multiplier associated with the implementability condition is a martingale with respect to the probability measure $\pi_{\mathbb{G}}$; i.e., the government spreads over time the tax burden up to the point that $\nu_{t}$ is equal to $\nu_{t+1}$ in expected terms. Equation \ref{eqn:LM-lom} implies that in our economy this is not the case; the presence of default risk affects the law of motion of the Lagrange multiplier in two important ways. First, the expectation is computed
 under the so-called \emph{default-adjusted probability measure}, given by $\frac{1\{ \cdot \leq \bar{g}(B_{t+1}(\omega^{t})) \}}{\int_{\mathbb{G}} 1\{ g' \leq \bar{g}(B_{t+1}(\omega^{t})) \} \pi_{\mathbb{G}}(dg') } \pi_{\mathbb{G}}(\cdot)$. The default-adjusted probability measure is first-order dominated by $\pi_{\mathbb{G}}$, which reflects the fact that the option to default adds ``some'' degree of state-contingency to the payoff of the government debt; in particular it implies that only the states tomorrow in which there is repayment, are relevant  for the law of motion of $\nu_{t}$.
% \footnote{If default occurs, the link between multipliers today and tomorrow is disrupted and tax rates stop exhibiting persistence, as they are set to balance the budget inheriting the stochastic properties of the government spending process.}
  Moreover, to the extent that $\nu_{t}(\omega^{t},.)$ is increasing, the left-hand side of the equation is lower than the expectation of $\nu_{t+1}$ under $\pi_{\mathbb{G}}$. Second, $\nu_{t}(\omega^{t})$ in the left-hand side is multiplied by $\left( 1  + \frac{d\mathcal{P}^{\ast}_{1}(B_{t+1}(\omega^{t}))}{dB_{t+1}} \frac{B_{t+1}(\omega^{t})}{\mathcal{P}^{\ast}_{1}(B_{t+1}(\omega^{t}))} \right)$, which can be interpreted as the ``markup'' that the government has to pay for having the option to default. In the presence of default risk the markup is less than one and hence $\nu_{t}$ is higher than the expectation of $\nu_{t+1}$ under the default-adjusted probability measure. This reflects the fact that default risk entails higher borrowing costs and limited debt issuance which prevent the government from smoothing taxes completely, as it would be the case if $\nu_{t}$ were a Martingale, like in the risk-free debt economy.

 These two forces act in opposite directions, and is not clear which one will prevail. In order to shed more light on this issue, in section \ref{sec:num-simul} we explore quantitatively this trade-off by studying the impulse responses of $\nu_{t}$.

\section{Numerical Results}
\label{sec:num-simul}

Throughout this section, we run a battery of numerical exercises in order to assess the impact of endogenous default risk on fiscal policy and the overall economy's dynamics. We compare our findings with an economy in which the option to default is not present---precisely the
model considered in \cite{AMSS_JPE02}. We denote the variables
associated with this model with a (sub)superscript ``AMSS'';
variables associated to our economy are denoted with a
(sub)superscript ``ED'' (short for Economy with Default).

A natural question that arises in this context is what characteristics of an economy will prompt it to behave as the AMSS- or ED-type models prescribe, in particular when it comes to the propensity or willingness to honor debt contracts. Several reasons have been put forth to explain why some governments always repay while some others do not. One reason is attributed to factors unrelated to the model, such as political instability and polarization, which could result in lower discounting of future consumption by the incumbent ruling party and hence stronger incentives to default.\footnote{See for example \cite{CS_WP08} and \cite{Derasmo}. In a small open political economy where different political groups compete with each other for access to government resources, \cite{Amador} arrives to a different conclusion. The same arguments why political interactions lead to over-spending actually provide incentives not to repudiate sovereign debt contracts.} An alternative explanation, in line with our model, is that for AMSS-type economies, default is more costly because they are financially more integrated, and the inability to borrow from capital markets after a default could have a more severe impact on financing of the firms, thus lowering their productivity (in our model represented by a lower $\kappa$).\footnote{Along this line, in a general equilibrium setup \cite{MY_WP08} generates an endogenous loss of output resulting from the substitution of imported inputs by less-efficient domestic ones as firms' credit lines are cut during default episodes. Instead, in \cite{Sosa_Padilla} a financial disruption in the banking sector occurs after a default event, as banks' balance sheets deteriorate, causing a domestic credit crunch followed by an output drop.} A third explanation is connected to income inequality and re-distributional motives. High income dispersion across domestic households provides stronger incentives to default to the government, expropriate wealth and redistribute those resources to reduce inequality.\footnote{\cite{DERASMO_MENDOZA} and \cite{DGS_WP15} elaborate along this dimension.} Although we consider these issues important, we think they are out of the scope of the present paper and do not explore them herein. We take them as given and proceed to characterize the optimal government policies in this economic environment.

The main focus of the numerical results is to describe some features pointed out in the normative analysis, and not to replicate business cycle dynamics of any particular economy or historical default event. The numerical simulations show that our model can spawn recurrent default episodes and is able to generate considerable levels of ``debt intolerance'' and high volatility of taxes, compared to an analogous AMSS economy.
%, we will see that the dynamics of the our economy with default risk replicates qualitatively some interesting empirical facts observed in the times of the debt restructuring plan of the U.S. economy in the 18th century and in several emerging markets nowadays.\\

%xxxxxxxxxxxxxxxxxxxxxxx
%
%Our model is therefore consistent with several features of the business cycles for emerging economies. First, it can spawn recurrent default episodes. Second, our model is able to generate considerable levels of ``debt intolerance'' for economies prone to default. Finally, our model is consistent with the empirical regularity of higher volatility of tax revenues-to-output ratios observed in emerging economies, as reported for example by \cite{Bauducco}.
%
%xxxxxxxxxxxxxxxxxxxxxxx

\bigskip

\textbf{Parametrization and functional forms.} The utility function is given by $u(c,1-n) = c + C_{1} \frac{(1-n)^{1-\sigma}}{1 -
  \sigma}$. For this preference specification, it is easy to see that whether the fiscal authority has the ability to commit to tax policies or not renders the same equilibrium results. In this parametrization, we assume that the government expenditure $g_{t}$ follows an AR(1) log-normal process
  \begin{equation*}
  \log g_{t}=(1-\rho)\mu+\rho \log g_{t-1}+\sigma_{\varepsilon} \varepsilon_{t},~with~\varepsilon_{t} \sim N(0,1),
  \end{equation*}
  which is approximated by an 11-state Markov chain using \cite{TAUCHEN_EL86} procedure.\footnote{The debt state space $\mathbb{B}$ is constructed by discretizing $[0,0.4]$ into $800$ grid-points The one-period gross risk-free rate $1+r^{f}$ is equal to the reciprocal of the households' discount factor $\beta$ and bond spreads are computed as the differential between bond returns and the risk-free rate. Finally, as in AMSS, we rule out negative lump-sum transfers to the households. The model is solved numerically using value function iterations with a discrete state space and an ``outer'' loop that iterates on prices until convergence.}

% We calibrate the parameters of the model as follows. We choose $\beta = 0.9$, $\sigma = 3$, $\kappa =1$ and $C_{1}=0.01$. This choice is taken from AMSS; the authors choose $C_{1}=1$ but work with 100 as the unit of time --- to be split between leisure and labor --- whereas we work with 1. The state space is given by $\mathbb{B} = [0,0.5]$ and $\mathbb{G} = [0,0.1]$.\footnote{In this parametrization $|\mathbb{B}| = 50$.} For the benchmark parametrization (column (I)) we choose $\rho_{1} = 0$, $\mu_{g} = 0.05$, $\sigma_{g} = 0.045$, $\lambda = 0.25$, and $\Delta =  \{0.4,0.6,0.8\}$ where the probability $\pi_{\Delta}$ puts uniform probability over the interval. We choose $\beta$, $\lambda$ and $\Delta$ to match: a default frequency between 3-6 percent, a recovery rate of (approx.) 50 percent and a autarky spell between 10-15 periods.\footnote{The default recovery rate is taken from \cite{YUE_WP07}, where $35$ percent is the recovery rate for Argentina and $65$ percent is the recovery rate for Ecuador. See \cite{PW_WP08} for more details for the default spell.}

We choose the parameters of the model as follows: $\beta = 0.97$, $\psi = 2$, $\kappa =0.998$ and $C_{1}=0.15$.\footnote{While the exogenous ad hoc output cost of default $1-\kappa$ is only 0.2 percent, the endogenous drop of consumption is significantly larger. Indeed, if the government runs a balanced budget, the autarkic level of consumption is over 1 percent lower than its counterpart in repayment for high government spending.} The parameter values for the stochastic process of the government expenditure are $\mu=0.114$, $\rho=0.56$ and $\sigma_{\varepsilon}=0.037$. For the debt restructuring process we choose a probability of receiving an offer $\lambda = 0.47$, and consider ten equally-probable renegotiation offers with equidistant haircuts ranging from $0.45$ to $0.9$.
%A period corresponds to a year in this calibration.

We perform 5,000 Monte Carlo (MC) iterations, each consisting of sample
paths of 2,500 observations for which the first 500 observations were disregarded in order to eliminate the effect of the initial
conditions. We then compute the statistics across MC simulations.

\bigskip

\textbf{Behavior of debt.} In the numerical simulations, default occurs at an average frequency of around 1.8 percent.\footnote{The unconditional default frequency is computed as the sample mean of the number of default events in the simulations.} Households in the model anticipate the default strategies in equilibrium and demand higher returns to hold the bond. Facing higher borrowing costs, the government responds by issuing less debt. Consequently, the average level of indebtedness is significantly lower in our environment than in AMSS model, as illustrated in figure \ref{fig:debthisto}.
%\footnote{In contrast with sovereign default models, our benevolent government shares the same preferences as bond holders, and in particular is not relatively more impatient than them. Consequently, there is no motive to front-load consumption that would expand the level of indebtness further.}
The histograms (which were smoothed using kernel methods) of debt-to-output ratios in our model (solid red) and in AMSS (dotted blue) are shown for different values of the government expenditure.\footnote{To construct the histogram, we disregarded few observations of debt-to-output ratios corresponding to bond spread levels exceeding 50 percent.} First, as the government expenditure increases, in both economies more debt is optimally chosen in order to finance it due to tax-smoothing motives. In addition, the debt-to-output ratio in our model becomes more concentrated the higher the value of $g$ is. This clustering feature is due to the fact that, as $g$ increases, the government wants to issue more debt, but faces tighter (endogenous) borrowing limits resulting from higher default risk. This behavior is in stark contrast with the one for AMSS, where the debt-to-output ratios are more spread out and can take higher values. In fact, for the bottom panels there is almost full separation between the histogram of our model and that of AMSS. It is worth to note as well, that even for low values of $g$ (e.g. the top panel) our model exhibits ``thinner right tail'' for the distribution of debt-to-output ratio, again illustrating the latent borrowing limits.

\begin{figure}
	\centering
	\includegraphics[height=3.1in,width=5.5in]{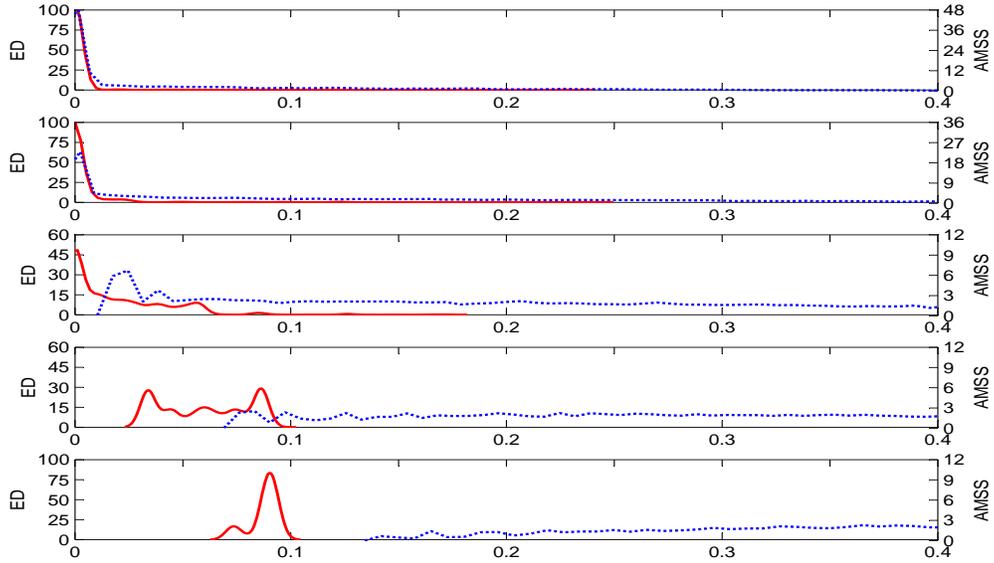}
	%\vspace{-2in}
	\caption{Histograms of debt-to-output ratio for our model (solid red) and AMSS (dotted blue) conditioned on different values of  $g$. From top to bottom, the five panels correspond to the second, fourth, sixth, eighth and tenth gridpoint of $g$.}
	\label{fig:debthisto}
\end{figure}

\medskip

\begin{figure}
  \centering
  \includegraphics[height=2.5in,width=4.6in]{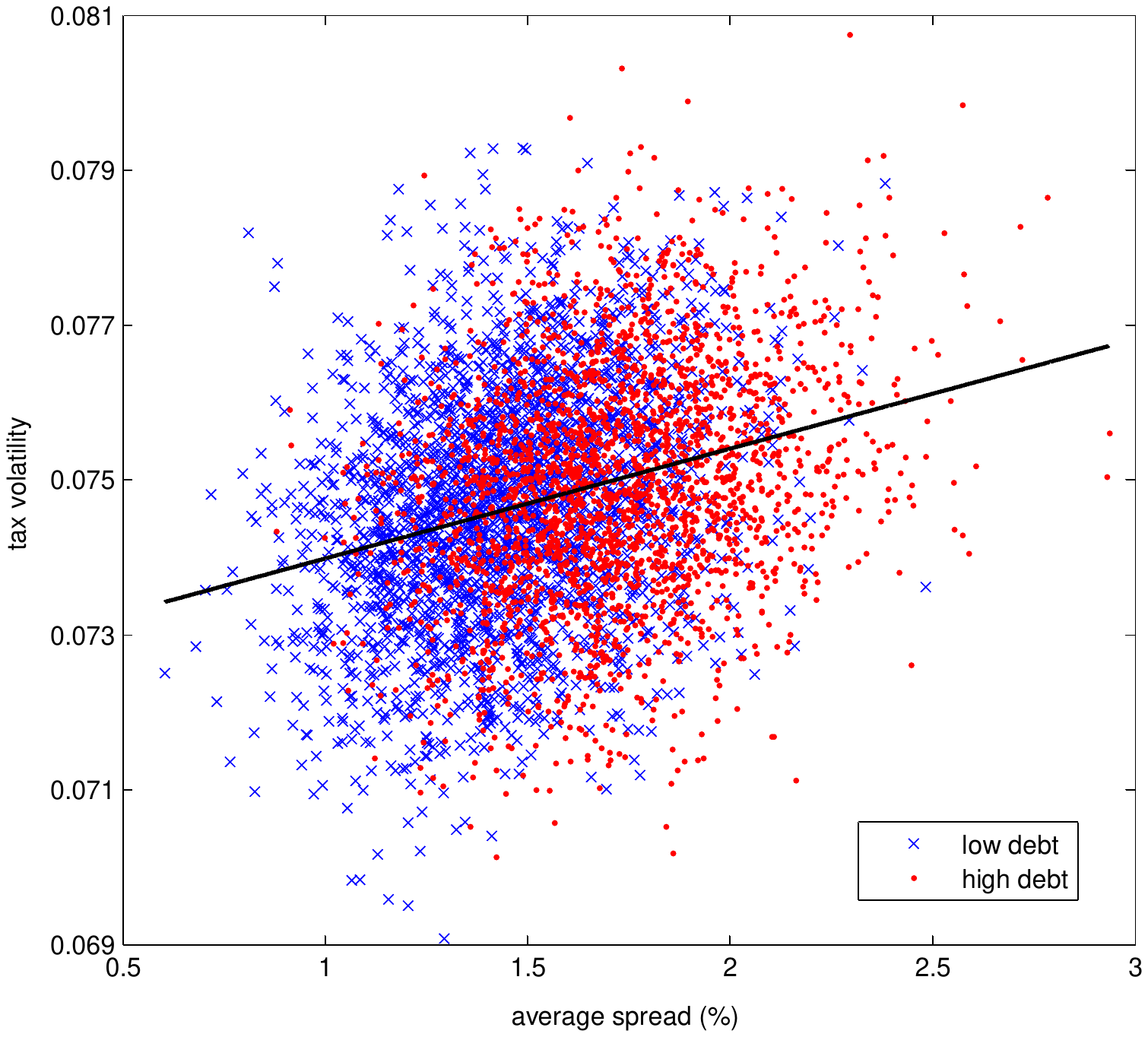}
  %\vspace{-1in}
  \caption{Standard deviation of tax rates and mean bond spreads in financial access for low debt (blue) and high debt (red), and fitted OLS line between them (black line).}
  \label{fig:taxvol_vs_spread_by_030714}
\end{figure}

\begin{figure}
  \centering
  \includegraphics[height=3.25in,width=5.5in]{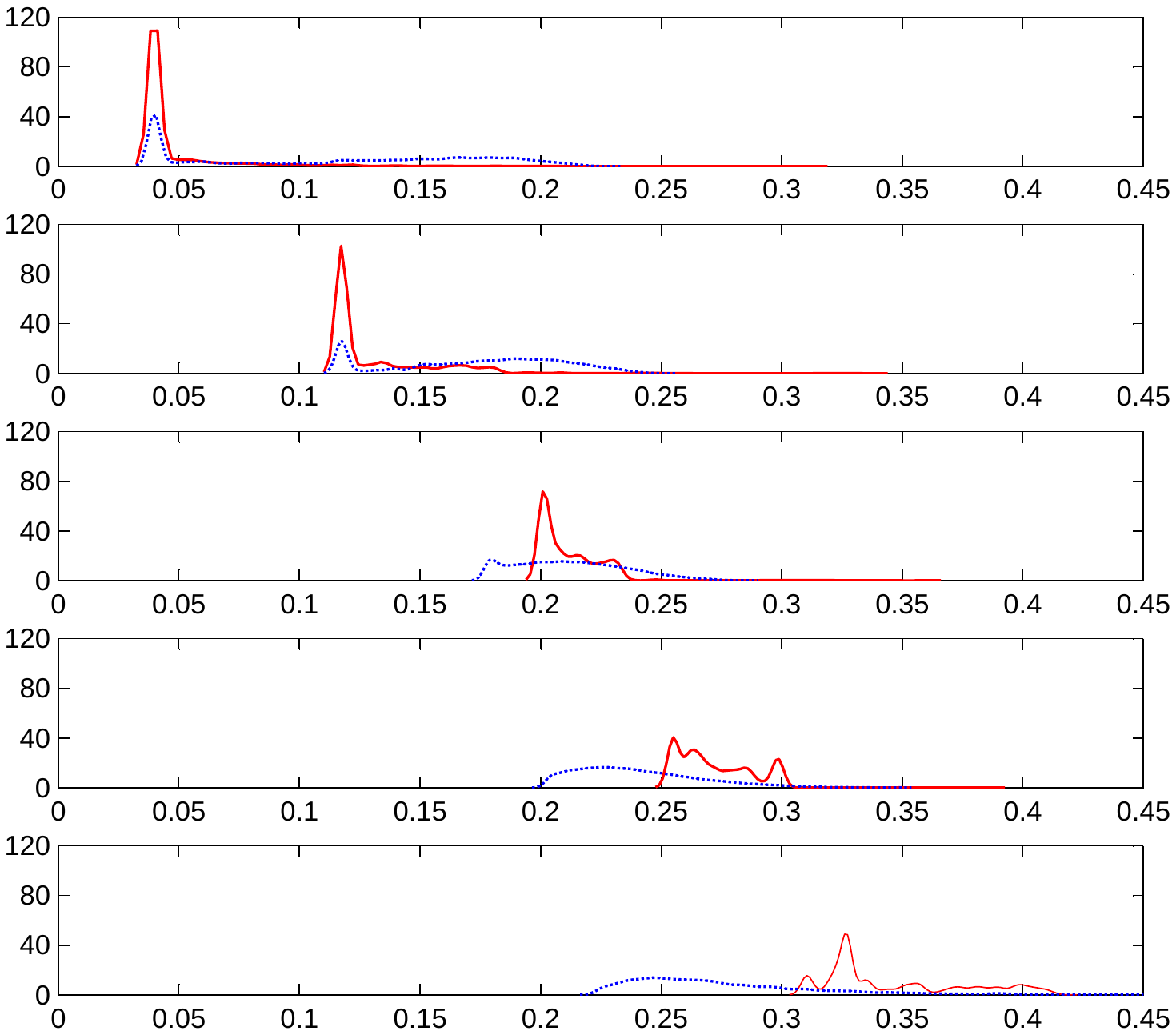}
  %\vspace{-2in}
  \caption{Histogram of tax rates in financial access for our model (solid red) and AMSS (dotted blue) conditioned on different values of  $g$. From top to bottom, the five panels correspond to the second, fourth, sixth, eighth and tenth realization of $g$.}
  \label{fig:taxeshistogram_030714}
%\begin{figurenotes}
%Histogram of tax rates in financial access for our model (solid red) and AMSS (dotted blue) conditioned on different values of  $g$. From top to bottom, the five panels correspond to the second, fourth, sixth, eighth and tenth realization of $g$.
%\end{figurenotes}

\end{figure}

\textbf{Behavior of taxes.} As mentioned in section \ref{sec:bench}, we typically observe in the data that default risk and tax volatility are positively correlated and that both are higher for high levels of debt-to-output
ratio. Figure \ref{fig:taxvol_vs_spread_by_030714} shows that our model is able to generate this pattern. The blue (red) dots indicate the standard deviation of taxes and spreads for low (high) levels of debt, respectively. Each dot corresponds to the subsample of periods with financial access in a MC simulation. The solid black line represents the regression line. For low (high) debt we consider debt-to-output ratios below (over) the median of its asymptotic distribution. For both cases we see a positive
relationship between spreads and tax volatility, reflecting the government's limited ability to spread out the tax burden across time in the context of default risk.
%\footnote{The slope of the OLS fitted line is statistically significant. The associated t-statistic is over 20.}
In the simulations, the standard deviation of tax rates in financial access is roughly 50 percent higher than in AMSS model. Another feature that stands out in the figure is that the red dotted-cloud is shifted to the upper right corner of the graph with respect to the blue crossed-cloud, thus indicating that both spreads and tax volatility are higher for higher level of debts.

In order to shed more light about the behavior of taxes when there
is risk of default, we compare in figure \ref{fig:taxeshistogram_030714} the histograms (again smoothed using kernel methods) of tax rates in our model (solid red) with that in AMSS (dotted blue), for different values of government expenditure. First, as observed in the bottom panels, for high values of $g$ the distribution of taxes in our model
  is shifted to the right compared to that in AMSS model. This difference arises from the fact that in our environment, with default risk, debt is too costly for the government to finance high government expenditure and hence it has to resort to higher tax rates. In contrast, in the top panels, when the $g$ realization is low, the situation is
  reversed and now the distribution of taxes in AMSS model is shifted to the right relative to our model. During those states, the government repays the outstanding debt, which typically is higher in AMSS than in our credit-constrained economy. In addition, in our model taxes are more concentrated around a single peak (which shifts to the right with the level of government expenditure), reflecting more limited borrowing. In contrast, in AMSS the distribution of taxes is more spread-out for each $g$ realization but at the same time more less sensitive to changes in $g$, a clear reflection of more tax smoothing.

\medskip

%\begin{table}[htbp]
% \caption{MC Statistics for debt renegotiation for different values of $\lambda$. }
%  \begin{tabular}{c|cccccccccc}
 %     \hline  \hline \\
% $\lambda $ & 0.1 & 0.2 & 0.3 & 0.4 & 0.5 & 0.6 & 0.7 & 0.8 & 0.9 & 1\\ \hline
% Avg. offer accepted & 0.40 & 0.38 & 0.37 & 0.35 & 0.35 & 0.34 & 0.33 & 0.33 & 0.32 & 0.32 \\
% Avg. duration $|$ High debt & 10.9 & 6.30 & 4.71  & 5.18 & 4.67 & 4.61 & 3.99 & 3.77 & 3.56 & 3.34 \\
%Avg. duration $|$ Low  debt & 10.9 & 6.30 & 4.70  & 3.62 & 3.15 & 2.90 & 2.87 & 2.77 & 2.70 & 2.68 \\
%\hline  \hline
%   \end{tabular}
%  \label{tab:debtrec}
% \end{table}

\textbf{Debt renegotiation.} In table \ref{tab:debtrec} we present some statistics regarding the debt renegotiation process for different values of $\lambda$.
\begin{table}[h]
	\centering
	\begin{tabular}{c|cccccc}
		\hline  \hline $\lambda $ & 0.2  & 0.4  & 0.6  & 0.8  & 1.0\\ \hline
		Avg. offer accepted  & 0.60  & 0.59  & 0.59  & 0.58  & 0.57 \\
		Avg. duration $|$ High debt  & 10.08   & 6.69  & 6.03  & 5.19 & 5.06 \\
		Avg. duration $|$ Low  debt  & 9.46   & 5.82  & 3.42  & 3.16  & 2.92 \\
		\hline  \hline
	\end{tabular}
		\caption{MC Statistics for Debt Renegotiation for Different Values of $\lambda$. }
	\label{tab:debtrec}
\end{table}

In the first row, as the probability of receiving an offer increases, the average offer accepted decreases. This result follows because as $\lambda$ increases the option value of staying in financial autarky increases and thus the government accepts less offers. The last two rows in the table show the average duration, conditioning on the fact that the defaulted debt is ``high'' (second row) and ``low'' (third row).\footnote{As low (high) defaulted debt we consider debt-to-output ratios in the default episodes below (above) the unconditional median.} For both cases, it decreases as the probability of receiving an offer increases, but more importantly it shows that for ``high'' levels of debt we have, on average, longer financial autarky spells. In fact, the difference can be as large as 75 percent higher for intermediate values of $\lambda$. This result coincides with the implications of proposition \ref{pro:opt-dec}. Moreover, as we see from the table, differences in the duration can be non-negligible.

% \begin{itemize}------------- VER 072314
% \item \textbf{072314---Hay evidencia de esto ultimo (``high'' levels of debt we have, on average, longer financial autarky spells)? Estaria bueno ver y sino decir que es novel.}
% \end{itemize}

\medskip

\textbf{Impulse responses.} Figure \ref{fig:ir4-crop} presents the impulse response
for debt, fiscal primary surplus and taxes for our model and AMSS. The path of government expenditure is plotted in the first panel: the government expenditure is low and equal to $0.0915$ except for $t=2, 3, 4$ where it is high and equal to 0.159. The initial debt level is set to zero. While in both economies the government accumulates debt in periods of higher government expenditure, in
our model it does so to a lesser extent due to the presence of endogenous borrowing limits. From $t=5$
onwards, when government expenditure becomes low, the debt is gradually paid back, eventually reaching zero.

Taxes behave analogously: in our economy taxes are higher than AMSS during the periods of
high government expenditure since borrowing is more limited and costly, but decrease more rapidly
when the realization of government expenditure becomes lower (see the third panel). Overall, not surprisingly, a
smoother behavior for taxes is observed in AMSS than in our economy, resulting from the absence of default risk. In both economies, a hike in the tax rate reduces after-tax wages inducing substitution from labor to leisure (recall that with quasi-linear preferences the labor supply curve is invariant to changes in wealth). Consequently, output drops. Debt is more persistent than tax rates in both models, a feature that stands out under incomplete markets, as stressed by \cite{MS_09}. Also, in line with the findings of \cite{MS_09}, both economies experience a significant fiscal deficit as $g$ increases and debt grows. In AMSS model, since the government has more borrowing capacity, it therefore issues more debt, and hence the fiscal deficit in the short run is relatively larger.

%manifested in the law of motion of the Lagrange multiplier of the implementability constraint; see subsection \ref{sec:LOM}.

The last panel plots the behavior of the Lagrange multiplier $\nu_{t}$ of the implementability constraint, studied in subsection \ref{sec:LOM}. The fact that ours is above the one of AMSS for periods of high government expenditure reflects the ``mark-up'' effect mentioned in subsection
\ref{sec:LOM}. Also note that the Lagrange multiplier increases during these periods, reflecting the
fact that the shadow cost of debt is increasing in the level of debt. From $t=4$ onwards, as
debt decreases, the Lagrange multiplier in our model falls and eventually converges to the one of AMSS. %This last feature stems from the facts that the accumulated debt is relatively lower in our model and that, as the level of debt decreases, the ``mark-up'' effects vanish.

\begin{figure}
  \centering
  \includegraphics[height=3.1in,width=5.5in]{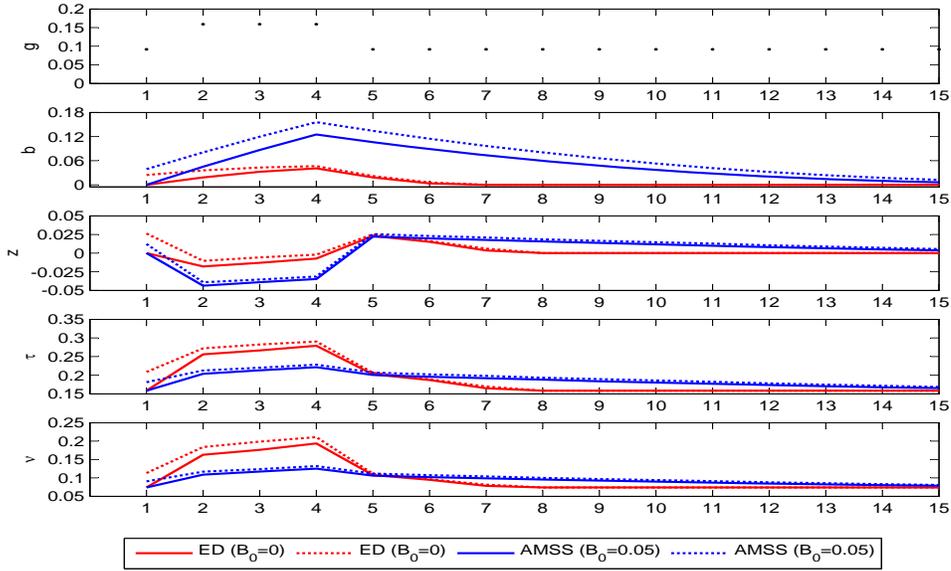}
  %\vspace{-2in}
  \caption{Impulse responses for our model (red) and AMSS (blue). Realization of government expenditure (first panel), debt path (second panel), primary surplus path (third panel), tax path (fourth panel), and Lagrange multiplier path (fifth path).}
  \label{fig:ir4-crop}
%\begin{figurenotes}
%Impulse Responses for our model (red) and AMSS (blue). Realization of government expenditure (first panel); debt path (second panel); tax path (third panel); and Lagrange multiplier path (fourth path).
%\end{figurenotes}
\end{figure}

\medskip

\textbf{Dynamics around default episodes. }To better understand the role of default and its implications for fiscal policy, we show here the evolution of taxes around a 9-period window around the default event. To do so, we pick 1,000 default episodes in our simulations that were preceded by at least 6 periods of access to financial markets and that were followed by at least 4 periods of financial exclusion. Figure \ref{fig:pp-def} presents the dynamics of the cross-sectional medians of government spending and tax rates in AMSS and in our model around these default events. We also compute a counterfactual tax rate in our economy assuming that the government is not allowed to default at time 0 nor in any of the following 4 periods. The dashed lines correspond to 25 and 75th-percentile bands for each series.

As shown in the top panel, default episodes are preceded by high government expenditure realizations, which peak two periods ahead of the announcement. Even though government expenditure drops in the subsequent two periods, it does not do so enough to prevent default from occurring. Taxes remain on the rise to finance the government spending and the growing debt built over time. In period 0 the government finds optimal to default and tax rates are cut and continue declining thereafter. Had the government not defaulted, taxes would have jumped by almost 50 percent and increase further in the next period, as debt continued rising. The fact that the tax rate is lower upon defaulting than in the counterfactual scenario follows from lemma \ref{lem:roll_over1} in the appendix. This lemma shows that in the states in which default is optimal, if the government repaid, it would not be able to raise any funds from the debt management for any available debt contract. Rather, it would be subject to additional outlays. Consequently, taxes would necessarily be higher in that case. Thus, by defaulting, the government avoids transiently higher tax distortions. In the counterfactual scenario, as shown in the figure, after period 1 tax rates start declining along with the debt level as the government expenditure becomes lower. In sharp contrast with our economy, in AMSS taxes remain quite stable and more persistent, as illustrated in the bottom panel. As expected, taxes  after $t=0$ do not decline as much as in our economy, given the high outstanding debt that the government has to repay.  %Away from the ad hoc borrowing limits, taxes in this economy mimic the evolution of government expenditure.

\begin{figure}
	\centering
	\includegraphics[height=2.4in,width=5.5in]{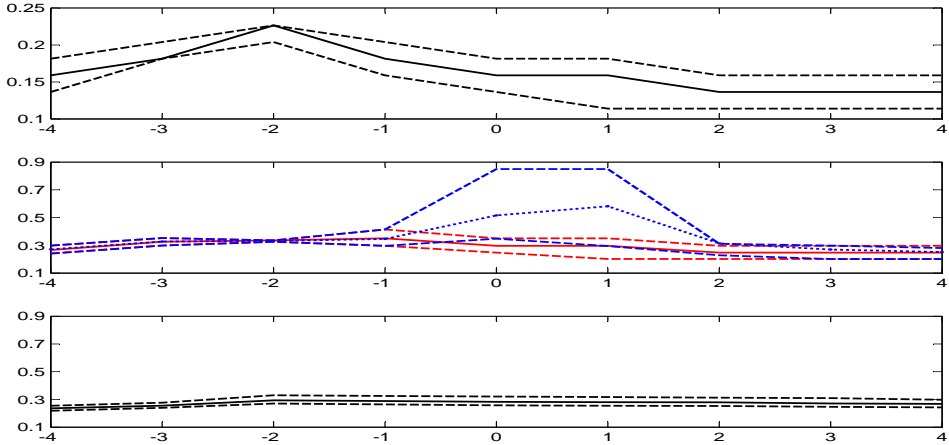}
	%\vspace{-2in}
	\caption{Default episode windows in model simulations. Top panel: Cross-sectional medians of government expenditure. Middle panel: Actual (solid) and counterfactual (dotted) tax rates in our economy. Bottom panel: Tax rates in AMSS. Dashed lines correspond to 25 and 75th-percentile bands.}
	\label{fig:pp-def}
%\begin{figurenotes}
%Cross-sectional medians of government expenditure (top panel), actual (solid) and counterfactual (dotted) tax rates in our economy (second panel) and tax rates in AMSS (bottom panel). Dashed lines correspond to 25 and 75th-percentile bands.
%\end{figurenotes}

\end{figure}

\medskip

\textbf{Aversion to consumption risk.} In our previous parametrization we assumed that households were consumption-risk neutral by endowing them with quasi-linear preferences. In what follows we relax this assumption by considering a balanced-growth preference specification, as in \cite{AMSS_JPE02} and \cite{FARHI_WP07}. More specifically, we assume $u(c,1-n)=\log c + C_{1} \frac{(1-n)^{1-\sigma}}{1-\sigma}$, with $C_{1}=0.05$ and $\sigma=3$. %Note that in this new environment the government has the ability to manipulate the risk-free rate through the market pricing kernel.
For computational purposes, in this new numerical exercise, the government spending shock can only take three values: $g^{L}$ (``low''), $g^{I}$ (``normal'') and $g^{H}$ (``expansionary''), such that $g^{H}>g^{I}>g^{L}$, with transition probability matrix\footnote{Under this Markov chain for the government expenditure, borrowing is widely used to smooth taxes as the economy fluctuates between the two (likely) states of $g^{L}$ and $g^{I}$. Default typically occurs when the economy switches to the persistent expansionary state $g^{H}$ carrying in high debt and remains in this state for a sufficiently large number of periods.}
\begin{equation*}
\left(
  \begin{array}{ccc}
    0.6 & 0.4 & 0 \\
    0.5 & 0.45 & 0.05 \\
    0.1 & 0.1 & 0.8
  \end{array}
\right)
\end{equation*}

There is only one debt renegotiation offer $\delta=0.7$ with arrival probability $\lambda=0.5$ and an ad hoc output cost for default of $1-\kappa$ equal to 0.4 percent. In this setup, low unconditional probability of occurrence of $g^{H}$ together with a non-negligible output loss in autarky are sufficient to deter the government from defaulting too often. The time discount factor $\beta$ is set to $0.98$.

Again, this numerical exercise is not designed to replicate key features in the data, but to simply provide guidance regarding the robustness of our main insights to nonlinear utility in consumption. A novel feature of defaulting in this environment with commitment to fiscal policies is that it allows for a resetting of taxes. While in our economy after every default episode the fiscal authority has the chance eventually to review and reset its tax policy, in the AMSS model this is not the case, as the government is bound by previous marginal utility promises when $u_{c}\neq 1$. Consequently, fiscal policy will tend to exhibit even more history dependence in the latter environment, as manifested, for example, in high and persistent taxes in states with high indebtedness as the economy hovers around the \emph{ad hoc} borrowing limits.

As illustrated in figure \ref{fig:taxvol_vs_spread_by_310316}, even when the household is risk averse to consumption fluctuations, the positive relationship between tax volatility and average default risk is observed in the simulations.
%\footnote{The slope of the OLS line is again statistically significant given that the associated t-statistic is over 120.}
Additionally, both statistics are higher for high debt.
%The higher the debt level, the larger the tax distortions that the government is expected to incur in the future in order to service the debt if it decides to repay.
%Incentives to renege debt contracts are in turn higher and so is default risk.
Not surprisingly, debt-to-output ratios are again lower on average in our economy than in the AMSS model, as shown in figure \ref{fig:debthisto_ra}.
%Due to the presence of default risk, financing government expenditures in our model through debt issuance is typically more costly and hence the government relies less on such policy instrument.
Finally, in figure \ref{fig:pp-def-ra} we present the dynamics of our model around default episodes. In contrast with the quasi-linear case, we do not conduct the counterfactual exercise computing the tax dynamics should the government were not allowed to default. The reason is simple: the marginal utility promised in the period before the default announcement, i.e. $t=-1$, is state-contingent and captures the fact that at $t=0$, upon the government expenditure realization, default occurs and hence the marginal utility corresponds to the financial autarky one. This autarkic marginal utility is not necessarily an equilibrium object when the government repays, i.e. it may not belong to $\Omega(1,B_{0},g_{0},1,1)$. As implied by the dynamics of the median and percentile bands of $g$ in the top panel, all default episodes occur upon two consecutive realizations of the highest government expenditure. In the second and bottom panels we see that taxes are higher on average in our economy than in AMSS, and, more relevant, they grow faster during the periods of financial access preceding the default announcement.

\begin{figure}
	\centering
	\includegraphics[height=2.2in,width=4.6in]{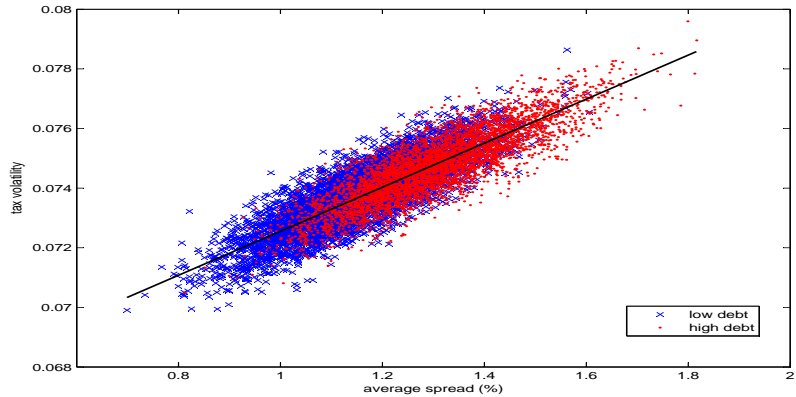}
	%\vspace{-1in}
	\caption{ Standard deviation of tax rates and mean bond spreads in financial access for low debt (blue) and high debt (red), and fitted OLS line between them (black line) in economy with aversion to consumption risk.}
	\label{fig:taxvol_vs_spread_by_310316}
	%\begin{figurenotes}
	% Standard deviation of tax rates and mean bond spreads in financial access for low debt (blue) and high debt (red), and fitted OLS line between them (black line) in economy with aversion to consumption risk.
	% \end{figurenotes}
\end{figure}

\begin{figure}
	\centering
	\includegraphics[height=3.0in,width=5.5in]{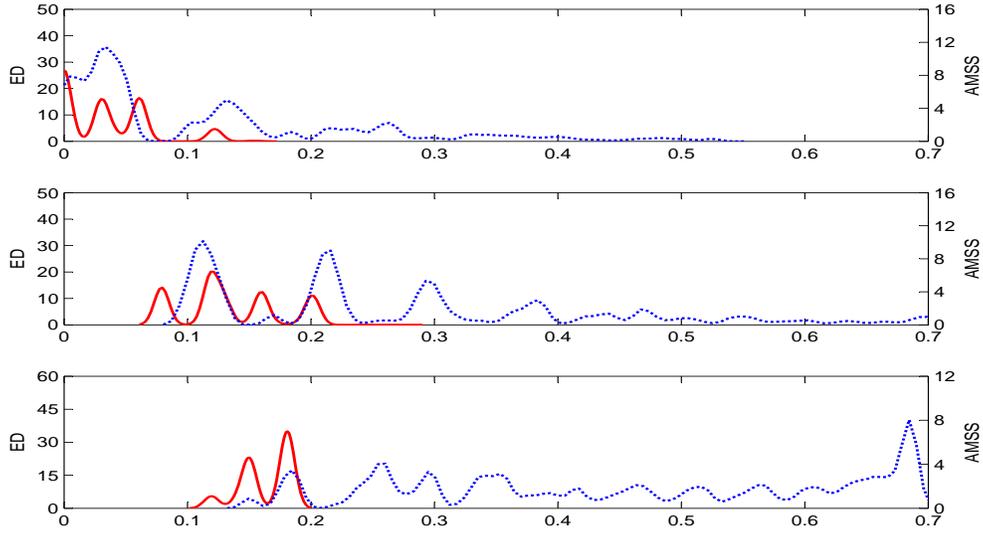}
	\caption{Histograms of debt-to-output ratio for our model (solid red) and AMSS (dotted blue) conditioned on different values of  $g$ for an economy with aversion to consumption risk. From top to bottom: low, intermediate and high realization of $g$.}
	%\begin{figurenotes}
	%Debt-to-output ratio for our model (solid red) and AMSS (dotted blue) conditioned on different values of  $g$ for an economy with aversion to consumption risk. From top to bottom, the three panels correspond to the low, intermediate and high realization of $g$.
	% \end{figurenotes}
	
	\label{fig:debthisto_ra}
\end{figure}

\begin{figure}
	\centering
	\includegraphics[height=2.6in,width=5.5in]{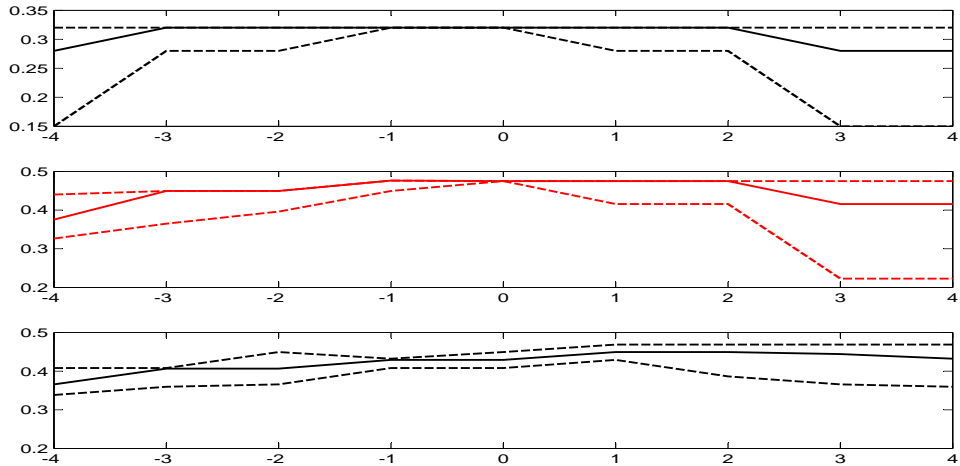}
	%\vspace{-2in}
	\caption{Default episode windows in model simulations for economy with aversion to consumption risk. Top panel: Cross-sectional medians of government expenditure. Middle panel: actual tax rates (solid) in our economy. Bottom panel: tax rates in AMSS. Dashed lines correspond to 25 and 75th-percentile bands.}
	\label{fig:pp-def-ra}
%\begin{figurenotes}
%Cross-sectional medians of government expenditure (top panel), actual (solid) and counterfactual (dotted) tax rates in our economy (second panel) and tax rates in AMSS (bottom panel). Dashed lines correspond to 25 and 75th-percentile bands.
% \end{figurenotes}

\end{figure}

\section{Conclusion}
\label{sec:conclusion}

Our model extends the results on optimal taxation under incomplete markets results to environments where the government can default on its debt. We study how default risk and the actual default event affect tax policies and vice-versa. By defaulting the government avoids higher tax distortions in the future that would come along with the service of the debt. The presence of default risk, however, gives rise to endogenous credit limits that hinder the government's ability to smooth shocks using debt. As a result, taxes are more volatile and less serially correlated than in the standard incomplete market setting. We view our model as a suitable framework to study government policies for economies that are (or were) prone to default and to restructure their debt. Some examples throughout history include France and the U.S. in the 18th century, and emerging economies nowadays.

Our model also provides a novel device that allows us to study asset prices of domestic debt both during periods of financial access and autarky. Further research could fully explore the pricing implications of this device for general government bonds held by uncertainty or risk averse creditors. In addition, our reduced-form debt restructuring process could be used to study more in detail some observed features in recent debt renegotiation episodes.

%While the main focus of the paper is normative, we believe that our environment with domestic default could be enriched in several ways in order to replicate empirical regularities for a number of economies. For instance, so as to better fit the data of peripheral economies in the eurozone, productivity shocks as well as foreign investors are necessary.

%Finally, a more theoretical line of research we find interesting to pursue is to study how (a) household's heterogeneity and (b) endogenous debt renegotiation schemes, affect the trade volume in the secondary markets and consequently welfare.

%Although this model does a good job of explaining qualitatively
%several empirical regularities, it could be enriched in several
%dimensions so as to better fit the data of particular economies. More precisely, we
%think a promising line of future research could be to extend the
%production side of this economy to allow for physical capital
%accumulation and productivity shocks. Our model also provides a novel
%device that allow us to study asset prices of government debt both
%during periods of financial access and autarky. Further research could
%fully explore the pricing implications of this device for general sovereign debt held by uncertainty or risk averse creditors.\footnote{See %\cite{AG_JIE06} and \cite{MY_WP08}, and \cite{PP_WP12}.}
% Another line of research that we are pursuing is to study how (a)
% household's heterogeneity and (b) endogenous debt renegotiation
% schemes, affect the trade volume in the secondary markets and
% consequently the welfare.

{\fontsize{11}{13} \selectfont
\bibliography{mybib_amssa33}

\newpage

\appendix
{\fontsize{10}{12} \selectfont
\renewcommand{\baselinestretch}{1}
\begin{center}
{\huge{Appendix}}
\end{center}

\renewcommand{\theequation}{\Alph{section}.\arabic{equation}}
\renewcommand{\thetable}{\Alph{section}.\arabic{table}}
\renewcommand{\thefigure}{\Alph{section}.\arabic{figure}}
\section{Notation and Stochastic Structure of the Model}

Throughout the appendix for a generic mapping $f$ from a set $S$ to
$T$, we use $s \mapsto f(s)$ or $f : S \rightarrow T$ to denote it. For
the case that a mapping depends on many variables, the notation $s_{1}
\mapsto f(s_{1},s_{2})$ is used to denote the function $f$ \emph{only}
as a function of $s_{1}$, keeping $s_{2}$ fixed. Also, for a generic
set $A$, $|A|$ denotes the cardinality of $A$.

\section{Optimization Problem for the Households}
\label{app:HOUSE}

The Lagrangian associated to the household's problem is given by
\begin{align*}
    \mathcal{L}(\{c_{t},n_{t},b_{t+1},\nu_{t},\mu_{t},\psi_{t}\}_{t=0}^{\infty}) &\equiv \sum_{t=0}^{\infty} \beta^{t} E_{\Pi(\cdot|\omega_{0})} \left[ \left\{      u(c_{t}(\omega^{t}),1-n_{t}(\omega^{t}))\right.\right.\\
 & \left.\left.- \nu_{t}(\omega^{t}) \{ c_{t}(\omega^{t}) - (1-\tau_{t}(\omega^{t})) \kappa_{t}(\omega^{t}) n_{t}(\omega^{t}) + p_{t}(\omega^{t}) b_{t+1}(\omega^{t}) - \varrho_{t}(\omega^{t}) b_{t}(\omega^{t-1}) \} \right . \right. \\
% & \left. + \left. \mu_{1t}(\omega)(n_{t}(\omega)) + \mu_{2t}(\omega)(1-n_{t}(\omega))     + \psi_{1t}(b_{t+1} - \underline{b}) + \psi_{2t} (\overline{b} - b_{t+1}) \right\} \right].
& \left. + \left. \Psi_{t}(\omega^{t})c_{t}(\omega^{t})  + \psi_{1t}(b_{t+1}(\omega^{t}) - \underline{b}) + \psi_{2t} (\overline{b} - b_{t+1}(\omega^{t})) \right\} \right],
\end{align*}
where $\nu_{t}$ and $\Psi_{t}$ are the Lagrange multipliers associated to the budget constraint and to the non-negativity restrictions for consumption, and $\psi_{it}$ $i=1,2$ are the Lagrange multipliers associated to the debt limits.

% The first order conditions (FONC) are given by:
% \begin{align*}
% c_{t} \colon &  u_{c}(c_{t}(\omega),1-n_{t}(\omega)) - \lambda_{t}(\omega) + \Psi_{t}(\omega) = 0 \\
% n_{t} \colon &  - u_{l}(c_{t}(\omega),1-n_{t}(\omega)) + \lambda_{t}(\omega)(1-\tau_{t}(\omega)) \kappa_{t}(\omega;\sigma^{\infty}) = 0 \\
% b_{t+1} \colon &  - p_{t}(\omega;\sigma^{\infty})  \lambda_{t}(\omega) + \beta E_{\Pi( \cdot | \omega^{t})} [  \lambda_{t+1}(\omega) \varrho_{t+1}(\omega) - \psi_{1t+1}(\omega) + \psi_{2t+1}(\omega) ] = 0.
% \end{align*}

Assuming interiority of the solutions, the first order conditions (FONC) are given by:
\begin{align*}
c_{t}(\omega^{t}) \colon &  u_{c}(c_{t}(\omega^{t}),1-n_{t}(\omega^{t})) - \nu_{t}(\omega^{t}) = 0 \\
n_{t}(\omega^{t}) \colon &  - u_{l}(c_{t}(\omega^{t}),1-n_{t}(\omega^{t})) + \nu_{t}(\omega^{t})(1-\tau_{t}(\omega^{t})) \kappa_{t}(\omega^{t}) = 0 \\
b_{t+1}(\omega^{t}) \colon &  p_{t}(\omega^{t})  \nu_{t}(\omega^{t}) - E_{\Pi( \cdot | \omega^{t})} [ \beta  \nu_{t+1}(\omega^{t+1}) \varrho_{t+1}(\omega^{t+1})  ] = 0.
\end{align*}

Then, using $u_{j}(\omega^{t})$ for $u_{j}(c_{t}(\omega),1-n_{t}(\omega))$ with $ j \in \{ c,l\}$, it follows
\begin{align}\label{eqn:eq_tax_app}
  \frac{u_{l}(\omega^{t})}{u_{c}(\omega^{t})} &= (1-\tau_{t}(\omega^{t})) \kappa_{t}(\omega^{t}),\\
  p_{t}(\omega^{t}) &= E_{\Pi( \cdot | \omega^{t})} \left[ \beta  \frac{u_{c}(\omega^{t+1})}{u_{c}(\omega^{t})} \varrho_{t+1}(\omega^{t+1})  \right]. \label{eqn:eq_price_app}
\end{align}

From the definition of $\varrho$, equation \ref{eqn:eq_price_app} implies, for $\phi_{t}=1$ ,
\begin{align*}
  p_{t}(\omega^{t}) & = E_{\Pi( \cdot | \omega^{t})} \left[ \beta \frac{u_{c}(\omega^{t+1})}{u_{c}(\omega^{t})} (1-d_{t+1}(\omega))  \right] + E_{\Pi( \cdot | \omega^{t})} \left[ \beta \frac{u_{c}(\omega^{t+1})}{u_{c}(\omega^{t})} d_{t+1}(\omega^{t+1}) q_{t+1}(\omega^{t+1})   \right].
\end{align*}

For $\phi_{t} = 0$, (where in this case recall that $p_{t} = q_{t}$)
%\begin{align*}
%  p_{t}(\omega^{t}) & = \lambda E_{\Pi( \cdot | \omega^{t})} \left[ \beta \frac{u_{c}(\omega^{t+1})}{u_{c}(\omega^{t})} %a_{t+1}(\omega^{t+1}) \delta_{t+1}  \right]\\
%& + E_{\Pi( \cdot | \omega^{t})} \left[  \beta \frac{u_{c}(\omega^{t+1})}{u_{c}(\omega^{t})} \{1-\lambda + \lambda %(1-a_{t+1}(\omega^{t+1}))\} q_{t+1} (\omega^{t+1})  \right].
%\end{align*}
\begin{equation*}
  p_{t}(\omega^{t}) = \lambda E_{\Pi( \cdot | \omega^{t})} \left[ \beta \frac{u_{c}(\omega^{t+1})}{u_{c}(\omega^{t})} a_{t+1}(\omega^{t+1}) \delta_{t+1}  \right] + E_{\Pi( \cdot | \omega^{t})} \left[  \beta \frac{u_{c}(\omega^{t+1})}{u_{c}(\omega^{t})} \{1-\lambda + \lambda (1-a_{t+1}(\omega^{t+1}))\} q_{t+1} (\omega^{t+1})  \right].
\end{equation*}

\section{Proofs for Section \ref{sec:charc-CE}}
\label{app:charc-CE}

The next lemma characterizes the set of competitive equilibria as a sequence of restrictions involving FONC and budget constraints. The proof is relegated to the end of the section.

\begin{lemma}\label{lem:CEG_charac}
Suppose assumption \ref{ass:U_prop} holds. The tuple
$(c_{t}, g_{t}, n_{t}, b_{t+1}, p_{t})_{t=0}^{\infty}$ and $\boldsymbol{\sigma}$ is a competitive equilibrium iff given a $B_{0} = b_{0}$, for all $\omega^{t} \in \Omega^{t}$, for all $t$,
\begin{align}\label{eqn:charac-CEG_1b}
  c_{t}(\omega^{t}) =  \kappa_{t}(\omega^{t})n_{t}(\omega^{t}) - g_{t},~and~B_{t+1}(\omega^{t}) =  b_{t+1}(\omega^{t}),\\
  \kappa_{t}(\omega^{t}) \tau_{t}(\omega^{t}) = \left( \kappa_{t}(\omega^{t}) - \frac{u_{l}(\omega^{t})}{u_{c}(\omega^{t})} \right), \label{eqn:charac-CEG_2b}\\
Z_{t}(\omega^{t}) + \phi_{t}(\omega^{t}) \{ p_{t}(\omega^{t}) B_{t+1}(\omega^{t}) - \delta_{t} B_{t}(\omega^{t}) \} \geq 0,\label{eqn:charac-CEG_3b}
\end{align}
where
\begin{align}\label{eqn:charac-CEG_4b}
  p_{t}(\omega^{t}) = E_{\Pi( \cdot | \omega^{t})} \left[ \beta  \frac{u_{c}(\omega^{t+1})}{u_{c}(\omega^{t})} \varrho_{t+1}(\omega^{t+1})  \right],
\end{align}
and if $\phi_{t}(\omega^{t}) = 0,~B_{t+1}(\omega^{t}) = B_{t}(\omega^{t-1})$.
\end{lemma}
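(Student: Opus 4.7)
The plan is to prove this equivalence by separately establishing necessity and sufficiency, leveraging the household's first-order conditions already derived in Appendix \ref{app:HOUSE} together with feasibility and market clearing.

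For the necessity direction ($\Rightarrow$), suppose the tuple is a competitive equilibrium. Condition \ref{eqn:charac-CEG_1b} follows from two observations: (a) the feasibility condition \ref{eqn:feas} gives $c_t(\omega^t) + g_t = \kappa_t(\omega^t) n_t(\omega^t)$ directly, and (b) market clearing for bonds in Definition \ref{def:CEG} gives $B_{t+1}(\omega^t) = b_{t+1}(\omega^t)$ (and the autarky restriction). The household's problem has an interior solution under Assumption \ref{ass:U_prop} (the Inada condition $\lim_{l\to 0} u_l(l)=\infty$ plus $u_c>0$, $u_l>0$), so from the FONC for $n_t$ I would substitute the FONC for $c_t$ to obtain $u_l(\omega^t)/u_c(\omega^t) = (1-\tau_t(\omega^t))\kappa_t(\omega^t)$, which rearranges to condition \ref{eqn:charac-CEG_2b}. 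Condition \ref{eqn:charac-CEG_4b} is precisely the bond Euler equation \ref{eqn:eq_price_app}. For the primary surplus inequality \ref{eqn:charac-CEG_3b}, I would use the government's attainability condition \ref{eqn:sig-att-1}, substitute $\kappa_t \tau_t n_t = (\kappa_t - u_l/u_c) n_t$ from \ref{eqn:charac-CEG_2b}, and recognize $Z_t(\omega^t) = (\kappa_t - u_l/u_c) n_t - g_t$ as exactly the primary surplus, which delivers \ref{eqn:charac-CEG_3b} after rearranging.

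For the sufficiency direction ($\Leftarrow$), suppose conditions \ref{eqn:charac-CEG_1b}--\ref{eqn:charac-CEG_4b} hold. I would construct the candidate competitive equilibrium and verify each item of Definition \ref{def:CEG}. The allocation is feasible by \ref{eqn:charac-CEG_1b}; market clearing holds by \ref{eqn:charac-CEG_1b} together with the autarky restriction; and the government policy is attainable by reversing the substitution above, namely defining the transfer $T_t \geq 0$ as the slack in \ref{eqn:charac-CEG_3b} and distributing it via the household's budget. To verify that the candidate solves the household's problem, I would use the standard saddle-point/convex-programming argument: the objective is strictly concave in $(c_t,n_t)$ by Assumption \ref{ass:U_prop} ($u_{cc}<0$, $u_{ll}<0$) and the budget constraint is linear in $(c_t, n_t, b_{t+1})$ given prices; combined with the bond bounds $b_{t+1} \in [\underline b, \overline b]$, the FONCs \ref{eqn:eq_tax_app}--\ref{eqn:eq_price_app} are sufficient for optimality. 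Since $\mathbb{B} \subset [\underline b, \overline b]$ strictly, the bond bounds do not bind, so no extra Kuhn--Tucker complementarity slackness needs to be tracked.

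The main obstacle I anticipate is the sufficiency of the household's FONC for a global optimum in the infinite-horizon setting. Concavity handles each finite-horizon truncation, but one ordinarily needs a transversality-type argument to pass to the infinite horizon. Here this is tame because $b_{t+1}$ is constrained to the compact set $[\underline b, \overline b]$, $\kappa_t \in \{\kappa,1\}$ is bounded, and $g_t$ takes finitely many values, so $u$ is bounded along any feasible path and $\beta < 1$ gives uniform discounting; hence the standard argument (verify optimality on a finite horizon and take limits, using boundedness of continuation utilities) applies with no additional work. A secondary bookkeeping point will be the case $\phi_t(\omega^t)=0$: the terms in \ref{eqn:charac-CEG_3b} involving $p_t B_{t+1}$ and $\delta_t B_t$ drop out, the constraint $B_{t+1}=B_t$ is imposed by hand, and the Euler equation \ref{eqn:charac-CEG_4b} then prices the defaulted-debt position held in the secondary market; I would just need to check consistency with \ref{eqn:payoff} in each of these sub-cases.
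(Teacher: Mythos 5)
Your proposal is correct and follows essentially the same route as the paper: necessity from the household FONCs of Appendix \ref{app:HOUSE} together with feasibility, market clearing and attainability, and sufficiency by verifying each item of Definition \ref{def:CEG}, with household optimality dispatched by strict concavity plus a transversality argument that holds because debt is confined to a compact set (the paper isolates this last step as Lemma \ref{lem:FONC_suff}, citing Magill--Quinzii rather than a truncation-and-limit argument, but the substance is identical). Your explicit treatment of the transfer $T_{t}$ as the slack in \ref{eqn:charac-CEG_3b} is if anything slightly more careful than the paper's case-by-case verification of the household budget constraint.
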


\begin{proof}[Proof of Theorem \ref{thm:charac-CEG}]

We now show the ``$\Rightarrow$'' direction. Consider an outcome path $(d_{t},a_{t},B_{t+1},n_{t})_{t=0}^{\infty}$ that is consistent. This means by lemma \ref{lem:CEG_charac} that the tuple $(c_{t}, g_{t}, n_{t}, b_{t+1}, p_{t})_{t=0}^{\infty}$ and $\boldsymbol{\sigma}$ is a competitive equilibrium iff given a $B_{0} = b_{0}$, for all $\omega^{t} \in \Omega^{t}$, for all $t$, equations \ref{eqn:charac-CEG_1b}, \ref{eqn:charac-CEG_2b}, \ref{eqn:charac-CEG_3b} and \ref{eqn:charac-CEG_4b} hold.
%\begin{align}\label{eqn:charac-CEG_1}
%  c_{t}(\omega^{t}) =  \kappa_{t}(\omega^{t})n_{t}(\omega^{t}) - g_{t},~and~B_{t+1}(\omega^{t}) =  b_{t+1}(\omega^{t}),
%\end{align}
%\begin{align}\label{eqn:charac-CEG_2}
%  \kappa_{t}(\omega^{t}) \tau_{t}(\omega^{t}) = \left( \kappa_{t}(\omega^{t}) - \frac{u_{l}(\omega^{t})}{u_{c}(\omega^{t})} \right);
%\end{align}
%and
%\begin{align}\label{eqn:charac-CEG_3}
%Z_{t}(\omega^{t}) + \phi_{t}(\omega^{t}) \{ p_{t}(\omega^{t}) B_{t+1}(\omega^{t}) - \delta_{t} B_{t}(\omega^{t-1}) \} \geq 0,\\ \notag
%and~ &~if~\phi_{t}(\omega^{t}) = 0,~B_{t+1}(\omega^{t}) = B_{t}(\omega^{t-1})
%\end{align}
%where
%\begin{align}\label{eqn:charac-CEG_4}
%  p_{t}(\omega^{t}) = E_{\Pi( \cdot | \omega^{t})} \left[ \beta  \frac{u_{c}(\omega^{t+1})}{u_{c}(\omega^{t})} \varrho_{t+1}(\omega^{t+1})  \right].
%\end{align}
Equations  \ref{eqn:charac-CEG_2b} and \ref{eqn:charac-CEG_4b} imply equations \ref{eqn:eq_tax}-\ref{eqn:eq_price}. Equations \ref{eqn:charac-CEG_2b}-\ref{eqn:charac-CEG_4b} imply condition \ref{eqn:IC-time-3}.

We now show the ``$\Leftarrow$'' direction. Suppose now that the outcome path satisfies, for all $\omega^{t} \in \Omega^{t}$, the following equations:  \ref{eqn:feas}, \ref{eqn:eq_tax}, \ref{eqn:eq_price}, and \ref{eqn:IC-time-3}. By using $B_{t+1}(\omega^{t}) =  b_{t+1}(\omega^{t})$, equations \ref{eqn:eq_tax}, \ref{eqn:eq_price} and the feasibility condition, we can enlarge the outcome path by $(c_{t},p_{t},b_{t+1},\tau_{t},g_{t})_{t=0}^{\infty}$. Clearly, restrictions \ref{eqn:charac-CEG_1b}, \ref{eqn:charac-CEG_2b} and \ref{eqn:charac-CEG_4b} hold. By replacing equations \ref{eqn:eq_tax} and \ref{eqn:eq_price} on \ref{eqn:IC-time-3}, it is easy to see that equation \ref{eqn:charac-CEG_2b} holds too.
\end{proof}

\subsection{Proofs of Supplementary Lemmas}

For the proof of Lemma \ref{lem:CEG_charac} we need the following lemma (the proof is relegated to the end of the section).

\begin{lemma}\label{lem:FONC_suff}
 Suppose assumption \ref{ass:U_prop} holds. Then first order conditions \ref{eqn:eq_tax} and \ref{eqn:eq_price} are also sufficient.
\end{lemma}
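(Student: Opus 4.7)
The plan is to show that the first order conditions, combined with the household's budget constraints and the a.s.-bounded nature of the debt process, are sufficient for optimality via the standard convexity argument. Since the feasible set is convex (budget constraints are affine in $(c_t, n_t, b_{t+1})$ and the debt/consumption bounds are convex), it suffices to exploit concavity of the map $(c,n) \mapsto v(c,n) \equiv u(c,1-n)$, which under assumption \ref{ass:U_prop} is (jointly) concave with $v_c = u_c$ and $v_n = -u_l$.

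Concretely, I would fix a candidate plan $(c^*_t, n^*_t, b^*_{t+1})_{t \ge 0}$ satisfying \ref{eqn:eq_tax}--\ref{eqn:eq_price} and the household budget constraint, and compare its welfare to an arbitrary feasible alternative $(\hat c_t, \hat n_t, \hat b_{t+1})_{t \ge 0}$. Concavity of $v$ at $(c^*_t, n^*_t)$ gives, pointwise in $\omega^t$,
\begin{align*}
v(c^*_t, n^*_t) - v(\hat c_t, \hat n_t) \;\ge\; u_c(\omega^t)(c^*_t - \hat c_t) \;-\; u_l(\omega^t)(n^*_t - \hat n_t).
\end{align*}
Substituting the intratemporal FOC $u_l(\omega^t) = u_c(\omega^t)(1-\tau_t(\omega^t))\kappa_t(\omega^t)$ and then using the period budget constraint for each plan, the right-hand side becomes $u_c(\omega^t)\{-p_t\,\xi_{t+1} + \varrho_t\,\xi_t\}$ where $\xi_t \equiv b^*_t - \hat b_t$ and $\xi_0 = 0$ since $b_0$ is fixed.

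Next I would use the intertemporal Euler equation \ref{eqn:eq_price} together with the tower property (noting that $\xi_{t+1}$ is measurable with respect to $\omega^t$) to obtain
\begin{align*}
\beta^t\, E_{\Pi(\cdot|\omega_0)}[u_c(\omega^t)\,p_t(\omega^t)\,\xi_{t+1}] \;=\; \beta^{t+1}\, E_{\Pi(\cdot|\omega_0)}[u_c(\omega^{t+1})\,\varrho_{t+1}(\omega^{t+1})\,\xi_{t+1}].
\end{align*}
Truncating at horizon $T$, the resulting sum telescopes, leaving only the boundary terms
\begin{align*}
E[u_c(\omega_0)\,\varrho_0\,\xi_0] \;-\; \beta^{T+1}\,E[u_c(\omega^{T+1})\,\varrho_{T+1}\,\xi_{T+1}].
\end{align*}
The first term is zero ($\xi_0 = 0$); the second term vanishes as $T \to \infty$ by a transversality argument, since $\xi_{T+1} \in [\underline b - \overline b,\,\overline b - \underline b]$ is bounded, $\varrho_{T+1} \in [0,1]$ by its definition in \ref{eqn:payoff}, and $E[u_c]$ stays bounded along equilibrium paths (consumption is bounded and bounded away from zero in equilibrium, using the Inada-type condition $\lim_{l\to 0}u_l(l) = \infty$ and $\kappa_t \le 1$).

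The main obstacle is the transversality step: without the uniform boundedness of $u_c$ along the candidate plan one cannot directly kill the terminal term. I would therefore briefly argue that on the relevant domain $[\underline b,\overline b] \supset \mathbb B$ (which is compact) and under \ref{ass:U_prop}, the equilibrium allocation keeps $c^*_t$ in a compact subset of $\mathbb R_{++}$, so $u_c(\omega^{T+1})$ is uniformly bounded and $\beta^{T+1} \to 0$ closes the argument. Combining the three inequalities yields $\sum_t \beta^t E[v(c^*_t,n^*_t) - v(\hat c_t,\hat n_t)] \ge 0$, establishing optimality of the FOC-satisfying plan and hence sufficiency.
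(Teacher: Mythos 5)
Your proposal is correct and follows essentially the same route as the paper: the paper's proof also rests on strict concavity of the objective, convexity of the budget and debt constraints, and a transversality condition that holds because debt is bounded, citing \cite{SLP_book89} Ch.\ 4.5 and \cite{Magill-Quinzii-ECMA94} Theorem 5.2 for the details you write out explicitly (the concavity inequality, the telescoping via the Euler equation, and the vanishing terminal term). Your explicit treatment of the boundary term is a spelled-out version of exactly the step the paper delegates to those references.
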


\begin{proof}[Proof of Lemma \ref{lem:CEG_charac}]
  Take $\boldsymbol{\sigma}$ and $(c_{t},g_{t},n_{t}, b_{t+1})_{t=0}^{\infty}$, and a price schedule $(p_{t})_{t}$ that satisfy the equations. It is easy to see that feasibility and market clearing holds (conditions 3 and 4). Also, by lemma \ref{lem:FONC_suff} optimality of the households is also satisfied.

  To check attainability of the government policy (condition 2), observe that equations \ref{eqn:charac-CEG_1b} -
  \ref{eqn:charac-CEG_3b} imply for all $\omega^{t} \in \Omega^{t}$,
  \begin{align*}
      g_{t} + \phi_{t}(\omega^{t})  \delta_{t} B_{t}(\omega^{t-1}) - \phi_{t}(\omega^{t}) p_{t}(\omega^{t}) B_{t+1}(\omega^{t}) \leq \kappa_{t}(\omega^{t}) \tau_{t}(\omega^{t}) n_{t}(\omega^{t}).
  \end{align*}

  Finally, we check optimality of the households. We first check that the sequences satisfy the budget constraint. Observe that by equations  \ref{eqn:charac-CEG_1b} - \ref{eqn:charac-CEG_3b}
\begin{align*}
    -c_{t}(\omega^{t}) + \kappa_{t}(\omega^{t})n_{t}(\omega^{t}) + \phi_{t}(\omega^{t}) \{ \delta_{t} B_{t}(\omega^{t-1}) - p_{t}(\omega^{t}) B_{t+1}(\omega^{t}) \} \leq \kappa_{t}(\omega^{t}) \tau_{t}(\omega^{t}) n_{t}(\omega^{t}).
\end{align*}

If $\phi_{t}(\omega^{t}) = 1$, then equation \ref{eqn:charac-CEG_3b} implies that $b_{t+1}(\omega^{t}) = B_{t+1}(\omega^{t})$ for all $t$ (and for $b_{0}$ we assume it is equal to $B_{0}$) and thus
\begin{align*}
   & - c_{t}(\omega^{t}) + \kappa_{t}(\omega^{t})n_{t}(\omega^{t}) + \delta_{t} b_{t}(\omega^{t-1}) - p_{t}(\omega^{t}) b_{t+1}(\omega^{t}) \leq \kappa_{t}(\omega^{t}) \tau_{t}(\omega^{t}) n_{t}(\omega^{t}).
\end{align*}
This coincides with the budget constraint of the household.

If $d_{t}(\omega^{t}) = 1$, but $a_{t}(\omega^{t}) = 0$, equations \ref{eqn:charac-CEG_1b} and \ref{eqn:charac-CEG_3b} imply that $b_{t}(\omega^{t-1}) = b_{t+1}(\omega^{t}) = 0$ for all $t$, so $- c_{t}(\omega^{t}) + \kappa_{t}(\omega^{t})n_{t}(\omega^{t})  = \kappa_{t}(\omega^{t}) \tau_{t}(\omega^{t}) n_{t}(\omega^{t})$, which is the budget constraint of the household.\\
Take $\boldsymbol{\sigma}$ and $(c_{t},g_{t},n_{t}, b_{t+1},p_{t})_{t=0}^{\infty}$ being a competitive equilibrium. Then it is easy to see that it satisfies the equations.
\end{proof}

\begin{proof}[Proof of Lemma \ref{lem:FONC_suff}]
  Under assumption \ref{ass:U_prop} the objective function of the household optimization problem is strictly concave. The budget constraints and debt constraint form a convex set of constraints. Thus, if the transversality condition holds, the FONC are sufficient; this follows from a simple adaptation of the results in \cite{SLP_book89} Ch. 4.5.

  In order to verify the transversality condition, it suffices to show that for any $\zeta_{t}(\omega^{t})$ such that $b_{t}(\omega^{t}) + \zeta_{t}(\omega^{t}) \in \mathbb{B}$,
\begin{align*}
  \lim_{T \rightarrow \infty} \beta^{T} E_{\Pi} [ u_{c}(\kappa_{T}(\omega^{T})n_{T}(\omega^{T}) - g_{T}, 1- n_{T}(\omega^{T})) \varrho_{T}(\omega^{T})\zeta_{T}(\omega^{T}) ] = 0,
\end{align*}

which follows from \cite{Magill-Quinzii-ECMA94} Theorem 5.2 as debt is constrained by assumption.
\end{proof}

\section{Proofs for Section \ref{sec:govt}}
\label{app:govt}

In this section we provide formal definitions of the sets $\mathcal{S}(h_{0},\phi)$ and $\Omega(h_{0},\phi)$ introduced in the recursive representation of the government problem. We also provide the proof of Theorem \ref{thm:VF-rec}.

\bigskip

\textbf{Formal definition of $\mathcal{S}(h_{0},\phi_{0})$.} For any $h_{0} \in \mathbb{H}$ and $\phi_{0} \in \{0,1\}$, let
%\begin{align*}
%  \mathcal{S}(h_{0},\phi_{0}) \equiv &\left\{ \boldsymbol{\gamma} :~\forall~(h^{t},\phi_{t})_{t=1}^{\infty},~\boldsymbol{\gamma}|_{(h^{t},\phi_{t})}~renders~ (d_{\tau}(\boldsymbol{\gamma}),a_{\tau}(\boldsymbol{\gamma}),B_{\tau+1}(\boldsymbol{\gamma}),n_{\tau}(\boldsymbol{\gamma}))_{\tau=t}^{\infty} \in CE_{\phi_{t}}(\omega_{t},B), \right.\\
%    &\left. with~B=(\delta_{t}\phi_{t}(\boldsymbol{\gamma})(h^{t})+(1-\phi_{t}(\boldsymbol{\gamma})(h^{t})))
% B_{t}(\boldsymbol{\gamma}) (h^{t-1},\phi_{t-1}(\boldsymbol{\gamma})(h^{t-1}))\right\}.
% \end{align*}
{\small{
		\begin{align*}
		\mathcal{S}(h_{0},\phi_{0}) \equiv &\left\{ \boldsymbol{\gamma} \hspace{-0.01in}:\hspace{-0.02in}~\forall~(h^{t},\phi_{t}) \in \mathbb{H}^{t} \times \{0,1\},~\boldsymbol{\gamma}|_{(h^{t},\phi_{t})}~renders~ (d_{\tau}(\boldsymbol{\gamma}),a_{\tau}(\boldsymbol{\gamma}),B_{\tau+1}(\boldsymbol{\gamma}),n_{\tau}(\boldsymbol{\gamma}))_{\tau=t}^{\infty} \in CE_{\phi_{t}}(\omega_{t},B), \right.\\
		&\left. with~B=(\delta_{t}\phi_{t}+(1-\phi_{t}))
		B_{t}(\boldsymbol{\gamma}) (h^{t-1},\phi_{t-1}(\boldsymbol{\gamma})(h^{t-1}))\right\}.
		\end{align*}}}
Analogously to \cite{CHANG_JET98}, by drawing the strategies $\gamma$ from $\mathcal{S}(h_{0},\phi_{0})$ we ensure that after any history following $(h_{0},\phi_{0})$ the continuation strategy delivers competitive equilibrium allocations. As it will become clearer later on, when making the default/repayment decision embedded in $\phi_{0}$, the default authority evaluates welfare after his alternative courses of action. To then compute the utility for any $\phi_{0}$, the candidate strategies $\gamma$ to be considered have to belong to the corresponding $\mathcal{S}(h_{0},\phi_{0})$.\footnote{Technically, it could be the case that $\phi_{t}=1$ but the debt that period is too high to be repaid in the competitive equilibrium. For this case, we simply set the per-period payoff at an arbitrary large negative value and thus ensure that this choice of $\phi_{t}$ will never arise as part of the optimal solution of the government problem.}

\bigskip

\textbf{Formal definition of $\Omega(h_{0},\phi)$.} Recall that for autarky ($\phi=0$), the ``promised'' marginal
utilities of consumption are trivially pinned down by the choice of labor that balances the government budget and maximizes the
per-period payoff; i.e., for any $g \in \mathbb{G}$, the ``promised'' marginal
utility of consumption equals $m_{A}(g) \equiv
u_{c}(\kappa \mathbf{n}^{\ast}_{0}(g)-g,1-\mathbf{n}^{\ast}_{0}(g))$ where\footnote{The lemma \ref{lem:charac-z}(1) ensures that $\mathbf{n}^{\ast}_{0}(g)$ exists and is
	unique for all $g$.}
\begin{align*}
\mathbf{n}^{\ast}_{0}(g) = \arg\max_{n \in [0,1]} \{ u(\kappa n-g,1-n) :  z(\kappa,n,g) = 0\}.
\end{align*}

%For the case $\phi=0$, the "promised" marginal utilities simply coincides with the equilibrium marginal utilities of consumption in financial autarky, denoted as $m_{A}(\cdot)$. These stem from evaluating the marginal utility the choice of labor that yields a balanced budget and maximizes the per-period payoff.

We now proceed to formally define our object of interest. For any $h_{0} = (\phi_{-1},B_{0},g_{0},\delta_{0}) \in \mathbb{H}$ and $\phi \in \{0,1\}$, let
\begin{align*}
\Omega(h_{0},\phi)= & \left\{ (\mu,v)\in \mathbb{R}_{+}\times\mathbb{R}:\exists~\boldsymbol{\gamma}~ \in \mathcal{S}(h_{0},\phi),~and~(V_{\tau}(h^{\tau},0),V_{\tau}(h^{\tau},1))_{h^{\tau},\tau}~such~that: \right. \\
%&\left.
%\exists~\boldsymbol{\gamma}~ \in \mathcal{S}(h_{0},\phi),~and~(V_{\tau}(h^{\tau},0),V_{\tau}(h^{\tau},1))_{h^{\tau},\tau}~such~that: \right.\\
%& \left. for~the~outcome~path~(d_{\tau+1},a_{\tau+1},B_{\tau+1},n_{\tau})_{\tau=t}^{\infty}~generated~by~\gamma|_{\phi=0}~the~following~holds: \right.\\
& \left. \mu = m_{A}(g)~if~\phi=0,~and~\mu=u_{c}(n_{0}(\boldsymbol{\gamma})(h_{0})-g_{0},1-n_{0}(\boldsymbol{\gamma})(h_{0}))~if~\phi=1, \right.\\
& \left. v = V_{0}(h_{0},\phi) \right.\\
%&\left. (V_{\tau}(h^{\tau},0),V_{\tau}(h^{\tau},1))_{h^{\tau},\tau\in\{0,
%...\}}~satisfy~recursions~expressions~\ref{eqn:V-rec-0}-\ref{eqn:V-rec-1},
%\right.\\
&\left. (V_{\tau}(h^{\tau},\phi))_{h^{\tau},\tau}~satisfies~expression~\ref{eq:8}~for~any~\phi \in \{0,1\}, \right.\\
%&\left. (V_{\tau}(h^{\tau},0),V_{\tau}(h^{\tau},1))_{h^{\tau},\tau\in\{0, ...\}}~satisfies~(\ref{eq:8})~for~any~\phi \in \{0,1\}, \right.\\
%&\left. (d_{\tau}(h^{\tau}),a_{\tau}(h^{\tau}))_{h^{\tau},\tau\in\{1, ...\}}~are~determined~by~(\ref{eqn:PF-a1})-(\ref{eqn:PF-d1}) \right\},
&\left. \boldsymbol{\gamma}^{D}|_{h_{0},\phi_{1}(\boldsymbol{\gamma})}~are~determined~by~expressions~\ref{eqn:PF-d1}-\ref{eqn:PF-a1} \right\},
\end{align*}

For each initial history $h_{0}$ and $\phi$, the set
$\Omega(h_{0},\phi)$ is given by all the values for marginal
utility and lifetime utility values at time zero that can be sustained
in a competitive equilibrium, wherein the default authority reacts
optimally from next period on. Each pair $(\mu,v)$ imposes
restrictions on the labor allocation at time 0  as well as on the lifetime utility at time 0, given $h_{0}$ and
$\phi$. Finally, note that the set $\Omega(h_{0},\phi)$ when $\phi$ is the value chosen by the government, contains the promised marginal utilities (and utility values) that can be delivered along the equilibrium path, while $\Omega(h_{0},\phi)$ when $\phi$ is the value not chosen by the government, contains off-equilibrium marginal utilities.

%we observe that $\Omega$ marginal and lifetime utilities for values of $\phi$ that are both on and off the equilibrium path.

The correspondence $\Omega$ is an equilibrium object, endogenously determined, that can be computed using numerical methods as the largest fixed point of an appropriately constructed correspondence operator, in the spirit of \cite{APS_EMETRICA90}. Henceforth, we proceed to formulate and solve the recursive problem of the fiscal authority as if we already know $\Omega$.

\bigskip

\textbf{Proof of theorem \ref{thm:VF-rec}} The next lemma characterizes the government surplus function, the proof is relegated to the end of this section.

\begin{lemma}\label{lem:charac-z}
  Let $(\kappa,n,g) \mapsto z(\kappa,n,g) = \left(\kappa - \frac{u_{l}(n-g,1-n)}{u_{c}(n-g,1-n)} \right)n - g$. Then:

  \begin{enumerate}
  \item $\arg\max_{n\in [0,1]} \{ u(\kappa n - g, 1-n)~:~ z(\kappa,n,g) = 0\}$ exists and is unique.
  \item Suppose assumption \ref{ass:linear_c} holds  and let $\bar{n}(g) = \arg\max_{n \in [0,1]} z(1,n,g)$. Then, $n \mapsto z(1,n,g)$ is decreasing and strictly concave for all $n \in [\bar{n}(g),1]$
  \end{enumerate}
\end{lemma}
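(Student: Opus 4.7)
My plan for Part 1 is to reparametrize the constrained problem by the labor tax rate $\tau\in[0,1)$, as follows. Given $\tau$, the household's first-order condition $u_{l}/u_{c}=(1-\tau)\kappa$ (equation \ref{eqn:eq_tax}) together with the balanced-budget identity $c=(1-\tau)\kappa n$ reduces to a single equation in $n$; strict concavity of $u$ in each argument together with the Inada condition $\lim_{l\to 0}u_{l}=\infty$ from Assumption \ref{ass:U_prop} delivers a unique continuous selection $n(\tau)\in(0,1)$. The tax-revenue function $R(\tau):=\kappa\tau n(\tau)$ is then continuous on $[0,1)$ with $R(0)=0$ and $\lim_{\tau\to 1^{-}}R(\tau)=0$ (because $n(\tau)\to 0$ by the Inada condition), so $R$ attains its maximum and the balanced-budget set $\{\tau:R(\tau)=g\}$ is compact and non-empty for feasible $g$. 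Applying the envelope theorem to $V(\tau):=u((1-\tau)\kappa n(\tau),1-n(\tau))$ gives
\[
V'(\tau)=-u_{c}\,\kappa\, n(\tau)<0,
\]
so household utility along the balanced-budget locus is strictly decreasing in $\tau$, and the government therefore optimally selects the \emph{smallest} $\tau$ solving $R(\tau)=g$. This tax rate is unique and delivers a unique $n^{\ast}=n(\tau^{\ast})$.

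For Part 2, my plan is a direct computation. Under Assumption \ref{ass:linear_c} we have $u_{l}/u_{c}=H'(1-n)$ and $\kappa=1$, so
\[
z(1,n,g)=n\bigl(1-H'(1-n)\bigr)-g.
\]
Differentiating twice yields $\partial_{n}z=1-H'(1-n)+nH''(1-n)$ and $\partial_{n}^{2}z=2H''(1-n)-nH'''(1-n)$. Setting $l=1-n$, this becomes $2H''(l)-(1-l)H'''(l)$, which is strictly negative by the curvature assumption $2H''(l)<(1-l)H'''(l)$ in Assumption \ref{ass:linear_c}. Hence $n\mapsto z(1,n,g)$ is strictly concave. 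To locate the maximizer, observe that $\partial_{n}z|_{n=0}=1-H'(1)>0$ (since $H'(1)<1$) and $\lim_{n\to 1}\partial_{n}z=-\infty$ (since $H'(0)=\infty$); thus $\bar{n}(g)\in(0,1)$ is uniquely determined by the first-order condition, and strict concavity then implies $\partial_{n}z<0$ on $(\bar{n}(g),1]$, so that $z(1,\cdot,g)$ is both strictly decreasing and strictly concave on $[\bar{n}(g),1]$.

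The main obstacle is in Part 1: the feasible set $\{n\in[0,1]:z(\kappa,n,g)=0\}$ is generically non-convex, because the Laffer-type behaviour of $R(\tau)$ typically produces two values of $n$ satisfying the balanced-budget condition for each feasible $g$. Uniqueness therefore cannot be extracted from concavity of the objective alone, and the envelope identity $V'(\tau)=-u_{c}\kappa n(\tau)$ is the crucial step that breaks the tie and rules out the ``wrong'' (high-tax) branch of the Laffer curve. Part 2, by contrast, is essentially algebraic: the third-derivative restriction in Assumption \ref{ass:linear_c} is tailored precisely to yield strict concavity of $z$, after which the boundary behaviour of $H'$ pins down the interior location of $\bar{n}(g)$.
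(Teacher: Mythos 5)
Your Part 2 is, up to notation, exactly the paper's computation: $\partial_n^2 z(1,n,g)=2H''(1-n)-nH'''(1-n)<0$ by the curvature restriction in Assumption \ref{ass:linear_c}, and then strict concavity together with $\partial_n z(1,\bar n(g),g)\le 0$ gives strict decrease on $[\bar n(g),1]$. (The paper does not establish interiority of $\bar n(g)$ — it simply notes the claim is vacuous if $\bar n(g)=1$ — whereas you pin it down via the boundary behaviour of $H'$; either is fine.)

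For Part 1 you take the dual route to the paper's. The paper argues directly in $n$-space: the derivative of the objective $n\mapsto u(\kappa n-g,1-n)$ is $\kappa u_c-u_l=\kappa\tau u_c\ge 0$ at any allocation whose implied tax rate lies in $[0,1]$, the constraint set $\{n: z(\kappa,n,g)=0\}$ is compact, so the maximizer exists by Weierstrass and, by monotonicity, is the largest root of the balanced-budget equation, hence unique. Your envelope identity $V'(\tau)=-u_c\kappa n(\tau)<0$ is the same monotonicity statement read off in $\tau$ rather than $n$, and both proofs select the same point (largest $n$ $=$ smallest $\tau$, the ``good side'' of the Laffer curve). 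The one substantive difference is that your route requires the labor-supply selection $n(\tau)$ to be single-valued so that $R(\tau)$ and $V(\tau)$ are functions: Assumption \ref{ass:U_prop} places no restriction on the cross-partial $u_{cl}$, so concavity in each argument separately plus the Inada condition do not by themselves force the household FOC $u_l=(1-\tau)\kappa u_c$ along $c=(1-\tau)\kappa n$ to have a unique root in $n$. This is patchable (for any $\tau$ in the balanced-budget set the relevant $n$ is pinned down by $\kappa\tau n=g$ anyway), but it is a step your write-up skips and the paper's $n$-space formulation avoids entirely. What your formulation buys in exchange is a cleaner, globally valid monotonicity statement via the envelope theorem, where the paper only verifies the sign of the derivative pointwise at allocations consistent with $\tau\in[0,1]$.
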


To show theorem \ref{thm:VF-rec} we need the following lemma whose proof is relegated to the end of this section.

\begin{lemma}\label{lem:rec-S}
  If, for any $h_{0} = (1,B,g,\delta) \in \mathbb{H}$ and $\phi_{0} \in \{0,1\}$, $\boldsymbol{\gamma} \in \mathcal{S}(h_{0},\phi_{0})$,
  then $\boldsymbol{\gamma}|_{h^{t},\phi} \in \mathcal{S}(h_{t},\phi)$
  for any $h^{t} \in \mathbb{H}^{t}$ and $\phi \in \{0,1\}$. Moreover,
{\small{\begin{align*}
    z(\kappa_{\phi_{0}},n_{0}(\boldsymbol{\gamma})(h_{0}),g)\mu_{0}(\boldsymbol{\gamma})(h_{0}) + \phi_{0} \{ \mathcal{P}_{\phi_{0}}(g,B_{1}(\boldsymbol{\gamma})(h_{0},\phi_{0}),\mu_{1}(\boldsymbol{\gamma})(h_{0},h_{1}(\cdot))) B_{1}(\boldsymbol{\gamma})(h_{0},\phi_{0}) - \delta \mu_{0}(\boldsymbol{\gamma})(h_{0}) B \} \geq 0
  \end{align*}}}
where $\kappa_{\phi} = \kappa (1-\phi) + \phi$ and $h_{1}(\cdot) \equiv (1,B_{1}(\boldsymbol{\gamma})(h_{0},1),\cdot,1)$ and for $t = 0,1$
\begin{align*}
  \mu_{t+1}(\boldsymbol{\gamma})(h^{t},h_{t+1}(g')) = u_{c}(n_{t+1}(\boldsymbol{\gamma})(h^{t},h_{t+1}(g'))-g',1-n_{t+1}(\boldsymbol{\gamma})(h^{t},h_{t+1}(g')))
\end{align*}
\end{lemma}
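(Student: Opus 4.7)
I plan to dispatch the lemma in two steps, matching its two assertions: first, the subgame-consistency statement, and second, the primal implementability inequality at time~$0$. The first is essentially an unwinding of definitions; the second comes from applying Theorem~\ref{thm:charac-CEG} at $t=0$ and translating the bond-price term into the $\mathcal{P}^{\ast}$-notation via the household Euler equation.

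For the first assertion, I would exploit the fact that $\mathcal{S}(h_{0},\phi_{0})$ is defined by a universal quantifier over \emph{every} public history $(h^{s},\phi_{s}) \in \mathbb{H}^{s}\times\{0,1\}$. Fix an arbitrary $(h^{t},\phi)$ and set $\boldsymbol{\gamma}'=\boldsymbol{\gamma}|_{(h^{t},\phi)}$. For any $s\geq t$, the continuation $\boldsymbol{\gamma}'|_{(h^{s},\phi_{s})}$ coincides with the continuation $\boldsymbol{\gamma}|_{(h^{s},\phi_{s})}$, which by hypothesis renders outcomes in $CE_{\phi_{s}}(\omega_{s},B')$ with $B'=(\delta_{s}\phi_{s}+(1-\phi_{s}))B_{s}(\boldsymbol{\gamma})(h^{s-1},\phi_{s-1}(\boldsymbol{\gamma})(h^{s-1}))$. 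This is precisely the defining property of $\mathcal{S}(h_{t},\phi)$, so $\boldsymbol{\gamma}'\in\mathcal{S}(h_{t},\phi)$.

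For the second assertion, I would apply Theorem~\ref{thm:charac-CEG} at $t=0$ to the outcome path generated by $\boldsymbol{\gamma}$, which by the first assertion belongs to $CE_{\phi_{0}}(\omega_{0},\delta_{0}\phi_{0}B+(1-\phi_{0})B)$. This yields
\begin{align*}
Z_{0} u_{c}(\omega^{0}) + \phi_{0}\bigl\{p_{0}(\omega^{0}) u_{c}(\omega^{0}) B_{1}(\omega^{0}) - \delta_{0} u_{c}(\omega^{0}) B\bigr\} \geq 0.
\end{align*}
The first term equals $z(\kappa_{\phi_{0}},n_{0},g)\mu_{0}$ by the definition of the primary surplus and of $\mu_{0}$. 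When $\phi_{0}=0$ the bracket is annihilated and the claim is immediate. When $\phi_{0}=1$ it remains to show $p_{0}(\omega^{0})u_{c}(\omega^{0})=\mathcal{P}^{\ast}_{1}(g,B_{1}(\boldsymbol{\gamma})(h_{0},1),\mu_{1}(\boldsymbol{\gamma})(h_{0},h_{1}(\cdot)))$. I would substitute the Euler-pricing equation \ref{eqn:price-1} for $p_{0}$, then invoke the optimality of $\boldsymbol{\gamma}^{D}$ (built into the definition of $\mathcal{S}$) so that $d_{1}=\mathbf{d}^{\ast}(g_{1},B_{1},\mu_{1}(g_{1}))$ as in \ref{eqn:PF-d1}, and finally, on the default branch ($\phi_{1}=0$), replace $q_{1}(\omega^{1})u_{c}(\omega^{1})$ by $m_{A}(g_{1})\mathcal{P}^{\ast}_{0}(g_{1},B_{1})$. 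After these substitutions the right-hand side matches the definition of $\mathcal{P}^{\ast}_{1}$ exactly, yielding the stated inequality.

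The main obstacle I anticipate is verifying the default-branch identity $q_{1}u_{c}=m_{A}\mathcal{P}^{\ast}_{0}$ cleanly. Two observations carry it through: (a) in financial autarky the government runs a balanced budget, so the feasibility and first-order conditions pin labor uniquely at $\mathbf{n}^{\ast}_{0}(g_{1})$ and hence force $u_{c}(\omega^{1})=m_{A}(g_{1})$; and (b) the same recursive logic applied to the secondary-market pricing equation \ref{eqn:price-2}, together with the optimality of the renegotiation policy $\mathbf{a}^{\ast}$ through \ref{eqn:PF-a1}, reproduces the expression that defines $\mathcal{P}^{\ast}_{0}$. Modulo these two identifications, the remainder of the argument is a direct algebraic rewriting of the implementability inequality in primal variables.
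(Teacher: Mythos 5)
Your proposal is correct and follows essentially the same route as the paper's proof: the first claim is the unwinding of the universal quantifier in the definition of $\mathcal{S}$, and the second is the implementability inequality of Theorem \ref{thm:charac-CEG} at $t=0$ combined with rewriting $p_{0}u_{c}$ and $q_{1}$ as the recursive price functionals via the household Euler equations, using the balanced-budget identity $u_{c}=m_{A}(g')$ on the autarky branch. The only slip is your parenthetical that default-authority optimality is ``built into'' $\mathcal{S}(h_{0},\phi_{0})$ --- it is not (it is imposed separately, in the definition of $\Omega$ and in the admissibility constraints of the sequential programs) --- but this is harmless because the lemma's $\mathcal{P}_{\phi_{0}}$ is evaluated at the strategy's own default and acceptance decisions $d_{t+1}(\boldsymbol{\gamma})$, $a_{t+1}(\boldsymbol{\gamma})$, so the identification goes through without any appeal to $\mathbf{d}^{\ast}$ or $\mathbf{a}^{\ast}$.
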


The proof of Theorem \ref{thm:VF-rec} is analogous to the standard proof of the principle of optimality for the single agent case, e.g. Theorem 9.2 in \cite{SLP_book89}.

\begin{proof}[Proof of Theorem \ref{thm:VF-rec}]  By definition of $V^{\ast}_{1}$, $V^{\ast}_{0}$ and
  $\overline{V}^{\ast}_{1}$, it follows that with $h_{0} = (\phi_{-1},B,g,\bar{\delta})$ and $\phi=0$
\begin{align} \label{eqn:VF0_c2}
 &V_{0}^{\ast}(g,B) = \sup_{\boldsymbol{\gamma}} V_{0}(\boldsymbol{\gamma})(h_{0},0) \\
 & ~ subject~to~ \boldsymbol{\gamma}=(\boldsymbol{\gamma}^{F},\boldsymbol{\gamma}^{D}) \in \mathcal{S}(h_{0},0) \label{eqn:VF0_strat}\\
 & ~~~~\boldsymbol{\gamma}^{D}|_{h_{0},\phi=0}~are~determined~by~(\ref{eqn:PF-a1})-(\ref{eqn:PF-d1})\\
 & ~~~~u_{c}(\kappa n_{0}(\boldsymbol{\gamma})(h_{0}) - g, 1-n_{0}(\boldsymbol{\gamma})(h_{0})) = m_{A}(g) \label{eqn:VF0_c1}
\end{align}
and similarly, with $h_{0} = (1,B,g,1)$ and $\mu \in \mathbb{R}_{+}$ and $\phi_{0} =1$
\begin{align} \label{eqn:VF1_c1}
  &V_{1}^{\ast}(g,B,\mu) = \sup_{\boldsymbol{\gamma}} V_{0}(\boldsymbol{\gamma})(h_{0},1) \\
 & ~ subject~to~ \boldsymbol{\gamma}=(\boldsymbol{\gamma}^{F},\boldsymbol{\gamma}^{D}) \in \mathcal{S}(h_{0},1) \\
 & ~~~~\boldsymbol{\gamma}^{D}|_{h_{0},\phi=1}~are~determined~by~(\ref{eqn:PF-a1})-(\ref{eqn:PF-d1}) \\
 & ~~~~u_{c}(n_{0}(\boldsymbol{\gamma})(h_{0}) - g, 1-n_{0}(\boldsymbol{\gamma})(h_{0})) = \mu. \label{eqn:VF1_c2}
\end{align}
Finally, with $h_{0} = (0,B,g,\delta)$ and $\phi_{0}=1$
\begin{align} \label{eqn:VFbar_c1}
  &\overline{V}_{1}^{\ast}(g,\delta B) = \sup_{\boldsymbol{\gamma}} V_{0}(\boldsymbol{\gamma})(h_{0},1) \\
 & ~ subject~to~ \boldsymbol{\gamma}=(\boldsymbol{\gamma}^{F},\boldsymbol{\gamma}^{D}) \in \mathcal{S}(h_{0},1) \\
 & ~~~~\boldsymbol{\gamma}^{D}|_{h_{0},\phi=1}~are~determined~by~(\ref{eqn:PF-a1})-(\ref{eqn:PF-d1}).
\end{align}

The first (sequential) problem consists of selecting $\boldsymbol{\gamma}$, consistent with competitive equilibrium and optimality for the default authority from $t=1$ on, to maximize the lifetime utility of households, conditional on $h_{0}=(\phi_{-1},B,g,\delta)$ and $\phi=0$. The solution is given by $V_{0}^{\ast}(g,B)$, which does not depend on $\delta$ nor $\mu$. Condition \ref{eqn:VF0_c1} ensures that the current marginal utility is equal to the autarkic value defined before.

Problem \ref{eqn:VF1_c1} is analogous to Problem \ref{eqn:VF0_c2} with
$\phi=1$ and $\phi_{-1}=1 $ instead. In this case, we impose through condition
\ref{eqn:VF1_c2} that the current marginal utility is $\mu$.

Henceforth, we refer to strategies that satisfy the restrictions on
the above programs as \emph{admissible}. We also assume that the
suprema are achieved; this assumption is to ease the exposition, if this were not the case the proof still goes
through by exploiting the definition of the supremum.

By definition, $\overline{V}_{1}^{\ast}(g,\delta B) \geq V_{0}(\boldsymbol{\gamma})(0,B,g,\delta,1)$ for all $\boldsymbol{\gamma} \in \mathcal{S}(0,B,g,\delta,1)$ and $\boldsymbol{\gamma}^{D}|_{h_{0},\phi=1}$ are determined by (\ref{eqn:PF-a1})-(\ref{eqn:PF-d1}). By definition of $\Omega(0,B,g,\delta,1)$, this implies that for all $(\mu,v) \in \Omega(0,B,g,\delta,1)$, $\overline{V}_{1}^{\ast}(g,\delta B) \geq v$. On the other hand, assuming that the there exists a strategy $\boldsymbol{\gamma}$ that achieves the supremum, it has to be true that there exists a $\mu$ such that $(\mu,\overline{V}_{1}^{\ast}(g,\delta B)) \in \Omega(0,B,g,\delta,1)$. Therefore,
\begin{align}\label{eqn:VF-rec-1}
  \overline{V}_{1}^{\ast}(g,\delta B) = \max\{ v | (\mu,v) \in \Omega(0,B,g,\delta,1)\}.
\end{align}

It is easy to see that the same result applies for any time $t$ and any history $(h^{t},0,B,g,\delta)$ (not just $t=0$ and $h_{0} = (0,B,g,\delta)$).

Let $h_{0} \equiv (\phi_{-1},B,g,\delta)$ and $\phi = 0$. Suppose that there
exists a strategy $\hat{\boldsymbol{\gamma}}$ that achieves the
supremum in program \ref{eqn:VF0_c2}. Then,\footnote{Henceforth we abuse notation and use
  $n_{0}(\boldsymbol{\gamma})(g)$ instead of $n_{0}(\boldsymbol{\gamma})(h_{0})$.}
  \begin{align} \notag
   V^{\ast}_{0}(g,B) = & u( \kappa n_{0}(\hat{\boldsymbol{\gamma}})(g) - g,1- n_{0}(\hat{\boldsymbol{\gamma}})(g)) \\ \notag
& + \beta\lambda \int_{\mathbb{G}} \int_{\Delta} \max\{ V_{1}(\hat{\boldsymbol{\gamma}})(h_{0},0,B,(g',\delta'),1), V_{1}(\hat{\boldsymbol{\gamma}})(h_{0},0,B,(g',\delta'),0) \} \pi_{\Delta}(d\delta')\pi_{\mathbb{G}}(dg'|g) \\
  & +  \beta (1-\lambda)\int_{\mathbb{G}} V_{1}(\hat{\boldsymbol{\gamma}})(h_{0},0,B,(g',\overline{\delta}),0)\pi_{\mathbb{G}}(dg'|g).   \notag
 \end{align}

Observe that, for any $g' \in \mathbb{G}$,
$V_{1}(\hat{\boldsymbol{\gamma}})(h_{0},0,B,(g',\delta'),0)$ is
constant with respect to $\delta'$. Also, note that
$\hat{\boldsymbol{\gamma}}|_{h^{1},\phi}$ is admissible by
lemma \ref{lem:rec-S}. It also follows that
$V_{1}(\boldsymbol{\gamma})(h_{0},0,B,(g',\delta'),0) =
V_{0}(\boldsymbol{\gamma})(0,B,(g',\delta'),0)$ for any strategy
$\boldsymbol{\gamma}$ and any $(h_{0},g',\delta')$. Thus
\begin{align}
  V_{1}(\hat{\boldsymbol{\gamma}})(h_{0},0,B,(g',\bar{\delta}),0) =
  V_{0}^{\ast}(g',B),~\forall g' \in \mathbb{G}.
\end{align}
Therefore,
\begin{align}\label{eqn:VF-rec-2}
   V^{\ast}_{0}(g,B) = & u( \kappa n_{0}(\hat{\boldsymbol{\gamma}})(g) - g,1- n_{0}(\hat{\boldsymbol{\gamma}})(g)) \\
& + \beta\lambda \int_{\mathbb{G}} \int_{\Delta} \max\{ V_{1}(\hat{\boldsymbol{\gamma}})(h_{0},0,B,(g',\delta'),1), V_{0}^{\ast}(g',B) \} \pi_{\Delta}(d\delta')\pi_{\mathbb{G}}(dg'|g) \notag \\
  & +  \beta (1-\lambda)\int_{\mathbb{G}} V_{0}^{\ast}(g',B) \pi_{\mathbb{G}}(dg'|g).\notag
\end{align}

By construction, $n_{0}(\hat{\boldsymbol{\gamma}})(g) = \mathbf{n}^{\ast}_{0}(g)$ and thus,
\begin{align}
  \notag
   V^{\ast}_{0}(g,B) = & u( \kappa \mathbf{n}^{\ast}_{0}(g) - g,1-\mathbf{n}^{\ast}_{0}(g)) \\ \notag
& + \beta\lambda \int_{\mathbb{G}} \int_{\Delta} \max\{ V_{1}(\hat{\boldsymbol{\gamma}})(h_{0},0,B,(g',\delta'),1), V^{\ast}_{0}(g',B)  \} \pi_{\Delta}(d\delta')\pi_{\mathbb{G}}(dg'|g) \\
  & +  \beta (1-\lambda)\int_{\mathbb{G}} V^{\ast}_{0}(g',B)\pi_{\mathbb{G}}(dg'|g).   \notag
\end{align}

Observe that at $(h_{0},\phi_{0}=0,B,g',\delta',\phi_{1}=1)$ a ``new'' fiscal authority begins at time $t=1$. By construction, this fiscal authority starts without binding promises regarding the marginal utility of consumption. Since $\hat{\boldsymbol{\gamma}}$ is optimal, it follows that $V_{1}(\hat{\boldsymbol{\gamma}})(h_{0},0,B,g',\delta',1) = \overline{V}_{1}^{\ast}(g',\delta' B)$. Therefore,
% [MOSTRAR QUE $V_{1}(\hat{\boldsymbol{\gamma}})(h_{0},0,B,(g',\delta'),1) = \overline{V}_{1}^{\ast}(g',\delta' B)$. (I) NOTAR QUE EN ESTE NODO, $\hat{\boldsymbol{\gamma}}$ ES LA ESTRATEGIA DE ``OTRA FISCAL AUTHORITY''. (II) POR CONSTRUCCION ESTA FISCAL AUTHORITY ELIGE COMO EL 3ER PROBLEMA ARRIBA, CON UN ESTADO PARTICULAR. (III) USAR DISPLAY \ref{eqn:VF-rec-1}]
\begin{align*}%  \notag
   V^{\ast}_{0}(g,B) = & u( \kappa \mathbf{n}^{\ast}_{0}(g) - g,1-\mathbf{n}^{\ast}_{0}(g)) + \beta\lambda \int_{\mathbb{G}} \int_{\Delta} \max\{ \overline{V}_{1}^{\ast}(g',\delta' B), V^{\ast}_{0}(g',B)  \} \pi_{\Delta}(d\delta')\pi_{\mathbb{G}}(dg'|g) \\
  & +  \beta (1-\lambda)\int_{\mathbb{G}}
  V^{\ast}_{0}(g',B)\pi_{\mathbb{G}}(dg'|g).   %\notag \\
\end{align*}
% We conclude by observing that since any $(\mu,v) \in \Omega(h_{0},0)$ is generated in the same fashion as described above, it also follows that
% \begin{align}
%   \notag
%    V^{\ast}_{0}(g,B) = & \max_{v_{0}'(\cdot)} u( \kappa \mathbf{n}^{\ast}_{0}(g) - g,1-\mathbf{n}^{\ast}_{0}(g)) + \beta\lambda \int_{\mathbb{G}} \int_{\Delta} \max\{ \overline{V}_{1}^{\ast}(g',\delta' B), v_{0}(g')  \} \pi_{\Delta}(d\delta')\pi_{\mathbb{G}}(dg'|g) \\
%   & +  \beta (1-\lambda)\int_{\mathbb{G}} v_{0}(g')\pi_{\mathbb{G}}(dg'|g).   \notag
% \end{align}
% for $(m_{A}(\cdot),v'_{0}(\cdot)) \in \Omega(h_{0},0)$.

We now consider program \ref{eqn:VF1_c1}. With an slight abuse of notation, let $\hat{\boldsymbol{\gamma}}$ be the strategy that achieves the supremum in program \ref{eqn:VF1_c1}. Henceforth, let $\mu_{t}(\boldsymbol{\gamma})(h^{t}) \equiv
u_{c}(n_{t}(\boldsymbol{\gamma})(h^{t}) -
g_{t},1-n_{t}(\boldsymbol{\gamma})(h^{t}))$ for any strategy
$\boldsymbol{\gamma}$ and history $h^{t} \in \mathbb{H}^{t}$. Observe that, for $h_{1} =
(1,B_{1}(\hat{\boldsymbol{\gamma}})(h_{0},1),(g',1))$, due to lemma
\ref{lem:rec-S}, $\hat{\boldsymbol{\gamma}}|_{h^{1},\phi}$ is admissible (taking $\mu$ as $\mu_{1}(\hat{\boldsymbol{\gamma}})(h^{1})$), because $\hat{\boldsymbol{\gamma}}|_{h^{1},\phi} \in \mathcal{S}(h_{1},\phi)$,
and also $\boldsymbol{\gamma}^{D}|_{h^{1},\phi=1}$ are determined by
expressions \ref{eqn:PF-a1}-\ref{eqn:PF-d1}. Thus
\begin{align*}
   V_{1}(\hat{\boldsymbol{\gamma}})(h_{0},h_{1},1) \leq V^\ast_{1}(g',B_{1}(\hat{\boldsymbol{\gamma}})(h_{0},1),\mu_{1}(\hat{\boldsymbol{\gamma}})(h^{1}))\quad\text{  and  }\quad V_{1}(\hat{\boldsymbol{\gamma}})(h_{0},h_{1},0) \leq V^\ast_{0}(g',B_{1}(\hat{\boldsymbol{\gamma}})(h_{0},1)).
\end{align*}

Therefore, letting $h^{1}(g') = (1,B_{1}(\hat{\boldsymbol{\gamma}})(h_{0},1),(g',1)))$,
\begin{align*}
   V^{\ast}_{1}(g,\delta B, \mu) \leq & u( n_{0}(\hat{\boldsymbol{\gamma}})(g) - g,1- n_{0}(\hat{\boldsymbol{\gamma}})(g)) \\ \notag
& + \beta  \int_{\mathbb{G}} \max\{ V^{\ast}_{1}(g',B_{1}(\hat{\boldsymbol{\gamma}})(h_{0},1),\mu_{1}(\hat{\boldsymbol{\gamma}})(h^{1}(g'))), V^{\ast}_{0}(g',B_{1}(\hat{\boldsymbol{\gamma}})(h_{0},1)) \} \pi_{\mathbb{G}}(dg'|g).
\end{align*}

By lemma \ref{lem:rec-S}, $(n_{0}(\hat{\boldsymbol{\gamma}})(g),B_{1}(\hat{\boldsymbol{\gamma}})(h_{0},1),\mu_{1}(\hat{\boldsymbol{\gamma}})(h^{1}(\cdot))))$ are such that $u_{c}( n_{0}(\hat{\boldsymbol{\gamma}})(g) - g,1- n_{0}(\hat{\boldsymbol{\gamma}})(g)) = \mu$ and
\begin{align*}
  z(1,n_{0}(\hat{\boldsymbol{\gamma}})(g),g)\mu + \mathcal{P}^{\ast}_{1}(g,B_{1}(\hat{\boldsymbol{\gamma}})(h_{0},1),\mu_{1}(\hat{\boldsymbol{\gamma}})(h^{1}(\cdot))B_{1}(\hat{\boldsymbol{\gamma}})(h_{0},1)    \geq B \mu.
\end{align*}
% where, $\circ$ stands for $(1,B_{1}(\hat{\boldsymbol{\gamma}})(h_{0},1),(\cdot,1))$.
Therefore,
\begin{align*}
  V^{\ast}_{1}(g,  B, \mu) \leq  \max_{(n',B',\mu'(\cdot)) \in \Gamma(g, B,\mu)} u( n - g,1- n)  + \beta  \int_{\mathbb{G}} \max\{ V^{\ast}_{1}(g',B',\mu'(g')), V^{\ast}_{0}(g',B') \} \pi_{\mathbb{G}}(dg'|g).
\end{align*}

We now show that the reversed inequality holds:
{\small{ \begin{align} \notag
		V^{\ast}_{1}(g, B, \mu) = & u( n_{0}(\hat{\boldsymbol{\gamma}})(g) - g,1- n_{0}(\hat{\boldsymbol{\gamma}})(g)) \\ \notag
		+ & \beta  \int_{\mathbb{G}} \max\{ V_{1}(\hat{\boldsymbol{\gamma}})(h_{0},1,B_{1}(\hat{\boldsymbol{\gamma}})(h_{0},1),(g',1),1), V_{1}(\hat{\boldsymbol{\gamma}})(h_{0},1,B_{1}(\hat{\boldsymbol{\gamma}})(h_{0},1),(g',1),0) \} \pi_{\mathbb{G}}(dg'|g) \\  \notag
		\geq & u( n_{0}(\boldsymbol{\gamma})(g) - g,1- n_{0}(\boldsymbol{\gamma})(g)) \label{eqn:VF-rec-3} \\ 
		+ & \beta  \int_{\mathbb{G}} \max\{ V_{1}(\boldsymbol{\gamma})(h_{0},1,B_{1}(\boldsymbol{\gamma})(h_{0},1),(g',1),1), V_{1}(\boldsymbol{\gamma})(h_{0},1,B_{1}(\boldsymbol{\gamma})(h_{0},1),(g',1),0) \} \pi_{\mathbb{G}}(dg'|g)
		\end{align}}}
where $h_{0} = (1,B,g,1)$ and the second line holds for any
$\boldsymbol{\gamma}$ admissible.

 For this we construct
the following strategy $\tilde{\boldsymbol{\gamma}}$: (1) $\tilde{\boldsymbol{\gamma}}^{D}$ are determined by
expressions \ref{eqn:PF-a1}-\ref{eqn:PF-d1}; (2) for any $\phi$ and $h_{1}$, $\tilde{\boldsymbol{\gamma}}^{F}(h_{0},\phi) = B_{1}(\tilde{\boldsymbol{\gamma}})(h_{0},\phi)$ and $\mu_{1}(\tilde{\boldsymbol{\gamma}})(h^{1})$ are such that
\begin{align}\label{eqn:rec-bc-1}
z(1,n_{0}(\tilde{\boldsymbol{\gamma}})(g),g)\mu + \phi\{ \mathcal{P}^{\ast}_{1}(g,B_{1}(\tilde{\boldsymbol{\gamma}})(h_{0},1),\mu_{1}(\tilde{\boldsymbol{\gamma}})(h_{0},\circ))B_{1}(\tilde{\boldsymbol{\gamma}})(h_{0},1) - B \mu\} \geq 0,
\end{align}
where $\circ$ stands for
$(1,B_{1}(\tilde{\boldsymbol{\gamma}})(h_{0},\phi),(\cdot,\phi))$, and
$B_{1}(\tilde{\boldsymbol{\gamma}})(h_{0},\phi)=B$ and
$\mu_{1}(\tilde{\boldsymbol{\gamma}})(h^{1}) = m_{A}(g')$ if $\phi=0$;
(3) the remaining components of the strategy
$\tilde{\boldsymbol{\gamma}}^{F}$ agree with
$\hat{\boldsymbol{\gamma}}^{F}$, i.e.,
$\tilde{\boldsymbol{\gamma}}^{F}|_{h^{1},\phi} =
\hat{\boldsymbol{\gamma}}^{F}|_{h^{1},\phi}$ for all history $h^{1}
\in \mathbb{H}^{1}$ and $\phi \in \{0,1\}$.

%  Second, for $t=0$ and
% $h_{1}=(1,B_{1}(\tilde{\boldsymbol{\gamma}})(h_{0}),(g',1))$ with any $g'$, let
% $(n_{0}(\tilde{\boldsymbol{\gamma}})(h_{0}),B_{1}(\tilde{\boldsymbol{\gamma}})(h_{0}),\mu_{1}(\tilde{\boldsymbol{\gamma}})(h^{1}))$
% be such that $u_{c}( n_{0}(\tilde{\boldsymbol{\gamma}})(h_{0}) - g,1- n_{0}(\tilde{\boldsymbol{\gamma}})(h_{0})) = \mu$ and
%  \begin{align*}
%   z(1,n_{0}(\tilde{\boldsymbol{\gamma}})(h_{0}),g)\mu + \mathcal{P}^{\ast}_{1}(g,B_{1}(\tilde{\boldsymbol{\gamma}})(h_{0}),\mu_{1}(\tilde{\boldsymbol{\gamma}})(h_{0},\circ))B_{1}(\tilde{\boldsymbol{\gamma}})(h_{0}) \geq \delta B \mu
% \end{align*}
% where, $\circ$ stands for $(1,B_{1}(\tilde{\boldsymbol{\gamma}})(h_{0}),(\cdot,1))$, and $\mu_{1}(\tilde{\boldsymbol{\gamma}})(h^{1}) = u_{c}(n_{1}(\tilde{\boldsymbol{\gamma}})(h^{1})-g',1-n_{1}(\tilde{\boldsymbol{\gamma}})(h^{1}))$ with $h_{1} =
% (1,B_{1}(\tilde{\boldsymbol{\gamma}})(h_{0}),(g',\delta'))$. Finally, $\tilde{\boldsymbol{\gamma}}^{F}|_{h^{1}} =
% \hat{\boldsymbol{\gamma}}^{F}|_{h^{1}}$.

We now verify that $\tilde{\boldsymbol{\gamma}}$ is admissible, which boils down to
proving that $\tilde{\boldsymbol{\gamma}} \in
\mathcal{S}(h_{0},1)$. Observe that by our construction $(n_{0}(\tilde{\boldsymbol{\gamma}})(g),B_{1}(\tilde{\boldsymbol{\gamma}})(h_{0},1))$ satisfy the implementability constraint (equation \ref{eqn:rec-bc-1}) at time $t=0$ for a price given by $ \mathcal{P}^{\ast}_{1}(g,B_{1}(\tilde{\boldsymbol{\gamma}})(h_{0},1),\mu_{1}(\tilde{\boldsymbol{\gamma}})(h_{0},\circ))$ and it satisfies that $B_{1}(\tilde{\boldsymbol{\gamma}})(h_{0},0)=B$. Additionally, from lemma \ref{lem:rec-S},
$\hat{\boldsymbol{\gamma}}|_{h^{1},\phi} \in \mathcal{S}(h_{1},\phi)$, so these two results imply that $\tilde{\boldsymbol{\gamma}} \in
\mathcal{S}(h_{0},1)$.

Also for any $h^{1} \in \mathbb{H}^{2}$ and $\phi \in \{0,1\}$ with $h_{1} = (1,B_{1}(\tilde{\gamma})(h_{0},1),g',1)$, since $\hat{\boldsymbol{\gamma}}|_{h^{1},\phi} \in \mathcal{S}(h_{1},\phi)$, it follows that $V_{1}(\hat{\boldsymbol{\gamma}})(h^{1},1) = V^{\ast}_{1}(g', B_{1}(\tilde{\boldsymbol{\gamma}})(h_{0},1),\mu_{1}(\tilde{\boldsymbol{\gamma}})(h^{1}))$ and $V_{1}(\hat{\boldsymbol{\gamma}})(h^{1},0) = V^{\ast}_{0}(g',B_{1}(\tilde{\boldsymbol{\gamma}})(h_{0},1))$, otherwise there would be an admissible strategy that achieves a higher value for $V_{0}(\cdot)(h_{0},\phi)$ than $\hat{\boldsymbol{\gamma}}$.

Hence, evaluating display \ref{eqn:VF-rec-3} at $\tilde{\boldsymbol{\gamma}}$, it follows that
\begin{align*}
  V^{\ast}_{1}(g, B, \mu) \geq & u( n_{0}(\tilde{\boldsymbol{\gamma}})(g) - g,1- n_{0}(\tilde{\boldsymbol{\gamma}})(g)) \\ \notag
& + \beta  \int_{\mathbb{G}} \max\{ V^{\ast}_{1}(g',B_{1}(\tilde{\boldsymbol{\gamma}})(h_{0},1),\mu_{1}(\tilde{\boldsymbol{\gamma}})(h_{0},h_{1}(g'))), V^{\ast}_{0}(g',B_{1}(\tilde{\boldsymbol{\gamma}})(h_{0},1)) \} \pi_{\mathbb{G}}(dg'|g)
\end{align*}
where $h_{1}(g')$ stands for $(1,B_{1}(\tilde{\boldsymbol{\gamma}})(h_{0},1),(g',1))$. Since $(n_{0}(\tilde{\boldsymbol{\gamma}})(h_{0}),B_{1}(\tilde{\boldsymbol{\gamma}})(h_{0},1),\mu_{1}(\tilde{\boldsymbol{\gamma}})(h^{1}))$ are arbitrary (other than the fact that they belong to $\Gamma(g, B,\mu)$), it follows that
\begin{align*}
  V^{\ast}_{1}(g, B, \mu) \geq \max_{(n,B',\mu'(\cdot) \in \Gamma(g,\delta B,\mu)}  u( n - g,1- n)  + \beta  \int_{\mathbb{G}} \max\{ V^{\ast}_{1}(g',B',\mu'(g')), V^{\ast}_{0}(g',B') \} \pi_{\mathbb{G}}(dg'|g).
\end{align*}

\end{proof}

\subsection{Proofs of Supplementary Lemmas}

\begin{proof}[Proof of Lemma \ref{lem:charac-z}]
(1) Under assumption assumption \ref{ass:U_prop}, $n \mapsto u'(\kappa n-g,1-n) = u_{c}(\kappa n-g) - u_{l}(1-n) = 1 - (1-\tau)\kappa $ and since $\kappa < 1$ and $\tau \in [0,1]$ it implies that $u'(\kappa n-g,1-n)>0$. Also, $n \mapsto u(\kappa n-g,1-n)$ is continuous. Moreover, $\{ n : z(\kappa,n,g) = 0 \} = \{ n : \kappa (u_{c}(n-g,1-n) - u_{l}(n-g,1-n))n - u_{c}(n-g,1-n) g = 0  \}$. Under assumption  \ref{ass:U_prop}, $u_{c}$ and $u_{l}$ are continuous, and thus this set is closed (and bounded). Therefore it is compact. By the theorem of the maximum $\arg\max_{n \in [0,1]} \{u(\kappa n-g,1-n) : z(\kappa,n,g) = 0  \} $ exists. Uniqueness follows from the fact that $n \mapsto u(\kappa n-g,1-n)$ is increasing.

(2) First observe that $n \mapsto z(1,n,g) =  (1-H'(1-n))n -g$ (with $u_{c}=1$) is continuous and thus $\bar{n}(g)$ exists for all $g \in \mathbb{G}$ ($\mathbb{G}$ is such that for all $g \in \mathbb{G}$, $\max_{n \in [0,1]} z(1,n,g) \geq g$). Observe that $n \mapsto z'(1,n,g) = (1-H'(1-n)) + H''(1-n)n$ and $n \mapsto z''(1,n,g) = 2 H''(1-n) - H'''(1-n)n$. By assumption \ref{ass:linear_c}, $z''(1,n,g)<0$ and thus is strictly concave. We now show that $z$ is decreasing. If $\bar{n}(g) = 1$ then the statement is vacuous, so consider $\bar{n}(g) < 1$. Since $\bar{n}(g)$ is the ``argmax'', $z'(1,\bar{n}(g),g) \leq 0$. Since $z$ is strictly concave, $z'$ is a decreasing, hence $z'(1,n,g) < z'(1,\bar{n}(g),g) \leq 0$ for all $n > \bar{n}(g)$, and the result follows.
\end{proof}

\begin{proof}[Proof of Lemma \ref{lem:rec-S}]
  If $\boldsymbol{\gamma} \in \mathcal{S}(h_{0},\phi_{0})$ it follows that, for any public history $h^{t}$ with $h_{t} = (\phi_{t-1},B_{t},\omega_{t}=(g_{t},\delta_{t})) $ with $B_{t} = B_{t}(\boldsymbol{\gamma})(h^{t-1},\phi)$ and  any $\phi \in \{0,1\}$,
  \begin{align*}
    z(\kappa_{\phi},n_{t}(\boldsymbol{\gamma})(h^{t}),g_{t})u_{c}(\omega^{t}) + \phi \{ p_{t}(\omega^{t}) u_{c}(\omega^{t}) B_{t+1}(\boldsymbol{\gamma})(h^{t},\phi) - \delta_{t} u_{c}(\omega^{t}) B_{t} \} \geq 0
  \end{align*}
and $B_{t+1}(\boldsymbol{\gamma})(h^{t},0) = B_{t}$,
\begin{align} \notag
  p_{t}(\omega^{t}) u_{c}(\omega^{t}) = &\beta \int_{\mathbb{G}}
  d_{t+1}(\boldsymbol{\gamma})(h^{t},h_{t+1}(g'))
  \mu_{t+1}(\boldsymbol{\gamma})(h^{t},h_{t+1}(g'))
  \pi_{\mathbb{G}}(dg'|g_{t}) \\ \label{eqn:rec-S-1}
& + \beta \int_{\mathbb{G}}
  (1-d_{t+1}(\boldsymbol{\gamma})(h^{t},h_{t+1}(g')))
  m_{A}(g') q_{t+1}(\omega^{t},\bar{\delta},g') \pi_{\mathbb{G}}(dg'|g_{t})
\end{align}
where $h_{t+1}(g') \equiv (1,B_{t+1}(\boldsymbol{\gamma})(h^{t},1),g',1)$ and
\begin{align*}
  \mu_{t+1}(\boldsymbol{\gamma})(h^{t},h_{t+1}(g')) = u_{c}(n_{t+1}(\boldsymbol{\gamma})(h^{t},h_{t+1}(g'))-g',1-n_{t+1}(\boldsymbol{\gamma})(h^{t},h_{t+1}(g')))
\end{align*}
and $q_{t}$ is the ``secondary market'' price at time $t$, i.e.,
{\small{\begin{align}
  q_{t+1}(\omega^{t+1},\bar{\delta},g) \equiv & \beta \lambda \int_{\mathbb{G}} \int_{\Delta}
  a_{t+1}(\boldsymbol{\gamma})(h^{t},h_{t+1}(g',\delta'))
  \mu_{t+1}(\boldsymbol{\gamma})(h^{t},h_{t+1}(g',\delta')) \delta'
  \pi_{\Delta}(d\delta')  \pi_{\mathbb{G}}(dg'|g) \label{eqn:rec-S-2} \\  \notag
 & + \beta \int_{\mathbb{G}} \left( 1- \lambda + \lambda \int_{\Delta}
  (1-a_{t+1}(\boldsymbol{\gamma})(h^{t},h_{t+1}(g',\delta')))
  \pi_{\Delta}(d\delta') \right) m_{A}(g') q_{t+2}(\omega^{t+1},\bar{\delta},g') \pi_{\mathbb{G}}(dg'|g)
\end{align}}}
with $h_{t+1}(g',\delta') = (0,\delta '
B_{t+1}(\boldsymbol{\gamma})(h^{t},1) , g' , \delta')$.

From equation \ref{eqn:rec-S-1} it follows that $p_{t}(\omega^{t}) u_{c}(\omega^{t}) = \mathcal{P}_{1}(g_{t},B_{t+1}(\boldsymbol{\gamma})(h^{t},1),\mu_{t+1}(\boldsymbol{\gamma})(h^{t},h_{t+1}(\cdot)))$ and from equation \ref{eqn:rec-S-2} $q_{t+1}(\omega^{t},\bar{\delta},g') = \mathcal{P}_{0}(g',B_{t+1}(\boldsymbol{\gamma})(h^{t},1))$. Also, from these equations and the first display it is clear that if $\boldsymbol{\gamma} \in \mathcal{S}(h_{0},\phi_{0})$, then $\boldsymbol{\gamma}|_{h^{t},\phi} \in \mathcal{S}(\phi_{t-1},B_{t},\omega_{t},\phi)$.
\end{proof}

\section{Proofs for Section \ref{sec:bench}}
\label{app:bench}
In order to show proposition \ref{pro:opt-dec}, we need the following
lemmas (whose proofs are relegated to the end of this section). Throughout
this section we assume that assumption \ref{ass:linear_c} holds.

Throughout this section, let
\begin{align*}
  \Gamma_{\phi}(g,B) = \{ (n,B') : z(\kappa_{\phi},n,g) + \phi( \mathcal{P}^{\ast}_{\phi}(g,B')B' - B ) \geq 0~and~B'=B~if~\phi = 0 \}
\end{align*}
with $\kappa_{\phi} \equiv  \phi + \kappa (1-\phi)$.
\begin{lemma}\label{lem:V_bdd}
There exists a constant $\infty > C>0$, such that  $|V^{\ast}_{\phi}(g,B)| \leq C$ for all $(\phi,g,B)$ such that $\Gamma_{\phi}(g,B) \ne \{ \emptyset \}$.
\end{lemma}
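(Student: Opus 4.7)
The plan is to establish boundedness of the per-period payoff on the feasible set and then to transfer this bound to the value function via the Bellman recursions of Theorem~\ref{thm:VF-rec} (as specialized in Example~\ref{exa:quasi}). First I would collect uniform bounds on the ingredients of $\Gamma_\phi(g,B)$. Since $\mathbb{G}$ is finite and $\mathbb{B}$ is finite, both $g$ and $B$ are uniformly bounded; and the equilibrium prices $\mathcal{P}^{\ast}_\phi(g,B')$ lie in $[0,\beta]$, since they are discounted expectations of a payoff in $[0,1]$, as one sees by inspecting the formulas for $\mathcal{P}^{\ast}_{0}$ and $\mathcal{P}^{\ast}_{1}$ in Theorem~\ref{thm:VF-rec} with $u_{c}=m_{A}=1$ in the quasi-linear case. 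Consequently there is a finite $M$ such that for every $(\phi,g,B)$ with $\Gamma_\phi(g,B) \neq \emptyset$ and every $(n,B') \in \Gamma_\phi(g,B)$, the primary surplus satisfies $(\kappa_{\phi} - H'(1-n))n \geq -M$.

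The key step is to translate this into a uniform upper bound $\bar{n}^{\ast} < 1$ on feasible labor. By Assumption~\ref{ass:linear_c}, $H'(0)=\infty$ and $H'$ is decreasing, so $H'(1-n) \to \infty$ as $n \to 1^{-}$; combined with $\kappa_\phi \leq 1$, this forces $(\kappa_{\phi}-H'(1-n))n \to -\infty$ as $n \to 1^{-}$, uniformly in $\phi \in \{0,1\}$. I can therefore pick $\bar{n}^{\ast} \in (0,1)$, independent of $(\phi,g,B)$, such that $(\kappa_{\phi}-H'(1-n))n < -M$ for all $n > \bar{n}^{\ast}$, which means every feasible $n$ must lie in $[0,\bar{n}^{\ast}]$. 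This is the one place where the Inada-style divergence of $H'$ near the origin is essential, and it is the main obstacle; once it is in place, the remaining steps are routine.

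With $n$ uniformly bounded away from $1$, the interval $[1-\bar{n}^{\ast},1]$ is a compact subset of $(0,1]$ on which $H$ is continuous (its extension to $l=1$ is legitimate because $H'(1)$ is finite), hence bounded. So the per-period payoff $\kappa_\phi n - g + H(1-n)$ is uniformly bounded in absolute value by some $\bar{u}<\infty$ on the feasible set, across all admissible $(\phi,g,B)$. Finally, since by Theorem~\ref{thm:VF-rec} the value $V^{\ast}_\phi(g,B)$ is the supremum over admissible (competitive-equilibrium) policies of an expected discounted sum of per-period payoffs, each of which is bounded by $\bar{u}$ in absolute value, I obtain $|V^{\ast}_\phi(g,B)| \leq \sum_{t\geq 0}\beta^{t}\bar{u} = \bar{u}/(1-\beta) =: C$, as desired.
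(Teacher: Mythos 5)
Your proof is correct. The paper reaches the same conclusion through a different packaging: it defines an operator $T$ on functions of $(\phi_{-},g,\delta,B)$ whose fixed point yields $V^{\ast}_{0},V^{\ast}_{1}$, shows that $T$ maps functions bounded by $C$ into functions bounded by $c'+\beta C$, and concludes that the fixed point is bounded by $c'/(1-\beta)$; you instead bound each realized per-period payoff along any admissible strategy and sum the geometric series in the sequential formulation. The substance is identical --- everything reduces to a uniform bound on the per-period payoff over the feasible set --- but where the paper simply asserts the existence of the constant $c'$ with $|\max_{n\in\Gamma_{\phi}(g,B)}\kappa_{\phi}n-g+H(1-n)|\leq c'$, you actually prove the harder half of that claim: since $H'(0)=\infty$ the surplus $(\kappa_{\phi}-H'(1-n))n$ diverges to $-\infty$ as $n\to 1^{-}$, while the right-hand side of the implementability constraint is uniformly bounded (finite $\mathbb{G}$ and $\mathbb{B}$, prices in $[0,\beta]$), so feasible labor is uniformly bounded away from $1$ and $H(1-n)$ cannot blow up downward. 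This is a genuine value-added over the paper's terse treatment; the only cosmetic caveats are that the bound $\mathcal{P}^{\ast}_{0}\leq\beta$ relies on the i.i.d. formula of Lemma \ref{lem:T-q}(3) (in general Lemma \ref{lem:T-q}(2) only gives $\lambda\beta E_{\pi_{\Delta}}[\delta]/(1-\beta)$, though any finite bound suffices for your argument), and that one should note, as the paper does implicitly, that admissibility of a strategy guarantees the implementability constraint holds at every continuation history, so your per-period bound applies along the entire path and not just at the initial node.
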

\begin{lemma}\label{lem:V_inc}
$B \mapsto V^{\ast}_{1}(g,B)$ is non-increasing for all $g \in \mathbb{G}$.\footnote{This result clearly implies that $\delta \mapsto V^{\ast}_{1}(g,\delta B)$ is non-decreasing for all $g \in \mathbb{G}$ and $B>0$.}
%  \begin{enumerate}
%  \item $B \mapsto V^{\ast}_{0}(g,B)$ is non-increasing for all $g \in \mathbb{G}$.
% \item If $\pi_{\mathbb{G}}(\cdot|g_{1}) <_{FOSD} \pi_{\mathbb{G}}(\cdot|g_{2})$ for $g_{1} < g_{2}$, $g \mapsto V^{\ast}_{0}(g,B)$ and $g \mapsto V^{\ast}_{1}(g,B)$ are non-increasing, for all $B \in \mathbb{B}$.
%  \end{enumerate}
\end{lemma}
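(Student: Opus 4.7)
\textbf{Proof proposal for Lemma \ref{lem:V_inc}.} The approach is a direct feasible-set comparison, exploiting the fact that the current debt level $B$ enters the Bellman recursion for $V^{\ast}_{1}$ only through the implementability (budget) constraint as a pure liability, while it does \emph{not} appear in the per-period payoff, the continuation value, or the pricing function. Concretely, under assumption~\ref{ass:linear_c} the relevant Bellman equation collapses (as in example~\ref{exa:quasi} above, extended to allow partial renegotiation offers) to
\begin{align*}
  V^{\ast}_{1}(g,B) = \max_{(n,B') \in \Gamma(g,B)} \left\{ n-g+H(1-n) + \beta \int_{\mathbb{G}} \max\{V^{\ast}_{1}(g',B'), V^{\ast}_{0}(g',B')\}\,\pi_{\mathbb{G}}(dg') \right\},
\end{align*}
with $\Gamma(g,B)=\{(n,B')\in[0,1]\times\mathbb{B} : z(1,n,g)+\mathcal{P}^{\ast}_{1}(B')B' - B \geq 0\}$. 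By assumption~\ref{ass:g-iid} the equilibrium price $\mathcal{P}^{\ast}_{1}(B')$ is a function of $B'$ alone, so nothing on the right-hand side other than the constraint involves $B$.

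The key step is then to show that the feasible correspondence $B\mapsto\Gamma(g,B)$ is set-decreasing: if $B_{1}\leq B_{2}$ and $(n,B')\in\Gamma(g,B_{2})$, then $z(1,n,g)+\mathcal{P}^{\ast}_{1}(B')B'-B_{1}\geq z(1,n,g)+\mathcal{P}^{\ast}_{1}(B')B'-B_{2}\geq 0$, so $(n,B')\in\Gamma(g,B_{1})$, giving $\Gamma(g,B_{2})\subseteq \Gamma(g,B_{1})$. Since the objective in the Bellman equation depends only on $(n,B')$ and not on $B$, the maximum over the larger feasible set $\Gamma(g,B_{1})$ is at least as large as the maximum over $\Gamma(g,B_{2})$, which gives $V^{\ast}_{1}(g,B_{1})\geq V^{\ast}_{1}(g,B_{2})$.

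Two small technical points need attention but should not pose a real obstacle. First, to make the fixed-point argument clean, I would define an auxiliary Bellman operator $T$ acting on bounded functions $W(g,B)$ (with the convention $W(g,B)=-\infty$, or equivalently some sufficiently negative constant via lemma~\ref{lem:V_bdd}, when $\Gamma(g,B)=\emptyset$) and show that $T$ maps the set of functions non-increasing in $B$ into itself, by the same feasible-set argument above applied pointwise; then the standard contraction/fixed-point argument (or monotone convergence from $W_{0}\equiv 0$) delivers $V^{\ast}_{1}$ as a non-increasing function of $B$. Second, I should verify that $V^{\ast}_{0}(g',B')$ inside the continuation expectation does not compromise the argument---and it does not, since $B'$ is a choice variable, so the $B$-independence of the integrand is preserved. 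The only genuinely delicate point is that $\mathcal{P}^{\ast}_{1}$ is itself endogenous, but since it depends on the equilibrium default policy evaluated at the chosen $B'$ (and at future states) rather than on the current $B$, the argument is unaffected.
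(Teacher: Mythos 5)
Your proof is correct and follows essentially the same route as the paper's: the appendix proof of lemma \ref{lem:V_inc} likewise observes that $\Gamma_{1}(g,B_{1}) \subseteq \Gamma_{1}(g,B_{2})$ for $B_{1} \geq B_{2}$ and that $B$ enters the Bellman equation only through the constraint set, so the maximum over the larger feasible set is weakly larger. The additional fixed-point/monotone-operator remarks you include are not needed in the paper's version, which works directly with the recursion already established in theorem \ref{thm:VF-rec}, but they do not change the substance of the argument.
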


\begin{lemma}\label{lem:Vbar_slope}
	There exists a $C>0$, such that $
  \max_{g' \in \mathbb{G}} \max_{B_{1},B_{2} \in \mathbb{B}^{2}} | V^{\ast}_{0}(g',B_{1}) - V^{\ast}_{0}(g',B_{2})  |  \leq \lambda \frac{\beta C}{1-\beta}$.
% For any $\epsilon>0$,
%   there exists a $\lambda(\epsilon) > 0 $ such that, for any $\lambda \in [0,\lambda(\epsilon)]$ and any $B_{1}$ and
%   $B_{2}$ in $\mathbb{B}$ and $g \in \mathbb{G}$,
%   \begin{align}
%     \left| \bar{V}^{\ast}(g,B_{1}) - \bar{V}^{\ast}(g,B_{2})     \right| \leq \epsilon
%   \end{align}
% where
% \begin{align}\label{eqn:VF-bar}
%  (g,B) \mapsto \bar{V}^{\ast}(g,B) \equiv  \int_{\Delta} \{\mathbf{a}^{\ast}(g,\delta,B) V^{\ast}_{1}(g,\delta B) + (1-\mathbf{a}^{\ast}(g,\delta,B)) V^{\ast}_{0}(g,B)\} \pi_{\Delta}(d \delta).
% \end{align}
\end{lemma}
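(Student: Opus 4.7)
The strategy is to exploit the Bellman recursion for $V^{\ast}_{0}$ together with the fact that the $B$-independent static payoff $u(\kappa \mathbf{n}^{\ast}_{0}(g)-g,1-\mathbf{n}^{\ast}_{0}(g))$ drops out when we difference in $B$. Subtracting the recursion at $B_{1}$ from the one at $B_{2}$ gives
\begin{align*}
V^{\ast}_{0}(g,B_{1})-V^{\ast}_{0}(g,B_{2}) &= \beta\lambda\int_{\mathbb{G}}\int_{\Delta}\Big[\max\{\overline{V}^{\ast}_{1}(g',\delta'B_{1}),V^{\ast}_{0}(g',B_{1})\}\\
&\qquad\qquad-\max\{\overline{V}^{\ast}_{1}(g',\delta'B_{2}),V^{\ast}_{0}(g',B_{2})\}\Big]\pi_{\Delta}(d\delta')\pi_{\mathbb{G}}(dg'|g)\\
&\quad+\beta(1-\lambda)\int_{\mathbb{G}}\bigl[V^{\ast}_{0}(g',B_{1})-V^{\ast}_{0}(g',B_{2})\bigr]\pi_{\mathbb{G}}(dg'|g).
\end{align*}

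Next I would apply the elementary inequality $|\max\{a,b\}-\max\{c,d\}|\leq \max\{|a-c|,|b-d|\}$ pointwise inside the first integral, take absolute values, and then take the maximum over $g$ on both sides. Introducing the abbreviations $M\equiv \max_{g}|V^{\ast}_{0}(g,B_{1})-V^{\ast}_{0}(g,B_{2})|$ and $K\equiv \max_{g',\delta'}|\overline{V}^{\ast}_{1}(g',\delta'B_{1})-\overline{V}^{\ast}_{1}(g',\delta'B_{2})|$, this yields the scalar recursion
\begin{align*}
M \leq \beta\lambda\max\{K,M\}+\beta(1-\lambda)M.
\end{align*}

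A brief case split resolves this: if $K\geq M$, the inequality rearranges to $M\leq \frac{\beta\lambda K}{1-\beta+\beta\lambda}\leq \frac{\beta\lambda K}{1-\beta}$; if $K<M$, the inequality collapses to $M\leq \beta M$, forcing $M=0$. Hence in all cases $M\leq \frac{\beta\lambda K}{1-\beta}$. To close the argument, I would invoke Lemma \ref{lem:V_bdd}: since $\overline{V}^{\ast}_{1}(g,\delta B) = \max\{v:(\mu,v)\in\Omega(0,B,g,\delta,1)\}$ is obtained from an admissible competitive equilibrium at $\phi=1$, its absolute value is bounded by the same constant $C$ that bounds $|V^{\ast}_{1}|$, so $K\leq 2C$. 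Redefining the constant if necessary, this gives the stated bound $M\leq \lambda\frac{\beta C}{1-\beta}$.

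The only delicate step is the recursion derivation itself: one must be careful that the inequality $|\max-\max|\leq \max\{|\cdot|,|\cdot|\}$ is used before taking the max over $g$ (so that the $V^{\ast}_{0}$ term appearing inside the $\max$ can be replaced by $M$ uniformly), and that the bound on $K$ via $\overline{V}^{\ast}_{1}$ genuinely follows from the same uniform bound $C$ of Lemma \ref{lem:V_bdd} applied to the $\phi=1$ value. The linear dependence on $\lambda$ in the final bound is the key qualitative feature and reflects that, absent any chance of an offer ($\lambda=0$), the defaulted-debt level $B$ is irrelevant in autarky.
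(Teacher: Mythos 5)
Your argument is essentially the paper's: difference the Bellman recursion for $V^{\ast}_{0}$ so the autarkic static payoff cancels, bound the offer-contingent term using the uniform bound of Lemma \ref{lem:V_bdd}, and solve the resulting inequality $M \leq \beta\lambda C' + \beta M$ (the paper reaches this directly via the acceptance indicator $\mathbf{a}^{\ast}$ rather than your $\max\{K,M\}$ case split, but the two are equivalent). The one point to patch is your claim $K \leq 2C$: Lemma \ref{lem:V_bdd} bounds $V^{\ast}_{1}$ only where $\Gamma_{1}(g,\delta B) \neq \{\emptyset\}$, and at infeasible states $\overline{V}^{\ast}_{1}$ is set to an arbitrarily large negative value, so your supremum defining $K$ is not controlled there; the paper's proof handles exactly this by noting that $\mathbf{a}^{\ast}(g,\delta,B)=0$ at such states, and your version needs the analogous observation that the infeasible branch is then never the selected argument of the $\max$ (after which the recursion still closes, at worst with a larger constant).
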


The previous lemma implies that, for any $\epsilon>0$, there exists a $\lambda(\epsilon) > 0 $ such that, for any $\lambda \in [0,\lambda(\epsilon)]$
   \begin{align}
    \max_{g' \in \mathbb{G}} \max_{B_{1},B_{2} \in \mathbb{B}^{2}} | V^{\ast}_{0}(g',B_{1}) - V^{\ast}_{0}(g',B_{2})  | \leq \epsilon.
   \end{align}
\begin{lemma}\label{lem:roll_over1}
  There exists $\bar{\lambda}
  > 0$ such that for all $\lambda \in [0,\bar{\lambda}]$, the
  following holds: For all $(g,B)$ such that $B>0$ and $\mathbf{d}^{\ast}(g,B) = 1$,
  $\mathcal{P}^{\ast}_{1}(g,B')B' \leq B$ for all $B' \in \mathbb{B}$.
\end{lemma}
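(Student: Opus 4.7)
The plan is to argue by contradiction using a ``roll-over'' comparison strategy. Suppose the conclusion fails: for every $\bar\lambda>0$ there exist $\lambda\in[0,\bar\lambda]$ and $(g,B,B')$ with $B>0$, $\mathbf{d}^{\ast}(g,B)=1$, and $\mathcal{P}^{\ast}_{1}(B')B'>B$. Because $\mathbb{G}$ and $\mathbb{B}$ are finite, and $V^{\ast}_{\phi,\lambda}$ and $\mathcal{P}^{\ast}_{\phi,\lambda}$ depend continuously on $\lambda$ (as fixed points of Bellman operators that are continuous in $\lambda$), I may extract a sequence $\lambda_n\downarrow 0$ along which the triple $(g,B,B')$ is fixed and both conditions persist.

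The central construction is a feasible strategy at state $(g,B)$ under access: set labor at the autarkic level $n=\mathbf{n}^{\ast}_{0}(g)$ and issue $B'$. Because $z(1,\mathbf{n}^{\ast}_{0}(g),g)=0$ and $\mathcal{P}^{\ast}_{1,\lambda_n}(B')B'-B>0$, the constraint defining $\Gamma(g,B)$ is satisfied with positive slack. Using Theorem \ref{thm:VF-rec} to bound $V^{\ast}_{1,\lambda_n}(g,B)$ from below by this strategy's value, subtracting the Bellman equation for $V^{\ast}_{0,\lambda_n}(g,B)$, and using Lemmas \ref{lem:V_bdd} and \ref{lem:Vbar_slope} to control the autarky-continuation terms by $O(\lambda_n)$, I arrive at
\[
V^{\ast}_{1,\lambda_n}(g,B)-V^{\ast}_{0,\lambda_n}(g,B) \ \ge\ \beta\,E_{g'}\bigl[\max\{V^{\ast}_{1,\lambda_n}(g',B')-V^{\ast}_{0,\lambda_n}(g',B'),\,0\}\bigr] \ +\ O(\lambda_n).
\]

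Passing to the limit $\lambda_n\downarrow 0$, autarky becomes absorbing, $V^{\ast}_{0,0}$ is independent of $B$, and $\mathcal{P}^{\ast}_{0,0}\equiv 0$, so $\mathcal{P}^{\ast}_{1,0}(B')=\beta\Pr\{V^{\ast}_{1,0}(g',B')\ge V^{\ast}_{0,0}(g')\}$. The requirement $\mathcal{P}^{\ast}_{1,0}(B')B'\ge B>0$ forces a positive probability of repayment at $(g',B')$ in the limit, and generically the associated strict preference produces a positive constant $\eta>0$ on the right-hand side of the display. Choosing $\bar\lambda$ small enough that the $O(\lambda_n)$ term is dominated by $\eta/2$ yields $V^{\ast}_{1,\lambda_n}(g,B)>V^{\ast}_{0,\lambda_n}(g,B)$ for all $n$ large, contradicting $\mathbf{d}^{\ast}_{\lambda_n}(g,B)=1$.

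The hard part will be the knife-edge subcase where $V^{\ast}_{1,0}(g',B')=V^{\ast}_{0,0}(g')$ on the entire repayment set, so the leading term vanishes in the limit. To close it, I refine the comparison strategy by choosing labor $n^{\text{alt}}$ with $z(1,n^{\text{alt}},g)=-\epsilon_n<0$ on the decreasing branch of $z(1,\cdot,g)$ characterized by Lemma \ref{lem:charac-z}; because $u(n-g,1-n)$ is strictly concave with interior maximum $n^{\ast\ast}$ solving $H'(1-n^{\ast\ast})=1$ and is strictly increasing on $[\mathbf{n}^{\ast}_{0}(g),n^{\ast\ast}]$, the budget slack $\epsilon_n>0$ is converted into a strict per-period utility gain bounded below by a positive constant depending only on the fixed triple $(g,B,B')$. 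Finiteness of $\mathbb{G}$ and $\mathbb{B}$ then permits a uniform choice of $\bar\lambda$ so that this per-period gain dominates the $O(\lambda_n)$ continuation correction, delivering the required contradiction and establishing the existence of the claimed $\bar\lambda$.
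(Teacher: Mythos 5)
Your proof is correct in substance, but the step that actually closes it --- the ``refinement'' in your final paragraph --- is precisely the paper's proof, while the machinery in your first two paragraphs is dispensable and is the only place where unjustified claims appear. The paper goes directly to the comparison strategy you reserve for the knife-edge subcase: repay, issue $B'$, and use the budget slack $\mathcal{P}^{\ast}_{1}(g,B')B'-B>0$ to run a primary deficit, which by Lemma \ref{lem:charac-z} moves labor up the decreasing branch of $z(1,\cdot,g)$ past $\mathbf{n}^{\ast}_{0}(g)$ and converts the slack into a strict per-period utility gain $r(\mathbf{n}(g,B,B'))-r(\mathbf{n}^{\ast}_{0}(g))\geq\epsilon'>0$, uniform by finiteness of $\mathbb{G}\times\mathbb{B}^{2}$; Lemmas \ref{lem:V_bdd} and \ref{lem:Vbar_slope} then bound every continuation discrepancy by $O(\lambda)$, and $\bar{\lambda}$ is chosen so that $\epsilon'$ dominates. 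Since that argument needs no case distinction --- the only continuation comparison it uses is $\max\{V^{\ast}_{1}(g',B'),V^{\ast}_{0}(g',B')\}\geq V^{\ast}_{0}(g',B')$, which always holds --- your opening construction with autarkic labor, the extraction of a sequence $\lambda_{n}\downarrow 0$, and the appeal to a limit economy buy you nothing, and they are where the soft spots sit: continuity of $V^{\ast}_{\phi,\lambda}$ and especially of $\mathcal{P}^{\ast}_{1,\lambda}$ in $\lambda$ is asserted rather than proved (the price involves indicators of default sets, which can jump), and ``generically the strict preference produces $\eta>0$'' is not an argument, since ties on the entire repayment set are exactly the case that must be handled (and can occur under the paper's tie-breaking convention). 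Deleting the first two paragraphs and running the slack-exploitation argument for all cases reproduces the paper's proof; the one residual caveat, which you share with the paper, is that the uniform lower bound $\epsilon'$ is taken over the finite set of triples $(g,B,B')$ even though the slack, and hence the labor gain, depends on $\lambda$ through $\mathcal{P}^{\ast}_{1}$, so the uniformity of $\epsilon'$ in $\lambda$ deserves an explicit word.
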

We observe that for each $B \in \mathbb{B}$, $\mathcal{P}^{\ast}_{0}$ is the fixed point of the
following mapping
\begin{align*}
 q \mapsto & T^{\ast}_{B}[q](\cdot) \\
= &  \lambda \beta \int_{\mathbb{G} \times
   \Delta} \hspace{-0.05in}\mathbf{a}^{\ast}(g',\delta',B) \delta'
 \pi_{\Delta}(d\delta') \pi_{\mathbb{G}}(dg'|\cdot) +   \beta
 \int_{\mathbb{G}} \left( (1-\lambda) + \lambda \int_{\Delta} (1-\mathbf{a}^{\ast}(g',\delta',B))
 \pi_{\Delta}(d\delta')  \right) q(g') \pi_{\mathbb{G}}(dg'|\cdot) \\
= & \lambda \beta \int_{\mathbb{G} \times
   \Delta} \hspace{-0.05in}\mathbf{a}^{\ast}(g',\delta',B) \delta'
 \pi_{\Delta}(d\delta') \pi_{\mathbb{G}}(dg'|\cdot) +   \beta
 \int_{\mathbb{G}} \left( 1 - \lambda \int_{\Delta} \mathbf{a}^{\ast}(g',\delta',B)
 \pi_{\Delta}(d\delta')  \right) q(g') \pi_{\mathbb{G}}(dg'|\cdot)
\end{align*}
for any $B \in \mathbb{B}$, and $q \in \{ f : \mathbb{G} \rightarrow \mathbb{R} ~\text{uniformly bounded} \}$. We use this insight to derive properties
of $\mathcal{P}^{\ast}_{0}$. % Henceforth, let $g \mapsto A_{B}(g) \equiv \lambda \beta \int_{\mathbb{G} \times
%    \Delta} \mathbf{a}^{\ast}(g',\delta',B) \delta'
%  \pi_{\Delta}(d\delta') \pi_{\mathbb{G}}(dg'|g) $, and $g \mapsto C_{B}(g) \equiv \left( (1-\lambda) + \lambda \int_{\Delta} (1-\mathbf{a}^{\ast}(g,\delta',B))
%  \pi_{\Delta}(d\delta')  \right)$ is non-negative and less than one.
\begin{lemma}%[{lem:T-q}]
\label{lem:T-q}
  Suppose assumption \ref{ass:linear_c} holds. Then:
  \begin{enumerate}
  \item For each $B \in \mathbb{B}$, $T^{\ast}_{B}$ is a contraction.
  \item For any $(g,B) \in \mathbb{G} \times \mathbb{B}$, $\mathcal{P}^{\ast}_{0}(g,B) \in \left[ 0 , \lambda
      \frac{\beta}{1-\beta} E_{\pi_{\Delta}}[\delta]   \right]$.
  \item If $g$ is $iid$ (distributed according to $\pi_{\mathbb{G}}(\cdot)$), then
    $\mathcal{P}^{\ast}_{0}(g,B)$ is constant in $g$ and given by
    \begin{align*}
      \mathcal{P}^{\ast}_{0}(g,B) = \frac{\lambda \beta \int_{\mathbb{G} \times
   \Delta} \mathbf{a}^{\ast}(g',\delta',B) \delta'
 \pi_{\Delta}(d\delta') \pi_{\mathbb{G}}(dg')}{1-\beta + \beta \lambda
\int_{\mathbb{G} \times
   \Delta} \mathbf{a}^{\ast}(g',\delta',B)
 \pi_{\Delta}(d\delta') \pi_{\mathbb{G}}(dg')}
    \end{align*}
and in this case $| \mathcal{P}^{\ast}_{0}(g,B) | \leq
\frac{\beta \lambda }{1-\beta + \beta \lambda } <1$.
 %  \item If $(\delta,B) \mapsto \mathbf{a}^{\ast}(g,\delta,B)$ is non-increasing, $B \mapsto \mathcal{P}^{\ast}_{0}(g,B)$ is non-increasing for
%     all $g \in \mathbb{G}$.
  \end{enumerate}
\end{lemma}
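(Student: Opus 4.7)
The proof of all three parts hinges on exploiting the affine structure of $T^{\ast}_{B}$ in its argument $q$, together with the fact that $1 - \lambda \int_{\Delta}\mathbf{a}^{\ast}(g',\delta',B)\pi_{\Delta}(d\delta') \in [1-\lambda,1]$ since $\mathbf{a}^{\ast} \in \{0,1\}$. For part (1), I would equip the space of uniformly bounded functions $q:\mathbb{G}\rightarrow \mathbb{R}$ with the sup-norm $\|\cdot\|_{\infty}$ and, for any $q_{1},q_{2}$, simply subtract the two expressions for $T^{\ast}_{B}[q_{i}]$; the constant (first) term cancels, and what remains is
\begin{align*}
(T^{\ast}_{B}[q_{1}]-T^{\ast}_{B}[q_{2}])(g) = \beta \int_{\mathbb{G}} \left(1 - \lambda \int_{\Delta}\mathbf{a}^{\ast}(g',\delta',B)\pi_{\Delta}(d\delta')\right)(q_{1}(g')-q_{2}(g'))\,\pi_{\mathbb{G}}(dg'|g).
\end{align*}
Bounding the bracketed factor by $1$ and $|q_{1}(g')-q_{2}(g')|\leq \|q_{1}-q_{2}\|_{\infty}$ pointwise yields $\|T^{\ast}_{B}[q_{1}]-T^{\ast}_{B}[q_{2}]\|_{\infty}\leq \beta \|q_{1}-q_{2}\|_{\infty}$, so $T^{\ast}_{B}$ is a contraction with modulus $\beta<1$ on a complete metric space, hence Banach's fixed point theorem gives uniqueness of $\mathcal{P}^{\ast}_{0}(\cdot,B)$.

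For part (2), non-negativity follows because $T^{\ast}_{B}$ maps non-negative functions to non-negative functions (both $\delta'\geq 0$ and the coefficient $1 - \lambda\int_{\Delta}\mathbf{a}^{\ast}\pi_{\Delta}(d\delta') \geq 1-\lambda \geq 0$ for $\lambda\in [0,1]$), so iterating from $q\equiv 0$ and using the contraction limit yields $\mathcal{P}^{\ast}_{0}(\cdot,B)\geq 0$. For the upper bound, let $\bar{q}(B)\equiv \sup_{g}\mathcal{P}^{\ast}_{0}(g,B)$. Taking the sup over $g$ of the fixed-point identity $\mathcal{P}^{\ast}_{0}(g,B)=T^{\ast}_{B}[\mathcal{P}^{\ast}_{0}(\cdot,B)](g)$ and using $\delta'\leq 1$ (so the first term is bounded above by $\lambda\beta E_{\pi_{\Delta}}[\delta]$) along with $1 - \lambda\int_{\Delta}\mathbf{a}^{\ast}\pi_{\Delta}\leq 1$ gives $\bar{q}(B) \leq \lambda\beta E_{\pi_{\Delta}}[\delta] + \beta \bar{q}(B)$, which rearranges to $\bar{q}(B)\leq \lambda\beta E_{\pi_{\Delta}}[\delta]/(1-\beta)$.

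For part (3), under Assumption \ref{ass:g-iid} we have $\pi_{\mathbb{G}}(dg'|g)=\pi_{\mathbb{G}}(dg')$, and both terms in the definition of $T^{\ast}_{B}[q](g)$ become integrals that do not depend on $g$. Hence $T^{\ast}_{B}$ maps every bounded function to a constant function; since the fixed point is unique (part 1), $\mathcal{P}^{\ast}_{0}(g,B)$ must itself be constant in $g$. Setting $\mathcal{P}^{\ast}_{0}(g,B)=c$ and substituting into the fixed-point equation gives the linear equation
\begin{align*}
c = \lambda\beta \int_{\mathbb{G}\times \Delta}\mathbf{a}^{\ast}(g',\delta',B)\delta'\pi_{\Delta}(d\delta')\pi_{\mathbb{G}}(dg') + \beta\left(1 - \lambda\int_{\mathbb{G}\times\Delta}\mathbf{a}^{\ast}(g',\delta',B)\pi_{\Delta}(d\delta')\pi_{\mathbb{G}}(dg')\right)c,
\end{align*}
which solves uniquely to the displayed formula for $\mathcal{P}^{\ast}_{0}(g,B)$. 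For the stated upper bound, write $A_{1}\equiv \int \mathbf{a}^{\ast}\delta'\pi_{\Delta}\pi_{\mathbb{G}}$ and $A_{2}\equiv \int \mathbf{a}^{\ast}\pi_{\Delta}\pi_{\mathbb{G}}$; since $\delta'\leq 1$, $A_{1}\leq A_{2}\leq 1$, and the map $x\mapsto \lambda\beta x/(1-\beta+\beta\lambda x)$ is strictly increasing on $[0,1]$ (its derivative is $\lambda\beta(1-\beta)/(1-\beta+\beta\lambda x)^{2}>0$), so $c\leq \lambda\beta/(1-\beta+\beta\lambda)$, which is strictly less than $1$ because $\beta<1$. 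No step is particularly technical; the only subtlety is noting that the constancy of $\mathcal{P}^{\ast}_{0}(\cdot,B)$ in the iid case is forced by uniqueness of the fixed point rather than by direct computation.
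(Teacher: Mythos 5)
Your proof is correct and follows essentially the same route as the paper: exploit the affine structure of $T^{\ast}_{B}$ to get a contraction of modulus $\beta$ (you verify this directly in the sup-norm where the paper invokes Blackwell's sufficient conditions, a cosmetic difference), bound the fixed point by showing the relevant ball is invariant, and in the i.i.d.\ case solve the resulting scalar linear equation. If anything, your argument is marginally more complete than the paper's on part (2), since you explicitly establish non-negativity of $\mathcal{P}^{\ast}_{0}$ by iterating from $q\equiv 0$, whereas the paper only derives the two-sided bound $|\mathcal{P}^{\ast}_{0}|\leq \lambda\beta E_{\pi_{\Delta}}[\delta]/(1-\beta)$.
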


% \begin{lemma}[{lem:CMT-order}]\label{lem:CMT-order}
%   Let $(\mathbb{S},||\cdot||)$ be a Banach space and let $\{T_{\theta} : \mathbb{S} \rightarrow \mathbb{S} : \theta \in \mathbb{R}^{L} \}$ be a family of contractions indexed by $\theta$  (with the same modulus $\beta \in [0,1)$). Suppose that
%   \begin{enumerate}
%   \item For any $\theta$, $T_{\theta}$ is monotonic, i.e., $T_{\theta}[s_{1}] \leq T_{\theta}[s_{1}]$ for $s_{1} \leq s_{2}$.
%   \item For any $\theta_{1} \leq \theta_{2}$, $T_{\theta_{1}}[s] \leq T_{\theta_{2}}[s]$ for any $s \in \mathbb{S}$.
%   \end{enumerate}
% Then $s^{\ast}_{\theta_{1}} \leq s^{\ast}_{\theta_{2}}$, where $s^{\ast}_{\theta}$ is the fixed point of $T_{\theta}$.
% \end{lemma}

\begin{proof}[Proof of Proposition \ref{pro:opt-dec}]
 \textbf{Part (1).} By lemma \ref{lem:V_inc}, $\delta \mapsto V^{\ast}_{1}(g,\delta B)$ is non-increasing, provided $B>0$ (but this is the only case it matters since the government will never default on savings $B<0$). On the other hand $V^{\ast}_{0}(g,B)$ is constant with respect to $\delta$. Therefore if for some $\delta \in \Delta$, $\mathbf{a}^{\ast}(g,\delta,B)=1$, then for all $\delta_{1} \leq \delta$ the same must hold. Thus, there exists a $\hat{\delta} : \mathbb{G} \times \mathbb{B} \rightarrow [0,1]$ such that
 \begin{align}
   \mathbf{a}^{\ast}(g,\delta,B) = \mathbf{1}_{\{\delta :  \delta \leq \hat{\delta}(g,B)\}}(\delta).
 \end{align}

We now show that $B \mapsto \hat{\delta}(g,B)$ is non-increasing, for all $g \in \mathbb{G}$. It
suffices to show that for any $\delta$ such that $\delta >
\hat{\delta}(g,B_{1})$ then $\delta > \hat{\delta}(g,B_{2})$ for any
$B_{1} < B_{2}$.

Since $\delta > \hat{\delta}(g,B_{1})$, it follows
that $V^{\ast}_{1}(g,\delta B_{1}) < V^{\ast}_{0}(g,B_{1})$. Let $\epsilon(g,B_{1},\delta) \equiv V^{\ast}_{0}(g,B_{1}) - V^{\ast}_{1}(g,\delta B_{1})$. It is easy to see that $\epsilon(g,B_{1},\delta)>0$ for any $(g,B_{1},\delta)$ such that $\delta > \hat{\delta}(g,B_{1})$. Moreover, since $g$, $B_{1}$ and $\delta$ belong to discrete sets, there exists a $\epsilon > 0$ such that $\epsilon \leq \epsilon(g,B_{1},\delta)$ for all  $(g,B_{1},\delta)$ such that $\delta > \hat{\delta}(g,B_{1})$.

Since $B \mapsto V^{\ast}_{1}(g,B)$ is non-increasing (see lemma
\ref{lem:V_inc}) for any $g \in \mathbb{G}$, it follows that $V^{\ast}_{1}(g,\delta B_{2}) \leq
V^{\ast}_{1}(g,\delta B_{1})$ for all $(g,\delta) \in \mathbb{G} \times
\Delta$ (observe that $\delta > 0$ always). Therefore,
\begin{align*}
  V^{\ast}_{1}(g,\delta B_{2}) - V^{\ast}_{0}(g,B_{2}) \leq&  V^{\ast}_{1}(g,\delta B_{1}) - V^{\ast}_{0}(g,B_{2})\leq  V^{\ast}_{1}(g,\delta B_{1}) - V^{\ast}_{0}(g,B_{1}) + \{ V^{\ast}_{0}(g,B_{1}) - V^{\ast}_{0}(g,B_{2})\},
\end{align*}
%\begin{align*}
%  V^{\ast}_{1}(g,\delta B_{2}) - V^{\ast}_{0}(g,B_{2}) \leq&  V^{\ast}_{1}(g,\delta B_{1}) - V^{\ast}_{0}(g,B_{2}) \\
%\leq & V^{\ast}_{1}(g,\delta B_{1}) - V^{\ast}_{0}(g,B_{1}) + \{ V^{\ast}_{0}(g,B_{1}) - V^{\ast}_{0}(g,B_{2})\},
%\end{align*}
for all $(g,B_{1},B_{2},\delta)$ such that $\delta > \hat{\delta}(g,B_{1})$.

Hence, if $ |V^{\ast}_{0}(g,B_{1}) - V^{\ast}_{0}(g,B_{2})| < \epsilon$ for any $(g,B_{1},B_{2})$, the previous display implies that $V^{\ast}_{1}(g,\delta B_{2}) - V^{\ast}_{0}(g,B_{2}) < 0$ and the desired result follows. We now show that $| V^{\ast}_{0}(g,B_{1}) - V^{\ast}_{0}(g,B_{2})| < \epsilon$ for any $(g,B_{1},B_{2})$. By lemma \ref{lem:Vbar_slope}, there exists a $C>0$ such that $
  | V^{\ast}_{0}(g,B_{1}) - V^{\ast}_{0}(g,B_{2})| < \lambda \frac{\beta C}{1-\beta},~\forall (B_{1},B_{2},g)$. \\
Thus for any $\varepsilon>0$, there exists a $\lambda(\varepsilon)$, such that $|V^{\ast}_{0}(g,B_{1}) - V^{\ast}_{0}(g,B_{1})| < \varepsilon$ for all $\lambda \in [0,\lambda(\varepsilon)]$. By setting $\varepsilon = \epsilon $ and $\bar{\lambda}=\lambda(\epsilon)$, the desired result follows.\\
 \textbf{Part (2).} Following \cite{ARELLANO_AER08} we show the result in two steps. Throughout the proof $\mathbf{n}^{\ast}_{\phi}$ and $\mathbf{B}^{\ast}$ are the optimal policy functions for labor and debt.

\textbf{Step 1.} We show that for any $B_{1} < B_{2}$, $\mathbb{S}(B_{1}) \subseteq \mathbb{S}(B_{2})$ where $\mathbb{S}(B) = \{ g : \mathbf{d}^{\ast}(g,B) = 1 \}$. If $\mathbb{S}(B_{1}) = \{\emptyset\}$ the proof is trivial, so we proceed with the case in which this does not hold and let $\bar{g} \in \mathbb{S}(B_{1})$. If $B_{2}$ is not feasible, in the sense that there does not exist any $B'$ such that $ B_{2} - \mathcal{P}^{\ast}_{1}(g;B')B' - \max_{n \in [0,1]}z(1,n,\bar{g}) \leq 0$, then  $\mathbb{S}(B_{2}) = \mathbb{G}$. And the result holds trivially, so we proceed with the case that $B_{2}$ is feasible, given $\bar{g}$.

It follows (since we assume that under indifference, the government
chooses not to default) $V^{\ast}_{1}(\bar{g},B_{1}) <
V^{\ast}_{0}(\bar{g},B_{1})$. Since $B \mapsto V^{\ast}_{1}(\bar{g},B)$ is non-increasing (see lemma \ref{lem:V_inc}), it follows that
\begin{align*}
  V^{\ast}_{1}(\bar{g},B_{2}) \leq V^{\ast}_{1}(\bar{g},B_{1}),~for~all ~B_{1} < B_{2}.
\end{align*}
Therefore,
\begin{align*}
   V^{\ast}_{1}(\bar{g},B_{2}) - V^{\ast}_{0}(\bar{g},B_{2}) \leq& V^{\ast}_{1}(\bar{g},B_{1}) - V^{\ast}_{0}(\bar{g},B_{2})\leq V^{\ast}_{1}(\bar{g},B_{1}) - V^{\ast}_{0}(\bar{g},B_{1}) + \{ V^{\ast}_{0}(\bar{g},B_{1})  -  V^{\ast}_{0}(\bar{g},B_{2})      \}.
\end{align*}
%\begin{align*}
%   V^{\ast}_{1}(\bar{g},B_{2}) - V^{\ast}_{0}(\bar{g},B_{2}) \leq& V^{\ast}_{1}(\bar{g},B_{1}) - %V^{\ast}_{0}(\bar{g},B_{2})\\
%\leq& V^{\ast}_{1}(\bar{g},B_{1}) - V^{\ast}_{0}(\bar{g},B_{1}) + \{ V^{\ast}_{0}(\bar{g},B_{1})  -  %V^{\ast}_{0}(\bar{g},B_{2})      \}.
%\end{align*}
Let $\epsilon(\bar{g},B_{1}) \equiv -\{V^{\ast}_{1}(\bar{g},B_{1}) - V^{\ast}_{0}(\bar{g},B_{1})\}$, observe that $\epsilon(\bar{g},B_{1}) > 0$ for any $(B_{1},\bar{g})\in Graph \{ \mathbb{S} \}$. Thus, if $V^{\ast}_{0}(\bar{g},B_{1})  - V^{\ast}_{0}(\bar{g},B_{2}) < \epsilon(\bar{g},B_{1})$, then $V^{\ast}_{1}(\bar{g},B_{2}) < V^{\ast}_{0}(\bar{g},B_{2})$ and the desired result follows.

Observe that $|\mathbb{B} \times Graph(\mathbb{S})| < \infty$, so there exists $\epsilon > 0$ such that $\epsilon \leq \epsilon(\bar{g},B_{1})$ for any $\bar{g}$ and $B_{1}$ in $Graph(\mathbb{S})$. By lemma \ref{lem:Vbar_slope} and our derivations in part (1), there exists a $\lambda(\epsilon) > 0$ such that
\begin{align*}
  |V^{\ast}_{0}(g,B_{1}) - V^{\ast}_{0}(g,B_{2})| < \epsilon ,~\forall \lambda \in [0,\lambda(\epsilon)]~and~(g,B_{1},B_{2}) \in \mathbb{G} \times \mathbb{B}^{2}.
\end{align*}
Hence, $V^{\ast}_{1}(\bar{g},B_{2}) - V^{\ast}_{0}(\bar{g},B_{2}) < 0$, thereby implying that $\bar{g} \in \mathbb{S}(B_{2})$.\\
\textbf{Step 2.}  We show that, for any $B \in \mathbb{B}$ and any
$g_{1} < g_{2}$ in $\mathbb{G}$, if $\mathbf{d}^{\ast}(g_{1},B) = 1$, then $\mathbf{d}^{\ast}(g_{2},B) =
1$. That is, we want to show that $V^{\ast}_{1}(g_{2},B) < V^{\ast}_{0}(g_{2},B)$. Since default occurs for $g_{1}$, it suffices to show that
\begin{align}\label{eqn:opt-dec-3}
  V^{\ast}_{1}(g_{2},B) - V^{\ast}_{0}(g_{2},B) < V^{\ast}_{1}(g_{1},B) - V^{\ast}_{0}(g_{1},B)
\end{align}
or equivalently, $V^{\ast}_{1}(g_{2},B) - V^{\ast}_{1}(g_{1},B) < V^{\ast}_{0}(g_{2},B) - V^{\ast}_{0}(g_{1},B)$. Observe that
\begin{align}\label{eqn:opt-dec-3b}
  V^{\ast}_{0}(g_{2},B) - V^{\ast}_{0}(g_{1},B) = r(\mathbf{n}^{\ast}_{0}(g_{2}))  - r(\mathbf{n}^{\ast}_{0}(g_{1})) - (g_{2} - g_{1})
\end{align}
where $n \mapsto r(n) = n + H(1-n)$. And now take $\tilde{n}$ such that
\begin{align*}
  z(1,\tilde{n},g_{1}) = B - \mathcal{P}^{\ast}_{1}(\mathbf{B}^{\ast}(g_{2},B))\mathbf{B}^{\ast}(g_{2},B);
\end{align*}
i.e., $\tilde{n}$ is such that
$(\tilde{n},\mathbf{B}^{\ast}(g_{2},B))$ are feasible choices given
the state $(g_{1},B)$, and recall $z(1,n,g) \equiv (1- H'(1-n)) n - g$ and  $(g,B) \mapsto \mathbf{B}^{\ast}(g,B)$
is the optimal policy function for debt, when the government has
access to financial markets. Observe that if no such choice exists, then, since $z(1,\tilde{n},g_{1})  \geq z(1,\tilde{n},g_{2}) $,  trivially $\mathbf{d}^{\ast}(g_{2},B)=1$. Also, $\mathcal{P}^{\ast}_{1}$
does not depend on $g$ because of the i.i.d. assumption. Given this construction,
\begin{align*}
&  V^{\ast}_{1}(g_{2},B) - V^{\ast}_{1}(g_{1},B) \\
\leq & r(\mathbf{n}^{\ast}_{1}(g_{2},B)) - g_{2} + \beta
\int_{\mathbb{G}} \mathbb{V}^{\ast}(g',\mathbf{B}^{\ast}(g_{2},B))
\pi_{\mathbb{G}}(dg')   - \left\{ r(\tilde{n}) - g_{1}  + \beta
  \int_{\mathbb{G}}
  \mathbb{V}^{\ast}(g',\mathbf{B}^{\ast}(g_{2},B))
  \pi_{\mathbb{G}}(dg')   \right\} \\
= & r(\mathbf{n}^{\ast}_{1}(g_{2},B)) - r(\tilde{n}) - (g_{2} - g_{1})
\end{align*}
where $(g,B) \mapsto \mathbb{V}^{\ast}(g,B)
\equiv \max\{ V^{\ast}_{1}(g,B) , V^{\ast}_{0} (g,B) \}$. Given this and \ref{eqn:opt-dec-3b}, it suffices to show that
\begin{align}\label{eqn:opt-dec-3c}
  r(\mathbf{n}^{\ast}_{1}(g_{2},B)) - r(\tilde{n})  \leq r(\mathbf{n}^{\ast}_{0}(g_{2}))  - r(\mathbf{n}^{\ast}_{0}(g_{1})).
\end{align}
We now show this inequality. By construction of $\tilde{n}$,
\begin{align}
  z(1,\tilde{n},g_{1}) = z(1,\mathbf{n}^{\ast}_{1}(g_{2},B),g_{2})
\end{align}
where $(g,B) \mapsto \mathbf{n}^{\ast}_{1}(g,B)$ is the optimal policy
function for labor, when the government has access to financial
markets. Since $n \mapsto z(1,n,g)$ is non-increasing in the relevant
domain (by relevant domain we mean the interval of $n$ which are in ``correct side of the Laffer curve''; see lemma \ref{lem:charac-z}(2)) and $g_{1} < g_{2}$, $\tilde{n} \geq \mathbf{n}^{\ast}_{1}(g_{2},B)$. By analogous arguments, it follows that $\mathbf{n}^{\ast}_{0}(g_{1}) >
\mathbf{n}^{\ast}_{0}(g_{2})$.

Also, note that
\begin{align}
  z(1,\tilde{n},g_{1}) - z(1,\mathbf{n}^{\ast}_{0}(g_{1}),g_{1}) = \mathcal{P}^{\ast}_{1}(\mathbf{B}^{\ast}(g_{2},B))\mathbf{B}^{\ast}(g_{2},B) = z(1,\mathbf{n}^{\ast}_{1}(g_{2},B),g_{2}) - z(1,\mathbf{n}^{\ast}_{0}(g_{2}),g_{2}),
\end{align}
or equivalently, with $n \mapsto \rho(n) = (1-H'(1-n))n$
\begin{align}\label{eqn:rho-1}
  \rho(\tilde{n}) - \rho(\mathbf{n}^{\ast}_{0}(g_{1})) = \rho(\mathbf{n}^{\ast}_{1}(g_{2},B)) - \rho(\mathbf{n}^{\ast}_{0}(g_{2})).
\end{align}
Since $n \mapsto z(1,n,g)$ (and thus $\rho$) is concave and non-increasing (see lemma \ref{lem:charac-z}(2)), it follows $\tilde{n} > (<) \mathbf{n}^{\ast}_{0}(g_{1})$ iff $\mathbf{n}^{\ast}_{1}(g_{2},B) > (<) \mathbf{n}^{\ast}_{0}(g_{2})$.

Putting all these observations together, we have the following possible orders
\begin{align*}
(I) &:~  \mathbf{n}^{\ast}_{0}(g_{1}) \geq \tilde{n} \geq \mathbf{n}^{\ast}_{0}(g_{2}) \geq \mathbf{n}^{\ast}_{1}(g_{2},B)\\
(II) &:~ \mathbf{n}^{\ast}_{0}(g_{1}) \geq \mathbf{n}^{\ast}_{0}(g_{2}) \geq \tilde{n} \geq  \mathbf{n}^{\ast}_{1}(g_{2},B)\\
(III) &:~ \tilde{n} \geq \mathbf{n}^{\ast}_{0}(g_{1})  \geq  \mathbf{n}^{\ast}_{1}(g_{2},B) \geq \mathbf{n}^{\ast}_{0}(g_{2})\\
(IV)&:~ \tilde{n} \geq   \mathbf{n}^{\ast}_{1}(g_{2},B) \geq \mathbf{n}^{\ast}_{0}(g_{1})  \geq \mathbf{n}^{\ast}_{0}(g_{2}).
\end{align*}
Moreover, since in $(g_{1},B)$ the
government defaults, it follows from the proof of lemma
\ref{lem:roll_over1} that $B - \mathcal{P}^{\ast}_{1}(B')B' \geq 0$ for
any $B' \in \mathbb{B}$, in particular for $B'=
\mathbf{B}^{\ast}(g_{2},B)$. Therefore, $z(1,\tilde{n},g_{1}) >
z(1,\mathbf{n}^{\ast}_{0}(g_{1}),g_{1}) $, and thus $\tilde{n} \leq
\mathbf{n}^{\ast}_{0}(g_{1})$, and consequently $\mathbf{n}^{\ast}_{1}(g_{2},B) \leq \mathbf{n}^{\ast}_{0}(g_{2})$. Hence, cases (III) and (IV) are ruled out.

We now study cases (I) and (II). Since $n \mapsto z(1,n,g)$ is strictly concave and non-increasing (see lemma \ref{lem:charac-z}), equation \ref{eqn:rho-1} and (I) and (II) imply
\begin{align}
  \mathbf{n}^{\ast}_{0}(g_{1}) - \tilde{n} \leq \mathbf{n}^{\ast}_{0}(g_{2}) - \mathbf{n}^{\ast}_{1}(g_{2},B).
\end{align}
Since $n \mapsto r(n) \equiv n + H(1-n)$ is concave and increasing under our assumptions, the previous inequality implies that
\begin{align}\label{eqn:opt-dec-3}
  r(\mathbf{n}^{\ast}_{0}(g_{1})) - r(\mathbf{n}^{\ast}_{0}(g_{2})) \leq r(\tilde{n}) - r(\mathbf{n}^{\ast}_{1}(g_{2},B) )
\end{align}
for both case (I) and (II), or equivalently
\begin{align*}
   r(\mathbf{n}^{\ast}_{1}(g_{2},B)) - r(\tilde{n}) \leq
   r(\mathbf{n}^{\ast}_{0}(g_{2})) - (\mathbf{n}^{\ast}_{0}(g_{1})),
\end{align*}
which is precisely equation \ref{eqn:opt-dec-3c}.\\
Hence, step 2 establishes that $\mathbf{d}^{\ast}$ is of the threshold type, since it shows that, for any $B$, if $\mathbf{d}^{\ast}(g,B)=1$, the same is true for any $g' > g$. That is $\{g :  \mathbf{d}^{\ast}(g,B)=1\}$ is of the form $\{g : g \geq \bar{g}(B)\}$. Step 1 shows that the $\bar{g}$ ought to be non-increasing.
\end{proof}
\begin{proof}[Proof of Proposition \ref{pro:preqn-mon0}]
We first establish the result for $i=0$. From lemma \ref{lem:T-q}(3), observe that
\begin{align*}
  \mathcal{P}^{\ast}_{0}(B) = \frac{\beta\lambda \int_{\mathbb{G}} \int_{\Delta} \mathbf{1}_{\{ \delta \leq \hat{\delta}(g',B) \}}(\delta) \pi_{\Delta}(d\delta) \pi_{\mathbb{G}}(dg')}{1 - \beta + \beta\lambda \int_{\mathbb{G}} \int_{\Delta} \mathbf{1}_{\{ \delta \leq \hat{\delta}(g',B) \}}(\delta) \pi_{\Delta}(d\delta) \pi_{\mathbb{G}}(dg')} \frac{\int_{\mathbb{G}}  \int_{\Delta} \mathbf{1}_{\{ \delta \leq \hat{\delta}(g',B) \}}(\delta) \delta \pi_{\Delta}(d\delta) \pi_{\mathbb{G}}(dg')}{\int_{\mathbb{G}} \int_{\Delta} \mathbf{1}_{\{ \delta \leq \hat{\delta}(g',B) \}}(\delta) \pi_{\Delta}(d\delta) \pi_{\mathbb{G}}(dg') }.
\end{align*}
Note that the first term in the RHS is an increasing function (namely $x \mapsto
\frac{x}{1-\beta + x}$) of \linebreak $\beta \lambda \int_{\mathbb{G}}
\int_{\Delta} \mathbf{1}_{\{ \delta \leq \hat{\delta}(g',B) \}}(\delta)\pi_{\Delta}(d\delta) \pi_{\mathbb{G}}(dg')$. Since $B \mapsto \hat{\delta}(g,B)$ is non-increasing (proposition
\ref{pro:opt-dec}), it follows that $B \mapsto \int_{\Delta} \mathbf{1}_{\{ \delta \leq \hat{\delta}(g',B) \}}(\delta)
\pi_{\Delta}(d\delta) $ is also non-increasing, this in turn implies
that the first term in the RHS is also
non-increasing as a function of $B$.

By our assumption $\pi_{\Delta}(\cdot) = 1_{\delta_{0}}(\cdot)$, the second term in the RHS is given by
\begin{align*}
  \frac{\int_{\mathbb{G}}  \int_{\Delta} \mathbf{1}_{\{ \delta \leq \hat{\delta}(g',B) \}}(\delta) \delta \pi_{\Delta}(d\delta) \pi_{\mathbb{G}}(dg')}{\int_{\mathbb{G}} \int_{\Delta} \mathbf{1}_{\{ \delta \leq \hat{\delta}(g',B) \}}(\delta) \pi_{\Delta}(d\delta) \pi_{\mathbb{G}}(dg') } = \delta_{0} \frac{\int_{\mathbb{G}}  \mathbf{1}_{\{ \delta_{0} \leq \hat{\delta}(g',B) \}}(\delta)  \pi_{\mathbb{G}}(dg')}{\int_{\mathbb{G}}  \mathbf{1}_{\{ \delta_{0} \leq \hat{\delta}(g',B) \}}(\delta) \pi_{\mathbb{G}}(dg') } = \delta_{0}
\end{align*}
and thus constant. Hence, $B \mapsto \mathcal{P}^{\ast}_{0}(B) $ is non-increasing.

For $i=1$, observe
  that for any $B_{1} \leq B_{2}$,
  \begin{align*}
    \mathcal{P}^{\ast}_{1}(B_{1}) =& \beta \int_{\mathbb{G}}
    \mathbf{1}_{\{ g' \leq \bar{g}(B_{1}) \}}(g')
    \pi_{\mathbb{G}}(dg') + \beta \int_{\mathbb{G}} \mathbf{1}_{\{
      g' > \bar{g}(B_{1}) \}}(g')
    \pi_{\mathbb{G}}(dg') \mathcal{P}^{\ast}_{0}(B_{1}) \\
\geq & \beta \int_{\mathbb{G}}
    \mathbf{1}_{\{ g' \leq \bar{g}(B_{2}) \}}(g')
    \pi_{\mathbb{G}}(dg') + \beta \int_{\mathbb{G}} \mathbf{1}_{\{
      g' > \bar{g}(B_{2}) \}}(g')
    \pi_{\mathbb{G}}(dg')  \mathcal{P}^{\ast}_{0}(B_{1}) \\
\geq & \beta \int_{\mathbb{G}}
    \mathbf{1}_{\{ g' \leq \bar{g}(B_{2}) \}}(g')
    \pi_{\mathbb{G}}(dg') + \beta \int_{\mathbb{G}} \mathbf{1}_{\{
      g' > \bar{g}(B_{2}) \}}(g')
    \pi_{\mathbb{G}}(dg') \mathcal{P}^{\ast}_{0}(B_{2}) \\
 = & \mathcal{P}^{\ast}_{1}(B_{2})
  \end{align*}
where the first inequality follows from the fact that $B \mapsto
\bar{g}(B)$ is non-increasing (proposition \ref{pro:opt-dec}) and $ \mathcal{P}^{\ast}_{0}(B) <
1$ for any $B \in \mathbb{B}$ (see lemma \ref{lem:T-q}(3)) ; the second inequality follows from the fact that $\mathcal{P}^{\ast}_{0}$ is non-increasing.
\end{proof}

\subsection{Proofs of Supplementary Lemmas}

\begin{proof}[Proof of Lemma \ref{lem:V_bdd}]
For any $(\phi_{-},g,\delta,B) \in \{0,1\} \times \mathbb{G} \times \Delta \times \mathbb{B}$, and any function $(\phi_{-},g,\delta,B) \mapsto F(\phi_{-},g,\delta,B)$ we define the following operator
\begin{align}\label{eqn:V_bdd-0}
  T[F](\phi_{-},g,\delta,B) = \max_{(a,d) \in D(\phi_{-},\delta)} T_{1}[F](\phi_{-}(1-d) + a (1-\phi_{-}),g,\delta,\varphi(B,\delta,a,d))
\end{align}
with $D(0,\delta) = \{0,1\} \times \{1\}$ if $\delta \ne \bar{\delta}$ and $D(0,\bar{\delta}) = \{0\} \times \{1\}$, also $D(1,\delta) = \{1\} \times \{0,1\}$; $\varphi(B,\delta,a,1) = \delta B a + (1-a) B $ and $\varphi(B,\delta,0,d)= B$; and
\begin{align}\label{eqn:V_bdd-1}
  T_{1}[F](\phi,g,\delta,B) = \max_{(n,B') \in \Gamma_{\phi}(g,B)}  \left\{ \kappa_{\phi} n -g + H(1-n)  + \beta \int_{\mathbb{G}} \int_{\bar{\Delta}} F(\phi,g ',\delta',B')  \pi_{\bar{\Delta}}(d\delta'|\phi) \pi_{\mathbb{G}}(dg')  \right\},
\end{align}
where $\pi_{\bar{\Delta}}(\cdot|\phi) = \mathbf{1}_{\{1 \}}(\cdot)$ if $\phi = 1$ and $\pi_{\bar{\Delta}}(\cdot | \phi) = (1-\lambda) \mathbf{1}_{\{\bar{\delta}\}}(\cdot) + \lambda \pi_{\Delta}(\cdot)$ if $\phi = 0$.

A fixed point of the $T$ operator is given by
\begin{align}\label{eqn:V_bdd-2}
  \mathbb{V}^{\ast}(\phi_{-},g,\delta,B) = \max_{(a,d) \in D(\phi_{-},\delta)} V^{\ast}_{\phi_{-}(1-d) + a (1-\phi_{-})}(g,\varphi(B,\delta,a,d))
\end{align}
and for any $\phi \in \{0,1\}$
  % \begin{align}\label{eqn:V_bdd}
  %   V^{\ast}_{\phi}(g,B) = \max_{(n,B') \in \Gamma_{\phi}(g,B)}  \left\{ \kappa_{\phi} n -g + H(1-n)  + \beta \int_{\mathbb{G}} \int_{\bar{\Delta}} \max_{d \in D(\delta')} V^{\ast}_{d}(g ',\varphi(B',\delta,d))  \pi_{\bar{\Delta}}(d\delta'|\phi) \pi_{\mathbb{G}}(dg')  \right\}
  % \end{align}
  \begin{align}\label{eqn:V_bdd-3}
    V^{\ast}_{\phi}(g,B) = \max_{(n,B') \in \Gamma_{\phi}(g,B)}  \left\{ \kappa_{\phi} n -g + H(1-n)  + \beta \int_{\mathbb{G}} \int_{\bar{\Delta}} \mathbb{V}^{\ast}(\phi,g ',\delta',B')  \pi_{\bar{\Delta}}(d\delta'|\phi) \pi_{\mathbb{G}}(dg')  \right\}.
  \end{align}
To verify equation \ref{eqn:V_bdd-3}, see that if $\phi=0$, $B'=B$ by the restrictions imposed on $\Gamma_{0}$, $\kappa_{0} = \kappa$ and
\begin{align*}
  \int_{\bar{\Delta}} \mathbb{V}^{\ast}(0,g ',\delta',B')  \pi_{\bar{\Delta}}(d\delta'|0) = & \lambda \int_{\Delta} \mathbb{V}^{\ast}(0,g ',\delta',B)  \pi_{\bar{\Delta}}(d\delta') + (1-\lambda) \mathbb{V}^{\ast}(0,g ',\bar{\delta},B)\\
 = & \lambda \int_{\Delta} \max_{a \in \{0,1\}} V^{\ast}_{a}(g ',B(\delta a + (1-a)))  \pi_{\bar{\Delta}}(d\delta') + (1-\lambda) \mathbb{V}^{\ast}(0,g ',\bar{\delta},B)\\
= & \lambda \int_{\Delta} \max \{ V^{\ast}_{1}(g ',B \delta ) ,  V^{\ast}_{0}(g ',B )\} \pi_{\bar{\Delta}}(d\delta') + (1-\lambda) V_{0}^{\ast}(g ',B)
\end{align*}
where the last line follows from the fact that $D(0,\bar{\delta}) = \{0\} \times \{1\}$. If $\phi=1$, then
\begin{align*}
  \int_{\bar{\Delta}} \mathbb{V}^{\ast}(1,g ',\delta',B')  \pi_{\bar{\Delta}}(d\delta'|1) = &  \mathbb{V}^{\ast}(1,g ',1,B') \\
= & \max_{d \in \{0,1\}} V^{\ast}_{(1-d)}(g',\varphi(B',1,1,d))\\
 = & \max\{ V^{\ast}_{1}(g',\varphi(B',1,0,0)),V^{\ast}_{0}(g',\varphi(B',1,0,1))\}\\
 = & \max\{ V^{\ast}_{1}(g',B'),V^{\ast}_{0}(g',B')\}.
\end{align*}
Observe that from this fixed point we can derive the functions $V^{\ast}$ by using equation \ref{eqn:V_bdd-3}.

We now show that the operator $T$ maps bounded functions onto bounded functions. Take $F$ such that $|F(\phi_{-},g,\delta,B)| \leq C $ for all $(\phi_{-},g,\delta,B)$ and for some finite constant $C>0$. Then
\begin{align*}
 | T[F](\phi_{-},g,\delta,B) | = | \max_{(a,d) \in D(\phi_{-},\delta)} T_{1}[F](\phi_{-}(1-d) + a (1-\phi_{-}),g,\delta,\varphi(B,\delta,a,d)) |.
\end{align*}
If $(g,\delta,B)$ are such that $\Gamma_{1}(g,\delta B) = \{ \emptyset \}$, then by convention, $\phi_{-}(1-d) + a (1-\phi_{-})=0$ (i.e., there is default/no repayment) and thus $\max_{(a,d) \in D(\phi_{-},\delta)} T_{1}[F](\phi_{-}(1-d) + a (1-\phi_{-}),g,\delta,\varphi(B,\delta,a,d)) = F(0,g,\delta,\varphi(B,\delta,0,1)) = F(0,g,\delta,B)$ and since by our assumptions over $\mathbb{G}$, $\Gamma_{0}(g,B) \ne \{\emptyset\}$ for any $(g,B)$, there exists a finite $c' > 0$ such that $ | \max_{n \in \Gamma_{0}(g,B)} \kappa n - g + H(1-n) | \leq c'$. This implies that in this case $| T[F](\phi_{-},g,\delta,B) | \leq c' + \beta C$.

Similarly, if $(g,\delta,B)$ are such that $\Gamma_{1}(g,\delta B) \ne \{ \emptyset \}$ then $ | \max_{n \in \Gamma_{1}(g,\delta B)}  n - g + H(1-n) | \leq c' $ and it follows that $| T[F](\phi_{-},g,\delta,B) | \leq c' + \beta C$. Hence, by letting $C = \frac{c'}{1-\beta}$ we showed that $T$ maps bounded functions onto bounded functions.

The fix point $ \mathbb{V}^{\ast}$ inherits this property, i,.e., $| \mathbb{V}^{\ast}(\phi_{-},g,\delta,B) | \leq C$ for all $(\phi_{-},g,\delta,B)$. This result, the fact that $| \max_{n \in \Gamma_{0}(g,B)} \kappa n - g + H(1-n) | \leq c'$ and equation \ref{eqn:V_bdd-3} implies that there exists a finite constant $C''>0$, such that $|V^{\ast}_{0}(g,B)| \leq C''$. An analogous result holds for $V^{\ast}_{1}(g,B)$ provided that $(g,B)$ are such that $\Gamma_{1}(g,B) \ne \{ \emptyset \}$.
\end{proof}
\begin{proof}[Proof of Lemma \ref{lem:V_inc}]
  It is easy to see that $\Gamma_{1}(g,B_{1}) \subseteq \Gamma_{1}(g,B_{2})$ for any $B_{1} \geq B_{2}$ and this immediately implies that
  \begin{align*}
   V^{\ast}_{1}(g,B_{1}) = &  \max_{(n,B') \in \Gamma_{1}(g,B_{1})} \{ n-g + H(1-n) + \beta \int_{\mathbb{G}} \max\{ V^{\ast}_{0}(g',B') , V^{\ast}_{1}(g',B') \} \pi_{\mathbb{G}}(dg')\\
 \leq &  \max_{(n,B') \in \Gamma_{1}(g,B_{2})} \{ n-g + H(1-n) + \beta \int_{\mathbb{G}} \max\{ V^{\ast}_{0}(g',B') , V^{\ast}_{1}(g',B') \} \pi_{\mathbb{G}}(dg')\\
= & V^{\ast}_{1}(g,B_{2})
  \end{align*}
and the result follows for $V^{\ast}_{1}$.
% This result, and the fact that $V^{\ast}_{0}$ is a linear function of an increasing transformation of $V^{\ast}_{1}$ ($\int_{\mathbb{G}} \int_{\Delta} \max\{ V^{\ast}_{1}(g,\delta B) , V^{\ast}_{0}(g',B) , \} \pi_{\Delta}(d\delta) \pi_{\mathbb{G}} (dg') $ --- note that $\delta>0$ ---) implies that the result for $V^{\ast}_{0}$.
\end{proof}

\begin{proof}[Proof of Lemma \ref{lem:Vbar_slope}]
  Observe that, for any $(g,B_{1},B_{2}) \in \mathbb{G} \times \mathbb{B}^{2}$,
  \begin{align*}
 | V^{\ast}_{0}(g,B_{1}) - V^{\ast}_{0}(g,B_{2})| \leq &  \lambda \beta \int_{\mathbb{G}} \int_{\Delta} \mathbf{a}^{\ast}(g,\delta,B) |V^{\ast}_{1}(g',\delta B_{1}) - V^{\ast}_{1}(g',\delta B_{2})| \pi_{\Delta}(d\delta)  \pi_{\mathbb{G}}(dg'|g) \\
& + \beta \int_{\mathbb{G}} \{ (1-\lambda) + \lambda \int_{\Delta} (1-\mathbf{a}^{\ast}(g,\delta,B)) \pi_{\Delta}(d\delta)  \}  | V^{\ast}_{0}(g',B_{1}) - V^{\ast}_{0}(g',B_{2})  | \pi_{\mathbb{G}}(dg'|g) \\
\leq & \lambda \beta \int_{\mathbb{G}} \int_{\Delta} \mathbf{a}^{\ast}(g,\delta,B) |V^{\ast}_{1}(g',\delta B_{1}) - V^{\ast}_{1}(g',\delta B_{2})| \pi_{\Delta}(d\delta)  \pi_{\mathbb{G}}(dg'|g) \\
& + \beta \max_{g' \in \mathbb{G}} | V^{\ast}_{0}(g',B_{1}) - V^{\ast}_{0}(g',B_{2})  |  \\
\leq & \lambda \beta C + \beta \max_{g' \in \mathbb{G}} | V^{\ast}_{0}(g',B_{1}) - V^{\ast}_{0}(g',B_{2})  |
  \end{align*}
where the last line follows from lemma \ref{lem:V_bdd} and the fact that if $(g,\delta,B)$ are such that $\Gamma_{1}(g,\delta B) = \{ \emptyset\}$ then $ \mathbf{a}^{\ast}(g,\delta,B) = 0$. Therefore,
\begin{align*}
  \max_{g' \in \mathbb{G}} \max_{B_{1},B_{2} \in \mathbb{B}^{2}} | V^{\ast}_{0}(g',B_{1}) - V^{\ast}_{0}(g',B_{2})  |  \leq \lambda \frac{\beta C}{1-\beta}.
\end{align*}
\end{proof}

\begin{proof}[Proof of Lemma \ref{lem:roll_over1}]
  Suppose not. That is, for any $\lambda$, there exists a $(g,B)$ with
  $B>0$ such that $\mathbf{d}^{\ast}(g,B) = 1$ but there exists a $B'$ such that
  $\mathcal{P}^{\ast}_{1}(g,B')B' > B$.

First observe that for any $(g,B,B')$ such that $\mathcal{P}^{\ast}_{1}(g,B')B' > B$,
\begin{align*}
    z(1,\mathbf{n}(g,B,B'),g) <  z(1,\mathbf{n}^{\ast}_{0}(g),g)
  \end{align*}
where $\mathbf{n}(g,B,B')$ is the level of labor that solves $z(1,n,g)
+ \mathcal{P}^{\ast}_{1}(g,B')B' = B$. Since $n \mapsto z(1,n,g)$
  is non-increasing in the relevant domain (see lemma \ref{lem:charac-z}(2)), it
  follows that $\mathbf{n}(g,B,B') > \mathbf{n}^{\ast}_{0}(g)$,
  thereby implying that the per-period payoff is greater under no
  default, i.e.,
  \begin{align}\label{eqn:roll-over1-1}
    r(\mathbf{n}(g,B,B')) -g - \{r(\mathbf{n}^{\ast}_{0}(g)) -g\} >0
  \end{align}
where $n \mapsto r(n) = n + H(1-n)$ is increasing by our assumptions. Let $U \equiv \{ (g,B,B') \in \mathbb{G} \times \mathbb{B}^{2} : equation~\ref{eqn:roll-over1-1}~holds \}$. Under our assumptions $|U| < \infty$, so there exists a $\epsilon'>0$ such that $r(\mathbf{n}(g,B,B')) -g - \{r(\mathbf{n}^{\ast}_{0}(g)) -g\} \geq \epsilon' $ for all $(g,B,B') \in U$. Consider any $\lambda \in [0,\lambda(0.5\epsilon')]$ where $\epsilon \mapsto \lambda(\epsilon)$ is such that
\begin{align}
  \lambda(\epsilon) | \int_{\mathbb{G}}  \{ \int_{\Delta} \max\{
 V^{\ast}_{1}(g',\delta B') - V^{\ast}_{0}(g',B'), 0 \}  \pi_{\Delta}(d\delta)  \} \pi_{\mathbb{G}}(dg'|g)| \leq \epsilon;
\end{align}
such $\lambda$ exists by lemma \ref{lem:V_bdd}. By our hypothesis, there exists a $(g,B,B')$ with
  $B>0$ such that $\mathbf{d}^{\ast}(g,B) = 1$ and
  $\mathcal{P}^{\ast}_{1}(g,B')B' > B$. And thus $(g,B,B') \in U$. By our choice of $\lambda$,
\begin{align*}
 \int_{\mathbb{G}}  V^{\ast}_{0}(g',B')
 \pi_{\mathbb{G}}(dg'|g) + 0.5 \epsilon' \geq \int_{\mathbb{G}}  \{ \lambda  \int_{\Delta} \max\{
 V^{\ast}_{1}(g',\delta B'), V^{\ast}_{0}(g',B') \}  \pi_{\Delta}(d\delta) + (1-\lambda)
 V^{\ast}_{0}(g',B') \} \pi_{\mathbb{G}}(dg'|g).
\end{align*}
By definition of $\epsilon'$ and the fact that $(g,B,B') \in U$, it follows that
\begin{align*}
  & r(\mathbf{n}(g,B,B')) -g + \beta \int_{\mathbb{G}} \max\{ V^{\ast}_{1}(g',B') , V^{\ast}_{0}(g',B') \}
 \pi_{\mathbb{G}}(dg'|g) \\
> & r(\mathbf{n}^{\ast}_{0}(g)) -g + 0.5 \epsilon' + \beta \int_{\mathbb{G}} V^{\ast}_{0}(g',B')
 \pi_{\mathbb{G}}(dg'|g) \\
%\geq & r(\mathbf{n}^{\ast}_{0}(g)) -g + 0.5 \epsilon' + \beta \int_{\mathbb{G}} V^{\ast}_{0}(g',B')
% \pi_{\mathbb{G}}(dg'|g)\\
\geq & r(\mathbf{n}^{\ast}_{0}(g)) -g  + \beta \{\int_{\mathbb{G}} V^{\ast}_{0}(g',B')
 \pi_{\mathbb{G}}(dg'|g) + 0.5 \epsilon' \} \\
\geq & V^{\ast}_{0}(g,B).
\end{align*}
Since $V^{\ast}_{1}(g,B)$ is larger or equal than the LHS, we conclude that for $(g,B)$ the government decides not to default, but this is a contradiction to the fact that $\mathbf{d}^{\ast}(g,B)=1$.
\end{proof}

\begin{proof}[Proof of Lemma \ref{lem:T-q}]
  \textbf{Part 1.} To show part 1 we show that for each $B \in \mathbb{B}$, $T^{\ast}_{B}$ satisfies the Blackwell sufficient conditions. Henceforth, consider $B \in \mathbb{B}$ given, observe that $T^{\ast}_{B}$ is of the form
  \begin{align}
    T^{\ast}_{B}[q](g) = A_{B}(g) +  \beta
 \int_{\mathbb{G}} C_{B}(g') q(g') \pi_{\mathbb{G}}(dg'|g)
  \end{align}
where $A_{B}(\cdot)~ \equiv ~\lambda ~\beta~ \int_{\mathbb{G} \times
   \Delta} ~\mathbf{a}^{\ast}(g',\delta',B) ~\delta'~
 \pi_{\Delta}(d\delta') ~\pi_{\mathbb{G}}(dg'|\cdot) $,  and $C_{B}(g) ~\equiv ~\left( ~(~1-~\lambda~) + \right.$
$\lambda~ \int_{\Delta} (1~- $

 \hspace{-0.3in}$ \left.\mathbf{a}^{\ast}(g',\delta',B))
 \pi_{\Delta}(d\delta')  \right)$ is non-negative and less than one. Hence for any $g \in \mathbb{G}$ and for any $q \leq q'$, $T^{\ast}_{B}[q](g)  \leq T^{\ast}_{B}[q'](g) $ and $T^{\ast}[q+a](g) = A_{B}(g) +  \beta
 \int_{\mathbb{G}} C_{B}(g') q(g') \pi_{\mathbb{G}}(dg'|g) + \beta  \int_{\mathbb{G}} C_{B}(g') q(g') \pi_{\mathbb{G}}(dg'|g) a \leq  A_{B}(g) +  \beta
 \int_{\mathbb{G}} C_{B}(g') q(g') \pi_{\mathbb{G}}(dg'|g) + \beta a = T^{\ast}_{B}[q](g) + \beta a$. Therefore $T^{\ast}_{B}$ is a contraction by Blackwell sufficient conditions, see \cite{SLP_book89}, moreover its modulus is given by $\beta$ which does not depend on $B$. \\

\textbf{Part 2.} Consider $C \equiv \beta \lambda \frac{E_{\pi_{\Delta}}[\delta]}{1-\beta}$ such that $|q(g)| \leq C$, then
\begin{align}
  |T^{\ast}_{B}[q](g)| \leq |A_{B}(g)| + \beta C \leq \beta \lambda E_{\pi_{\Delta}}[\delta] + \beta C = \beta \lambda E_{\pi_{\Delta}}[\delta] \{ 1 + \frac{\beta }{1-\beta} \} = \beta \lambda E_{\pi_{\Delta}}[\delta] \frac{1 }{1-\beta},
\end{align}
so in fact $T^{\ast}_{B}$ maps functions bounded by $C$ into themselves; and this holds for any $B \in \mathbb{B}$. Thus the fixed point of $T^{\ast}_{B}$ also satisfies the inequality.\\

\textbf{Part 3.} Since $\pi_{\mathbb{G}}(\cdot|g)$ are constant
with respect to $g$ it follows that
\begin{align*}
  \mathcal{P}^{\ast}_{0}(g,B) = \lambda \beta \int_{\mathbb{G} \times
   \Delta} \mathbf{a}^{\ast}(g',\delta',B) \delta'
 \pi_{\Delta}(d\delta') \pi_{\mathbb{G}}(dg') +   \beta
 \int_{\mathbb{G}} \left( 1 - \lambda \int_{\Delta} \mathbf{a}^{\ast}(g',\delta',B)
 \pi_{\Delta}(d\delta')  \right) \mathcal{P}^{\ast}_{0}(g',B)
\pi_{\mathbb{G}}(dg')
\end{align*}
and thus $\mathcal{P}^{\ast}_{0}(g,B)$ is constant with respect to
$g$, abusing notation we denote it as
$\mathcal{P}^{\ast}_{0}(B)$. From the display above it follows that
\begin{align*}
  \mathcal{P}^{\ast}_{0}(B) =&  \frac{\lambda \beta \int_{\mathbb{G} \times
   \Delta} \mathbf{a}^{\ast}(g',\delta',B) \delta'
 \pi_{\Delta}(d\delta') \pi_{\mathbb{G}}(dg') } { 1-\beta
 \int_{\mathbb{G}} \left( 1 - \lambda \int_{\Delta} \mathbf{a}^{\ast}(g',\delta',B))
 \pi_{\Delta}(d\delta')  \right) \pi_{\mathbb{G}}(dg')} = \frac{\lambda \beta \int_{\mathbb{G} \times
   \Delta} \mathbf{a}^{\ast}(g',\delta',B) \delta'
 \pi_{\Delta}(d\delta') \pi_{\mathbb{G}}(dg') } { 1-\beta + \beta \lambda
\left(\int_{\mathbb{G}} \int_{\Delta} \mathbf{a}^{\ast}(g',\delta',B)
 \pi_{\Delta}(d\delta')  \pi_{\mathbb{G}}(dg') \right) }.
\end{align*}
Since $\delta \in \Delta$ is such that $\delta \leq 1$, $|\mathcal{P}^{\ast}_{0}(B)| \leq \frac{\lambda \beta \left(\int_{\mathbb{G}} \int_{\Delta} \mathbf{a}^{\ast}(g',\delta',B)
 \pi_{\Delta}(d\delta')  \pi_{\mathbb{G}}(dg') \right) } { 1-\beta + \beta \lambda
\left(\int_{\mathbb{G}} \int_{\Delta} \mathbf{a}^{\ast}(g',\delta',B)
 \pi_{\Delta}(d\delta')  \pi_{\mathbb{G}}(dg') \right) } \leq
\frac{\beta \lambda }{1-\beta + \beta \lambda } <1 $.

\end{proof}

% \begin{proof}[Proof of Lemma \ref{lem:CMT-order}]
%   For any $\theta$, since $T_{\theta}$ is a contraction and $\mathbb{S}$ is complete, there exists a unique $s^{\ast}_{\theta}$ such that $s^{\ast}_{\theta} = T_{\theta}[s^{\ast}_{\theta}]$. Consider $\theta_{1} \leq \theta_{2}$, by condition (2)
%   \begin{align}
%   s^{\ast}_{\theta_{1}} = T_{\theta_{1}}[s^{\ast}_{\theta_{1}}] \leq T_{\theta_{2}}[s^{\ast}_{\theta_{1}}].
%   \end{align}
% Applying $T_{\theta_{1}}$ at both sides, it follows by monotonicity (condition (1)) of $T_{\theta}$ that
% \begin{align}
%   T_{\theta_{1}}[s^{\ast}_{\theta_{1}} ] \le T_{\theta_{1}}[T_{\theta_{2}}[s^{\ast}_{\theta_{1}}]]
% \end{align}
% and by condition (1), the previous inequality implies
% \begin{align}
%   T_{\theta_{1}}[s^{\ast}_{\theta_{1}} ] \le T_{\theta_{2}}[T_{\theta_{2}}[s^{\ast}_{\theta_{1}}]],
% \end{align}
% thus $s^{\ast}_{\theta_{1}} \leq T^{2}_{\theta_{2}}[s^{\ast}_{\theta_{1}}]$. By iterating forward, it follows that
% \begin{align}
%   s^{\ast}_{\theta_{1}} \leq T^{t}_{\theta_{2}}[s^{\ast}_{\theta_{1}}],~\forall t.
% \end{align}
% By the contraction mapping theorem $\lim_{t \rightarrow \infty} ||T^{t}_{\theta_{2}}[s^{\ast}_{\theta_{1}}] - s^{\ast}_{\theta_{2}}|| = 0$, so the desired result follows.
% \end{proof}

\subsection{Derivation of Equation \ref{eqn:LM-lom}}
\label{app:LM-LoM}

In this setting, to default or not, boils down to choosing a $T$ (contingent on $\omega^{\infty}$) such that for all $t< T(\omega^{\infty})$ there is no default and for $t \geq T(\omega^{\infty})$ there is financial autarky. Recall that under our assumptions $u(c,l) = c + H(l)$ and $g_{t} \sim iid \pi_{\mathbb{G}}$, also $\pi_{\mathbb{G}}$ has a density with respect to Lebesgue, which we denote as $f_{\pi_{\mathbb{G}}}$.

For any $\omega^{t} \in \Omega^{t}$ and $ t\leq T(\omega^{\infty})$,
\begin{align*}
  V_{1}^{\ast}(g_{t},B_{t}(\omega^{t-1})) = & \max_{(n,B') \in \Gamma(g_{t},B_{t}(\omega^{t-1}),1)}  n - g + H(1-n) + \beta \int_{\{g' : g' \leq \bar{g}(B')\}} \{ V_{1}^{\ast}(g',B') - V_{0}^{\ast}(g')\} \pi_{\mathbb{G}}(dg') \\
 & + \beta \int V_{0}^{\ast}(g') \pi_{\mathbb{G}}(dg')
\end{align*}
and let $\nu_{t}(\omega^{t})$ is the Lagrange multiplier of the restriction, $z(1,n,g_{t}) + \mathcal{P}^{\ast}_{1}(B') B' - B_{t}(\omega^{t-1}) \geq 0$. By assumption, the solution of $B'$ is in the interior. So the optimal choice $((n_{t}(\omega^{t}))^{\infty}_{t=0},(B_{t+1}(\omega^{t}))^{\infty}_{t=0})$ satisfy
\begin{align*}
  1 - H'(1 - n_{t}(\omega^{t})) + \nu_{t}(\omega^{t}) \left( \frac{dz(1,n_{t}(\omega^{t}),g_{t})}{dn}  \right) = 0
\end{align*}
or equivalently
\begin{align}\label{eqn:nu-labor}
  \nu_{t}(\omega^{t}) \equiv \boldsymbol{\nu}(n_{t}(\omega^{t})) = - \frac{1 - H'(1 - n_{t}(\omega^{t}))}{1 - H'(1 - n_{t}(\omega^{t})) + H''(1 - n_{t}(\omega^{t}))n_{t}(\omega^{t})},
\end{align}
and
\begin{align*}
&  \nu_{t}(\omega^{t}) \left\{  \mathcal{P}^{\ast}_{1}(B_{t+1}(\omega^{t})) + \frac{d \mathcal{P}^{\ast}_{1}(B_{t+1}(\omega^{t}))}{dB_{t+1}} B_{t+1}(\omega^{t}) \right\} \\
= & \beta \frac{d \int_{\{g' : g' \leq \bar{g}(B_{t+1}(\omega^{t}))\}} \{ V_{1}^{\ast}(g',B_{t+1}(\omega^{t})) - V_{0}^{\ast}(g')\} \pi_{\mathbb{G}}(dg')}{dB_{t+1}}\\
= & \beta \int_{\{g' : g' \leq \bar{g}(B_{t+1}(\omega^{t}))\}} \frac{d V_{1}^{\ast}(g',B_{t+1}(\omega^{t}))}{dB_{t+1}} \pi_{\mathbb{G}}(dg') \\
& + \beta  \{ V_{1}^{\ast}(\bar{g}(B_{t+1}(\omega^{t})),B_{t+1}(\omega^{t})) - V_{0}^{\ast}(\bar{g}(B_{t+1}(\omega^{t})))\} f_{\pi_{\mathbb{G}}}(\bar{g}(B_{t+1}(\omega^{t}))) \frac{d\bar{g}(B_{t+1}(\omega^{t}))}{dB_{t+1} }.
\end{align*}
Since $V_{1}^{\ast}(\bar{g}(B_{t+1}(\omega^{t})),B_{t+1}(\omega^{t})) - V_{0}^{\ast}(\bar{g}(B_{t+1}(\omega^{t}))) = 0$, the last term in the RHS is naught. Also, $\frac{d V_{1}^{\ast}(g_{t},B_{t}(\omega^{t-1}))}{dB_{t}} = \nu_{t}(\omega^{t}) $ and thus
\begin{align}
  \nu_{t}(\omega^{t}) \left\{  \mathcal{P}^{\ast}_{1}(B_{t+1}(\omega^{t})) + \frac{d \mathcal{P}^{\ast}_{1}(B_{t+1}(\omega^{t}))}{dB_{t+1}} B_{t+1}(\omega^{t}) \right\} = \beta \int_{\{g' : g' \leq \bar{g}(B_{t+1}(\omega^{t}))\}} \nu_{t+1}(\omega^{t},g')  \pi_{\mathbb{G}}(dg')
\end{align}
We now show that $\boldsymbol{\nu}$ is decreasing. For this it is easier to first establish that $\boldsymbol{\nu}^{-} \equiv 1/\boldsymbol{\nu}$ is increasing. Observe that
\begin{align*}
 \boldsymbol{\nu}^{-}(n) = - 1 - \frac{H''(1 - n)n}{1 - H'(1 - n)}
\end{align*}
and thus
\begin{align*}
  \frac{d\boldsymbol{\nu}^{-}(n)}{dn} = - \frac{-H'''(1 - n)n + H''(1 - n)}{1 - H'(1 - n)} - \frac{(H''(1 - n))^{2}n}{(1 - H'(1 - n))^{2}}.
\end{align*}
Since $-H'''(1 - n)n + H''(1 - n)<0$ by assumption and $1 - H'(1 - n) = \tau > 0$, then the first term in the RHS is negative; the second term in the RHS is also negative. Hence $\boldsymbol{\nu}^{-}$ is increasing, which readily implies that $\boldsymbol{\nu}$ is decreasing.

\newpage

%============================================================================================================================
\newpage
\setcounter{page}{1}

\begin{center}
  \Huge{Supplementary Online Material}
\end{center}

\section{Stylized Facts: Emerging vs. Industrialized Economies}
\label{sec:facts}

Throughout the paper, we mention that our theoretical model is capable of replicating qualitatively several stylized facts observed for a wide range of economies. In this section, we present these stylized facts regarding the domestic
government debt-to-output ratio and central government revenue-to-output ratio of several countries: industrialized
economies (IND), emerging economies (EME) and a subset of these: Latin American (LAC).\footnote{For government
revenue-to-output ratios, we used the data from
\cite{KRV_WP04}, and for the domestic
government debt-to-output ratios, we used the data from
\cite{PANIZZA_WP08}. We thank Ugo Panizza and Carmen Reinhart for kindly sharing their datasets with us. See appendix \ref{app:econometrics} for a detailed description of the data.}

In the dataset set which covers the period 1800-2010, no default event is observed for IND, whereas EME/LAC
(LAC in particular) do exhibit several defaults. Thus, we take the former group as a proxy for
economies with access to risk-free debt and the latter group as a proxy for economies without commitment to repay. It is worth to point out however that we are not presuming that IND economies are a type of economy that would never
never default. In turn, we are just using the fact that in our dataset IND economies do not show default events, to use them as a proxy for the type of economy modeled in AMSS, that is, one with risk-free government debt. %There is still the question of which characteristics of an economy will prompt it to behave like IND or EME/LAC economy. One possible explanation is that for IND default could be more costly due to a higher degree of financial integration, affecting more severely the balance sheets of financial intermediaries and, thus, the financing conditions of firms, leading in turn to a sharper drop in productivity for the overall economy.\footnote{In a general equilibrium setup, \cite{MY_WP08} endogenize the output loss during sovereign defaults as an outcome resulting from the substitution of imported inputs by less-efficient domestic ones when the financing costs of the former rises due to sovereign default risk.}  As we will see below, in line with the sovereign default literature, this effect is captured in the model in a ``reduced form'' by a parameter $\kappa$ that determines the productivity level during financial autarky. % We delve more into this question in the context of the model in section \ref{sec:num-simul}.

Several stylized facts that stand out in our dataset. First, in EME/LAC economies default is more likely than in IND economies and within the former group, the default risk is much higher for highly indebted economies. Second, EME and LAC economies exhibit tighter debt ceilings than IND, as also reported by \cite{RR_WP03}. Third, economies with higher default risk tend to exhibit more volatile tax revenues than those with low default risk, and this fact is particularly notable for the group of EME/LAC economies. \cite{Bauducco} documents a similar finding.

As shown in section \ref{sec:bench}, our theory predicts that endogenous borrowing limits are more active for a high level of indebtedness. That is, when the government debt is high relative to output, the probability of default next period is higher, thus implying tighter borrowing limits and higher bond spreads. As the government's ability to smooth its needs for funds using debt is hindered, the volatility of taxes turns out to be higher. But when debt is low, default is an unlikely event, thereby implying slacker borrowing limits, lower spreads and therefore lower volatility in taxes. Hence,
implications in the upper tail of the domestic debt-to-output ratio distribution can be different from those in the ``central
part'' of it. Therefore, the mean and even the variance of the distribution may not be too informative, as they are affected by the central part of the distribution. Quantiles are better suited for recovering the information in the tails of the
distribution.\footnote{We refer the reader to \cite{KOENKER_book95} for a thorough treatment of quantiles and
quantile-based econometric models.}

 Figure \ref{fig:debt_risk_041913scatter} plots the percentiles of the domestic government debt-to-output ratio and of a measure of default risk for three groups: IND (black triangle), EME (blue square) and LAC (red circle).\footnote{This type of graph is not the conventional QQplot as the axis have the value of the random variable which achieves a certain quantile and not the quantile itself. For our purposes, this representation is more convenient.}
The X-axis plots the time series averages of domestic government debt-to-output ratio, and the Y-axis plots the values
of the measure  of default risk.\footnote{The measure of default risk is constructed as the spread using the EMBI+ real index
 from J.P. Morgan for countries for which it is  available and using the 3-7 year real government bond yield for
 the rest, minus the return of a US Treasury bond of similar maturity. Although bond returns are not entirely driven by default risk but also respond to other factors related to risk appetite, uncertainty and liquidity, for our purpose they constitute a valid conventional proxy of default risk. Furthermore, our spreads are an imperfect measure of default risk for domestic debt since EMBI+ considers mainly foreign debt. However, it is still informative since domestic default are positively correlated with defaults on sovereign debt, at least for the period from 1950's onwards. See figure 10 in \cite{RR_WP08}. } For each group, the last point on the right
corresponds to the 95 percentile, the second to last to the 90 percentile and so on; these are comparable between groups as all of them represent a percentile of the corresponding distribution. EME and LAC have lower domestic debt-to-output ratio levels than IND; in fact the domestic debt-to-output ratio value of around 50 percent that pertains to the 95 percentile for EME and LAC, corresponds roughly to only the 85 percentile for IND. \emph{Thus, economies that are prone to default (EME and LAC) exhibit tighter debt ceilings than economies that do not default (in this dataset, represented by IND).}

Figure \ref{fig:debt_risk_041913scatter} also shows that for the IND group, the default risk measure is low and roughly constant for different levels of debt-to-output ratios. On the other hand, the default risk measure for the EME group is not only higher, but increases substantially for high levels of indebtness. We consider this as evidence that for EME economies higher default risk is more prevalent for high levels of debt-to-output ratios.

\begin{figure}%[h]
  \centering
 \includegraphics[height=3.25in,width=5.5in]{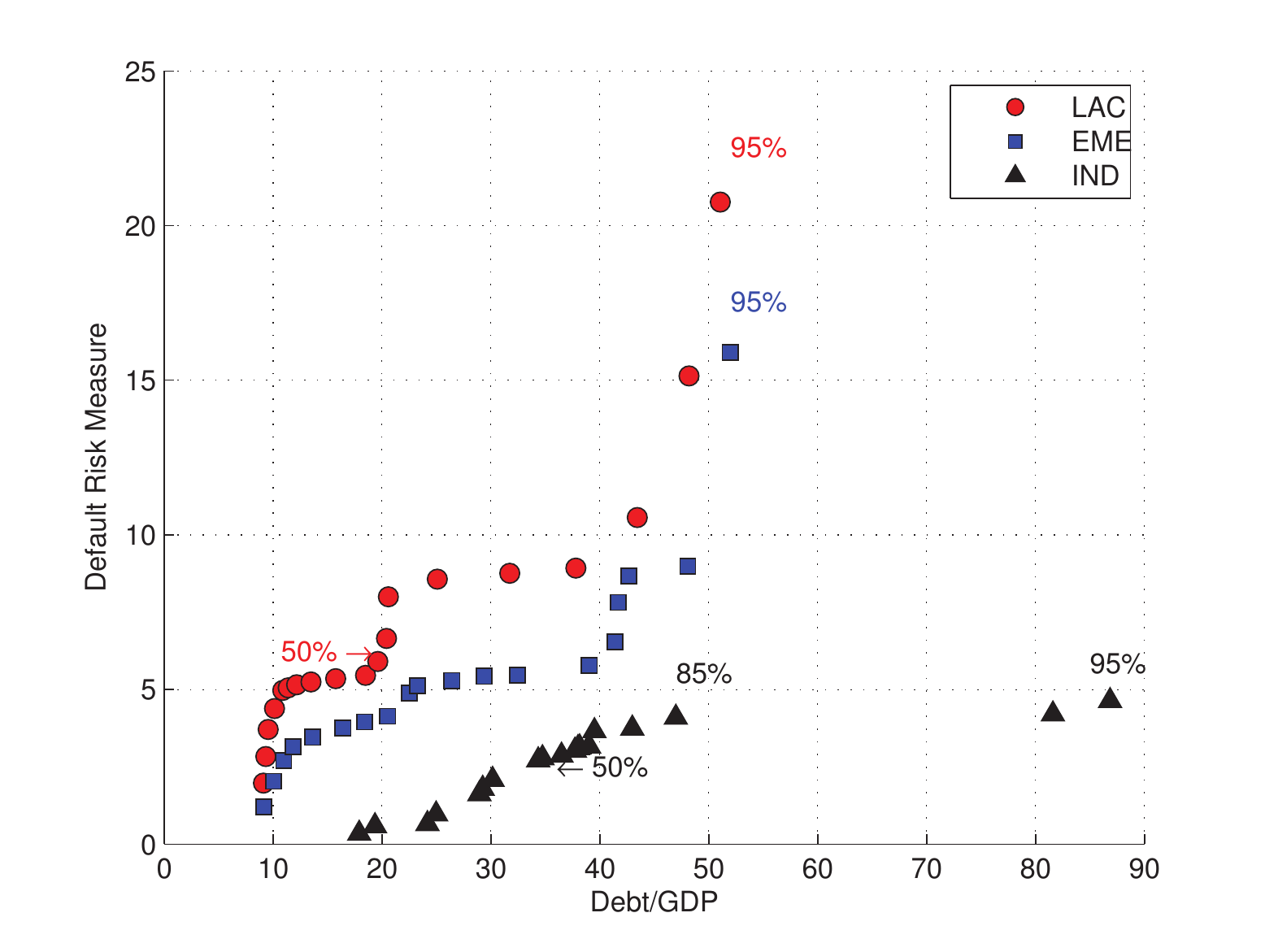}
  \caption{The percentiles of the domestic government debt-to-output ratio and of a measure of default risk for three groups: IND (black triangle), EME (blue square) and LAC (red circle)}
  \label{fig:debt_risk_041913scatter}
\end{figure}

\begin{table}[h]\label{tab:spread}\label{tab:vol_rev}
  \centering
 \caption{(A) Measure of default risk (\%) for EME and IND groups for
    different levels of debt-to-output ratio (\%); (B) standard deviation of central government revenue over GDP (\%) for EME and IND groups for different levels of default risk.}
  \begin{tabular}{ccc|c|ccc}
    \multicolumn{3}{c|}{(A)} & & \multicolumn{3}{c}{(B)} \\ \hline\hline
  Debt/GDP   & EME & IND & & Default Risk & EME & IND \\ \hline
$ 25$ & 5.4  & 2.0  & & $ 25$ & 0.9 & 1.4 \\
$ 75 $ & 10.7  & 2.9 & & $ 75$ & 2.5 & 1.7  \\ \hline\hline
  \end{tabular}
\end{table}

Table \ref{tab:spread}(A) compares the measure of default risk between IND and EME for low and high debt-to-output ratio levels. That is, for both groups (IND and EME) we select economies with debt-to-output ratio below the 25th percentile (low debt-to-output) for which we compute the average risk measure. We proceed analogously with those economies with debt-to-output ratio above the 75th percentile (high debt-to-output). For the case of low debt-to-ouput levels, the EME group presents higher (approximately twice as high) default risk than the IND group. For high debt-to-output ratio economies, however, this difference is quadrupled. \emph{Thus, economies that are prone to default (EME and LAC) exhibit higher default risk than economies that do not default (in this dataset, represented by IND), and, moreover, the default risk is much higher for economies in the former group that have high levels of debt-to-output ratio.}

Table \ref{tab:vol_rev}(B) compares the standard deviation
of the central government revenue-to-output ratio between IND and EME for low and high default risk levels. It indicates that for IND there is little variation of the volatility across low and high levels of default risk. For EME, however, the standard deviation of the central government revenue-to-output ratio is dramatically higher for economies with high default risk.\footnote{We looked also at the inflation tax as a proxy for tax policy; results are qualitatively the same.}. It is worth pointing out that all the EME with high default risk levels defaulted at least once during our sample period. \emph{Thus, economies with higher default risk exhibit more volatile tax revenues than economies with low default risk.} This is particularly notable for the group of EME/LAC economies.

These stylized facts establish a link between (a) default risk/default events, (b) debt ceilings and (c) volatility of tax revenues. In particular, the evidence suggests that economies that show higher default risk, also exhibit lower debt ceilings and more volatile tax revenues.  The theory behind our model helps shed light upon the forces driving these facts.\footnote{It is important to note that we are not arguing any type of causality; we are just illustrating co-movements. In fact, in the model below all three features are endogenous outcomes of equilibrium.}

\section{Description of the Data}
\label{app:econometrics}

In this section we describe how we constructed the figures presented
in section \ref{sec:facts}.

The industrialized economies group consists
of AUSTRALIA (1990-1999), AUSTRIA  (1990-1999), BELGIUM
(1990-2001), CANADA (1990-2003), DENMARK (1990-2003), FINLAND
(1994-1998), FRANCE (1990-2003), GERMANY (1990-1998), GREECE (1990-2001),
IRELAND (1995-2003), ITALY (1990-2003), JAPAN (1990-1993),
NETHERLANDS (1990-2001), NEW ZEALAND (1990-2003), NORWAY
(1990-2003), PORTUGAL   (1990-2001), SPAIN (1990-2003), SWEDEN (1990-2003),
SWITZERLAND  (1990-2003), UNITED KINGDOM (1990-2003) and UNITED
STATES (1990-2003).

The emerging economies group consists of ARGENTINA$^{1}$
(1998-2003), BOLIVIA$^{1}$  (2001-2003), BRAZIL$^{1}$ (1997-2003),
CHILE$^{1}$ (1993-2003), COLOMBIA$^{1}$ (1999-2003), ECUADOR$^{1}$
(1998-2003), EL SALVADOR$^{1}$ (2000-2003), HONDURAS$^{1}$
(1990-2003), JAMAICA$^{1}$ (1990-2003), MEXICO$^{1}$ (1990-2003),
PANAMA$^{1}$ (1997-2003), PERU$^{1}$ (1998-2003), VENEZUELA$^{1}$
(1997-2003), ALBANIA (1995-2003), BULGARIA (1991-2003), CYPRUS
(1990-2003), CZECH REPUBLIC (1993-2003), HUNGARY (1991-2003),
LATVIA (1990-2003), POLAND (1990-2003), RUSSIA (1993-2003), TURKEY
(1998-2003), ALGERIA (1990-2003), CHINA (1997-2003), EGYPT
(1993-2003), JORDAN (1990-2003), KOREA (1990-2003), MALAYSIA
(1990-2003), MAURITIUS (1990-2003), MOROCCO (1997-2003), PAKISTAN
(1990-2003), PHILIPPINES (1997-2003), SOUTH AFRICA (1990-2003),
THAILAND (1999-2003) and TUNISIA (1994-2003). The LAC group is
conformed by the countries with ``$^{1}$''.

For section \ref{sec:facts} we constructed the data as follows.
First, for each country, we computed time average, or time standard
deviations or any quantity of interest (in parenthesis is the
number of observations use to construct these). Second, once we
computed these averages, we group the countries in IND, EME and
LAC. We do this procedure for (a) central government domestic debt
(as \% of output) ; (b) central government expenditure (as \% of
output) ; (c) central government revenue (as \% of output) , and
(d) Real Risk Measure. The data for (a) is taken from
\cite{PANIZZA_WP08} ; the data for (b-c) is taken from
\cite{KRV_WP04} ; finally the data for (d) is taken from
www.globalfinancialdata.com.

For Greece and
Portugal we use central government public debt because central
government domestic debt was not available. For Sweden, Ecuador
and Thailand we use general government expenditure because central
government expenditure was not available. For Albania, Bulgaria,
Cyprus, Czech Rep., Hungary, Latvia, Poland and Russia no measure
of government expenditure was available and thus were excluded
from the sample for the calculations of this variable. The same
caveats apply to the central government revenue sample. For Argentina,
Brazil, Colombia, Ecuador, Egypt, Mexico, Morocco, Panama, Peru,
Philippines, Poland, Russia, Turkey and Venezuela we used the real
EMBI+ as a measure of real risk. For the rest of the countries we
used government note yields of 1-5 years maturity, depending on
availability.

}

\end{document}